\documentclass[11pt]{article}
\usepackage{mathpazo}
\usepackage[T1]{fontenc}
\usepackage[latin9]{inputenc}
\usepackage{geometry}
\geometry{verbose,tmargin=1in,bmargin=1in,lmargin=1in,rmargin=1in}
\synctex=-1
\usepackage{babel}
\usepackage{float}
\usepackage{url}
\usepackage{amsmath}
\usepackage{amsthm}
\usepackage{amssymb}
\usepackage{undertilde}
\usepackage{xspace}
\usepackage{microtype}
\usepackage{enumitem}
\usepackage{booktabs}

\usepackage[unicode=true,pdfusetitle, bookmarks=true,bookmarksnumbered=false,bookmarksopen=false,breaklinks=true,pdfborder={0 0 1},backref=false,colorlinks=true, citecolor=blue] {hyperref}

\makeatletter

\usepackage{bm}
\usepackage{extarrows}
\usepackage{mathtools}
\usepackage{bbm}
\usepackage{commath}
\usepackage{algorithm}
\usepackage{subcaption}
\usepackage{algorithmic}
\DeclarePairedDelimiter\ceil{\lceil}{\rceil}

\newtheorem{theorem}{Theorem}

\newtheorem{lemma}{Lemma}
\newtheorem{assumption}{Assumption}

\usepackage{appendix,breakurl}
\PassOptionsToPackage{hyphens}{url}

\makeatother

\begin{document}

\title{RL-QN: A Reinforcement Learning Framework for Optimal Control of Queueing Systems}

\author{Bai Liu $^\dagger$, Qiaomin Xie$^\ddagger$, Eytan Modiano $^\dagger$ \\
\normalsize $^\dagger$ Massachusetts Institute of Technology\\
\normalsize  $^\ddagger$ University of Wisconsin-Madison
}

\date{}                     

\maketitle

\begin{abstract}
  With the rapid advance of information technology, network systems have become increasingly complex and hence the underlying system dynamics are often unknown or difficult to characterize. Finding a good network control policy is of significant importance to achieve desirable network performance (e.g., high throughput or low delay). In this work, we consider using model-based reinforcement learning (RL) to learn the optimal control policy for queueing networks so that the average job delay (or equivalently the average queue backlog) is minimized. Traditional approaches in RL, however, cannot handle the unbounded state spaces  of the network control problem. To overcome this difficulty, we propose a new algorithm, called Reinforcement Learning for Queueing Networks (RL-QN), which applies model-based RL methods over a finite subset of the state space, while applying a known stabilizing policy for the rest of the states. We establish that the average queue backlog under RL-QN with an appropriately constructed subset can be arbitrarily close to the optimal result. We evaluate RL-QN in dynamic server allocation, routing and switching problems. Simulation results show that RL-QN minimizes the average queue backlog effectively.
\end{abstract}


\section{Introduction} \label{Sec:Intro}

The rapid growth of information technology has resulted in increasingly complex network systems and poses challenges in obtaining explicit knowledge of system dynamics. For instance, due to security or economic concerns, a number of network systems are built as overlay networks, e.g. caching overlays, routing overlays and security overlays \cite{sitaraman2014overlay}. In these cases, only the overlay part is fully controllable by the network administrator, while the underlay part remains uncontrollable and/or unobservable. The ``black box" components make network control policy design challenging.

In addition to the challenges brought by unknown system dynamics, many of the current network control algorithms (e.g., MaxWeight~\cite{tassiulas1990stability} and Drift-plus-Penalty~\cite{neely2008fairness}) aim at stabilizing the system. However, existing optimization methods for long-term performance metrics (e.g. queue backlog and delay) only work for specific structures \cite{buyukkoc1985cmu, maguluri2016heavy}, while the development for general setting remains insufficient.

To overcome the above challenges, it is desirable to apply inference and learning schemes. A natural approach is reinforcement learning, which learns and reinforces a good decision policy by interacting with the environment and receiving feedbacks.
Reinforcement learning methods provide a framework for the design of learning policies for general networks. There have been two main lines of work on reinforcement learning methods: model-free reinforcement learning (e.g., Q-learning \cite{watkins1992q}, policy gradient \cite{sutton2000policy}) and model-based reinforcement learning (e.g., UCRL \cite{jaksch2010near}, PSRL \cite{osband2013more}). In this work, we focus on the model-based approach. 



\subsection{Related Work} \label{Sec:RelatedWork}

Existing methods for reducing the average job delay (or equivalently reducing the average queue backlog) of general networks can be classified into three types: equivalent constraint, Lyapunov drift and Markov decision process (MDP) \cite{cui2012survey}. However, most equivalent constraint approaches can only be applied to single-hop networks \cite{cui2012survey}. Therefore, we focus on discussing the latter two classes of algorithms, as they are more universal and can be applied to multi-hop network systems.

We first discuss the Lyapunov drift approach, which generally does not require learning the dynamics for network problems with hidden dynamics. The Lyapunov drift approach has been widely applied in numerous network control problems. For instance, in dynamic server allocation problems, a Lyapunov-drift-based algorithm named the longest connected queue (LCQ) can stabilize the queue backlog without knowing the underlying dynamics (e.g., arrival rates, channel statistics) \cite{tassiulas1993dynamic,ganti2007optimal}. For multiclass routing networks, the Backpressure algorithm was designed under the Lyapunov drift framework and only requires the observation of queue backlogs \cite{tassiulas1990stability,awerbuch1993simple,neely2003dynamic}. Similarly, Lyapunov drift methods that do not require learning the network dynamics have been proposed for switch scheduling \cite{mekkittikul1998practical,keslassy2001analysis,shah2012switched} and inventory control problems \cite{neely2010dynamic, dai2005maximum} etc. Most Lyapunov drift algorithms are shown to be throughput optimal \cite{tassiulas1990stability}, i.e., they can stabilize the system whenever the system is stabilizable. Also, the Lyapunov drift approach usually only requires solving a linear programming problem and does not suffer from the ``curse of dimensionality'', which makes it applicable to large-scale queueing systems. However, without learning the system dynamics, the Lyapunov drift approach generally cannot guarantee minimum queueing delay. For instance, for dynamic server allocation problems, the optimal policy has only been developed for highly symmetric systems (i.e., uniform external arrival rates, uniform connectivity probabilities and uniform success rates) \cite{ganti2007optimal}. Another well-known example is switching system, for which the Maximum Matching policy has been shown to be close to the optimal policy when the system is in the heavy traffic regime~\cite{maguluri2016heavy}. In Section 5.1, 5.2 and 5.4, we conduct numerical experiments and show that our approach significantly outperforms the Lyapunov drift methods.

The MDP approach models the queueing system control problem as an MDP that aims to minimize the long-term average queue backlog. Classical algorithms to solve MDP problems include value iteration and policy iteration \cite{bertsekas2017dynamic}. Note that although the MDP approach can minimize the average job delay, it can only be applied to networks with finite buffer size, which is unrealistic for many practical network models. Therefore, the MDP approach fails to minimize the average job delay of general stochastic networks and is typically applied to MDPs of small scale \cite{yeh2001multiaccess,yeh2003throughput} or special structures \cite{guestrin2003efficient, qu2020scalable}. However, traditional MDP approach requires explicit knowledge of the network dynamics. Since the network problems studied in this work have hidden dynamics, it is necessary to learn the dynamics when applying the MDP approach. A related field is reinforcement learning. The majority of reinforcement learning algorithms apply approximations to reduce the computation complexity, including state representation \cite{de2018integrating,maillard2013selecting}, value function approximation \cite{baird1995residual,jin2020provably} and pruning \cite{livne2020pops}. Reinforcement learning methods have been applied to a wide range of network control problems, like routing \cite{boyan1994packet,peshkin2002reinforcement,chen2018deep,yu2018drom}, spectrum access \cite{naparstek2017deep,wang2018deep,guan2020robust}, switch scheduling \cite{brown2001switch}, state probing \cite{brun2016big} etc. Although these reinforcement learning methods achieve satisfactory performance in simulation, they lack performance guarantees.

We consider model-based reinforcement learning methods, which tend to be more analytically tractable. Conventional model-based reinforcement learning methods like UCRL \cite{jaksch2010near} and PSRL \cite{osband2013more} only work for finite-state-space systems, yet queueing systems are usually modeled to have unbounded buffer sizes. The work in \cite{liang2019optimal} assumes that the system has an admission control scheme to keep queue backlogs finite. The resulting system can be model as a MDP with finite states, where UCRL can be applied. In \cite{theocharous2017posterior}, the authors modify the PSRL algorithm to deal with MDPs with large state space, yet the algorithm requires the MDP to have a finite bias span, which is unrealistic for problems that aim to minimize the average cost with a countably infinite state space.

In summary, existing methods either cannot guarantee optimal delay (the Lyapunov drift approach), or lack performance guarantees (heuristic reinforcement learning) or can only deal with finite state spaces (the MDP approach and theoretical reinforcement learning). In this paper, we aim to develop an algorithm that achieves provable optimality for networks with countably infinite state spaces. To the best of our knowledge, our algorithm is the first to achieve minimum average queue backlog in general networks with unbounded buffers. Our analysis also offers a possible roadmap to solving general MDPs with countably infinite state spaces.


\subsection{Our Contributions} \label{Sec:Contribution}

We apply model-based reinforcement learning to queueing networks with unbounded state spaces and unknown dynamics. Our approach leverages the fact that for a vast class of stable queueing systems, the probability of the queue backlog being large is relatively small. This observation motivates us to focus on learning control policies over a finite subset of states that the system visits with high probability. Our main contributions are summarized as follows. 

\begin{itemize}
    \item We propose a model-based RL algorithm that can deal with unbounded state spaces. In particular, we introduce an auxiliary system with the state space bounded by a threshold $U$. Our approach employs a piecewise policy: for states below the threshold, a model-based RL algorithm is used; for all other states, a simple baseline algorithm is applied.
    
    \item We establish that the episodic average queue backlog under the proposed algorithm can be arbitrarily close to the optimum with a large threshold $U$. In particular, by applying Lyapunov analysis, we characterize the gap to the optimal performance as a function of the threshold $U$. In addition, our proof technique may be of independent interest for analyzing the convergence of other reinforcement learning algorithms for queueing networks.
    
    \item Simulation results on dynamic server allocation, routing, and network switching problems corroborate the validity of our theoretical guarantees. In particular, the proposed algorithm effectively achieves a small average queue backlog, with the gap to optimum diminishing as the threshold $U$ increases.
\end{itemize}

The paper is organized as follows. In Section \ref{Sec:Prob} we formulate the problem and introduce notations. Section \ref{Sec:Approach} gives an outline of our approach, presents required assumptions and the proposed algorithm. We conduct theoretical performance analysis regarding convergence and optimality of the proposed algorithm in Section \ref{Sec:Theory}. We evaluate the algorithm in various settings and the numerical results are given in Section \ref{Sec:Sim}.


\section{System Model} \label{Sec:Prob}

We consider a discrete time network with the general topology of a directed graph $\mathcal{G} = (\mathcal{N}, \mathcal{L})$, where $\mathcal{N}$ is the set of nodes and $\mathcal{L}$ is the set of links. Each node maintains one or more queues for undelivered packets, and each queue has an unbounded buffer. We denote by $D$ the number of queues. Nodes represent the locations where data packets are generated, relayed and/or processed in the network. Links represent the communication links.

During each time slot, external data packets arrive at nodes where they are either processed and then depart the system or are relayed. For relayed packets, the communication links can be stochastic, i.e. data transmissions between nodes can fail. We assume the underlying dynamics are partially or fully unknown. This model captures a large class of queueing networks that involve routing, scheduling and switching.

We suppose that the system has the Markovian property: the probability distributions of the queue backlogs in the next time slot only depend on the current queue backlogs and the control decision. In other words, the system can be modeled as a countable-state-space MDP $\mathcal{M}$ with average cost as follows. 

\begin{itemize}
\item State space $\mathcal{S}$: the set of queue backlog vectors, i.e., $\mathcal{S} \triangleq \mathbb{N}^D$.
\item Action space $\mathcal{A}$: the set of feasible control decisions, which are specified by the problem setting. In this paper we only consider finite action space.
\item State-transition probability $p \left( \bm{Q}' \mid \bm{Q}, a \right)$: the probability of transitioning into state $\bm{Q}'$ from state $\bm{Q}$ with action $a$. For simplicity, we assume that the magnitude that a queue backlog can change during one time slot is upper bounded by a constant $W$. Let $\mathcal{R} \left( \bm{Q} \right)$ denote the set of one-step reachable queue backlog vectors from state $\bm{Q}$.
\item Cost function $c \left( \bm{Q} \right)$: the total backlog of $D$ queues, i.e., $c \left( \bm{Q} \right) \triangleq \sum_{i=1}^{D} Q_i$.
\end{itemize}

As discussed in Section \ref{Sec:RelatedWork}, a large number of queueing networks have been studied extensively and various stabilizing policies that aim to keep the average total queue backlog finite have been developed. Here we take a step further to go beyond stability and our goal is to minimize the average queue backlog.

For readers' convenience, we summarize the notations used in this paper in Table \ref{Tab:Notations}.

\begin{table}[htbp]\caption{Notations}
\begin{center}
\begin{tabular}{r p{12cm} }
\toprule
$D$ & The number of queues in the queueing network \\
$\bm{Q}$ & The $D$-dimensional queue backlog vector of the queueing network \\
$Q_i$ & The queue backlog of node $i$ \\
$Q_{\max}$ & The maximum entry of $\bm{Q}$ \\
$\mathcal{R} \left( \bm{Q} \right)$ & The set of one-step reachable queue backlog vectors from $\bm{Q}$ \\
$W$ & The magnitude that a queue backlog can change during one time slot \\
$U$ & The buffer size of the auxiliary system \\
$\mathcal{M}$ & The countable-state-space MDP of the original queueing network \\
$\tilde{\mathcal{M}}$ & The MDP of the auxiliary system \\
$\mathcal{S}$ & The state space of $\mathcal{M}$ \\
$\tilde{\mathcal{S}}$ & The state space of $\tilde{\mathcal{M}}$ \\
$\mathcal{A}$ & The action space of $\mathcal{M}$ \\
$p \left( \cdot \mid \cdot, \cdot \right)$ & The state-transition kernel of $\mathcal{M}$  \\
$\tilde{p} \left( \cdot \mid \cdot, \cdot \right)$ & The state-transition kernel of $\tilde{\mathcal{M}}$  \\
$\pi_0$ & The known stabilizing policy of the queueing network \\
$\pi_{\mathrm{rand}}$ & The random policy that selects an action in $\mathcal{A}$ uniformly \\
$\pi^*$ & The stationary policy that minimizes the expected average total queue backlog of $\mathcal{M}$ \\
$\tilde{\pi}^*$ & The stationary policy that minimizes the expected average total queue backlog of $\tilde{\mathcal{M}}$ \\
$\rho^*$ & The expected average queue backlog of $\mathcal{M}$ when applying $\pi^*$ \\
$\tilde{\rho}^*$ & The expected average queue backlog of $\tilde{\mathcal{M}}$ when applying $\tilde{\pi}^*$ \\
$\tilde{\Phi}^* (\cdot)$ & The Lyapunov function with negative drift regarding $\tilde{\pi}^*$ \\
$\beta$ & The order of $\tilde{\Phi}^* (\cdot)$ \\
$\mathcal{S}^{\mathrm{in}}$ & The set of $\bm{Q}$'s with $\tilde{\Phi}^* \left( \bm{Q} \right) \leqslant (U-W)^\beta$ \\
$\mathcal{S}^{\mathrm{out}}$ & $\mathcal{S} \setminus \mathcal{S}^{\mathrm{in}}$ \\
$p^{\pi + \pi'} (\cdot)$ & The stationary probability distribution in $\mathcal{M}$ with $\pi$ applied to $\mathcal{S}^{\mathrm{in}}$ and $\pi'$ applied to $\mathcal{S}^{\mathrm{out}}$ \\
$\mathbb{E}_{\pi} [\cdot]$ & The expection of a random variable in $\mathcal{M}$ with $\pi$ applied to $\mathcal{S}$ \\
$\mathbb{E}_{\pi + \pi'} [\cdot]$ & The expection of a random variable in $\mathcal{M}$ with $\pi$ applied to $\mathcal{S}^{\mathrm{in}}$ and $\pi'$ applied to $\mathcal{S}^{\mathrm{out}}$ \\
$\tilde{\mathbb{E}}_{\tilde{\pi}} [\cdot]$ & The expection of a random variable in $\tilde{\mathcal{M}}$ with $\tilde{\pi}$ applied to $\tilde{\mathcal{S}}$ \\

\bottomrule
\end{tabular}
\end{center}
\label{Tab:Notations}
\end{table}


\section{Our Approach} \label{Sec:Approach}

Classical model-based reinforcement learning fits a model of state transition kernel to observed data and then solves the dynamic programming problem on the estimated system. The challenge of applying such an approach to countable-state MDPs arises from both of estimating model parameters and solving the estimated MDP, due to the fact that the state space is unbounded. 


Here we introduce an auxiliary system $\tilde{\mathcal{M}}$ with a bounded state space. We only apply reinforcement learning techniques on the constructed bounded state space, while simply apply a \emph{known} stabilizing policy $\pi_0$ (cf. Assumption~\ref{Asp:KnownPolicy}) to the rest of the states. We show that the performance gap between the proposed algorithm and the optimal policy can be made arbitrarily small by designing appropriate $\tilde{\mathcal{M}}$.



\subsection{Overview} \label{Sec:Outline}

We first provide an overview of our approach and defer a detailed description of the algorithm to Section~\ref{Sec:Algorithm}. Our reinforcement learning method operates in an episodic manner. 

We apply a decaying $\epsilon$-greedy method to decide whether an episode should conduct exploration or exploitation. For each episode, we perform exploration (i.e., apply some random policy to collect diverse samples) with probability $\epsilon_i$, and we conduct exploitation (i.e., using current estimates of dynamics to compute an estimated optimal policy) with probability $1 - \epsilon_i$. At the beginning of the learning process, we tend to explore the system to collect samples, and thus $\epsilon_i$ is relatively large when $i$ is small. As the learning process goes on, we gradually obtain enough samples and are close to the optimal policy, and thus $\epsilon_i$ is decreased to exploit the learned policy. The scheme has been applied in network control \cite{alaya2008dynamic,sodagari2011anti,guan2020robust} to achieve a trade-off between exploration and exploitation.

For an exploration episode, we apply a randomized policy $\pi_{\mathrm{rand}}$ that takes action uniformly to obtain samples for the estimation of state-transition probabilities in $\mathcal{M}$. However, since $\mathcal{M}$ has a countably infinite state space, we instead estimate the model for an auxiliary system $\tilde{\mathcal{M}}$ with a bounded state space. The auxiliary system $\tilde{\mathcal{M}}$ has a threshold $U$: the system has the same dynamics as the real one, with the only difference that each queue has a bounded buffer size $U$. For each queue in $\tilde{\mathcal{M}}$, when its queue backlog reaches $U$, new packets to it will be dropped. Mathematically, the state space of $\tilde{\mathcal{M}}$ is given by $\tilde{\mathcal{S}} \triangleq \{ \bm{Q} \in \mathcal{S} : Q_{\max} \leqslant U \}$ where $Q_{\max} \triangleq \max Q_i$. The auxiliary system $\tilde{\mathcal{M}}$ shares the same action space $\mathcal{A}$ and cost function $c \left( \bm{Q} \right)$ as $\mathcal{M}$. With the introduction of $\tilde{\mathcal{M}}$, an exploration episode operates as in Figure \ref{Fig:Approach_1}: we only apply $\pi_{\mathrm{rand}}$ to states in $\tilde{\mathcal{S}}$, while simply applying the \emph{known} stabilizing policy $\pi_0$ to other states.

For an exploitation episode, we first repartition the state space $\mathcal{S}$ due to technical reasons. We denote by $\tilde{\pi}^*$ the optimal policy for $\tilde{\mathcal{M}}$, and define a Lyanopuv function $\tilde{\Phi}^* (\cdot)$ (cf. Assumption~\ref{Asp:DriftOptimalPolicy}). Denoting by $\beta$ the order of $\tilde{\Phi}^* (\cdot)$, we partition $\mathcal{S}$ in the following manner:
$$
\begin{cases}
\mathcal{S}^{\mathrm{in}} \triangleq \left\{ \bm{Q} : \tilde{\Phi}^* \left( \bm{Q} \right) \leqslant (U-W)^\beta \right\} \\
\mathcal{S}^{\mathrm{out}} \triangleq \mathcal{S} \setminus \mathcal{S}^{\mathrm{in}}
\end{cases}
.
$$
As illustrated in Figure \ref{Fig:Approach_2}, during an exploitation episode, we first compute an estimated $\tilde{\pi}^*$ using the estimated dynamics obtained from exploration episodes. We then apply the estimated $\tilde{\pi}^*$ to states in $\mathcal{S}^{\mathrm{in}}$ and $\pi_0$ to states in $\mathcal{S}^{\mathrm{out}}$ throughout the episode.

\begin{figure}
  \centering
  \begin{subfigure}[b]{0.47\textwidth}
      \centering
      \includegraphics[width=\linewidth]{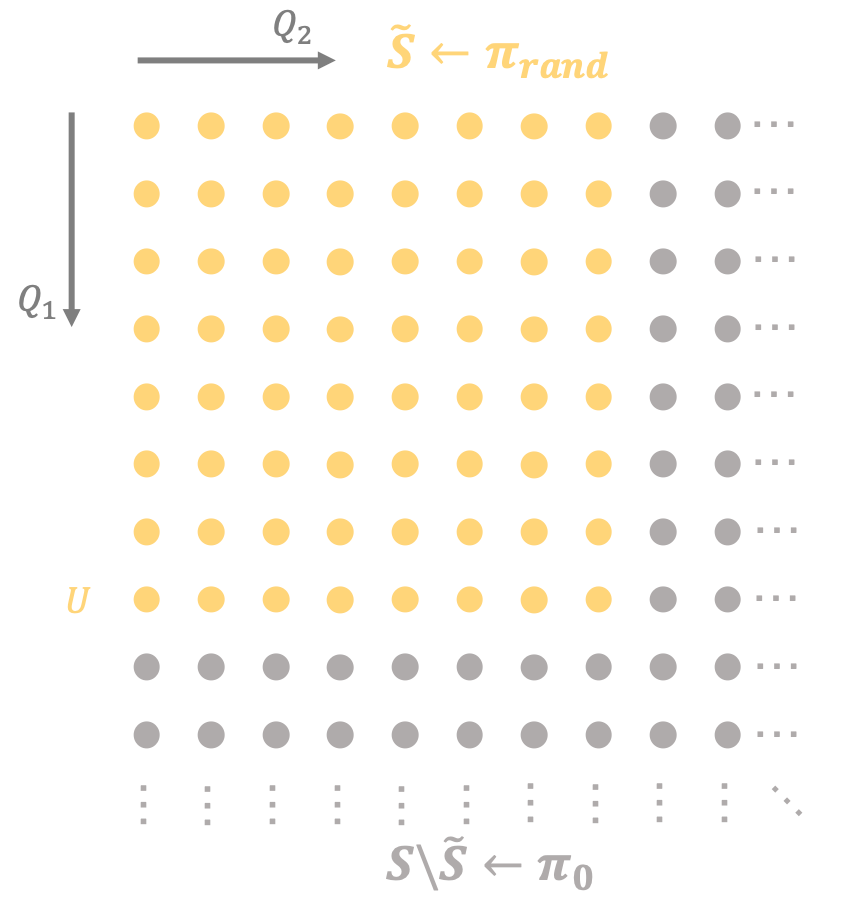}
      \caption{Exploration}
      \label{Fig:Approach_1}
  \end{subfigure}%
  \hfill
  \begin{subfigure}[b]{0.47\textwidth}
      \centering
      \includegraphics[width=\linewidth]{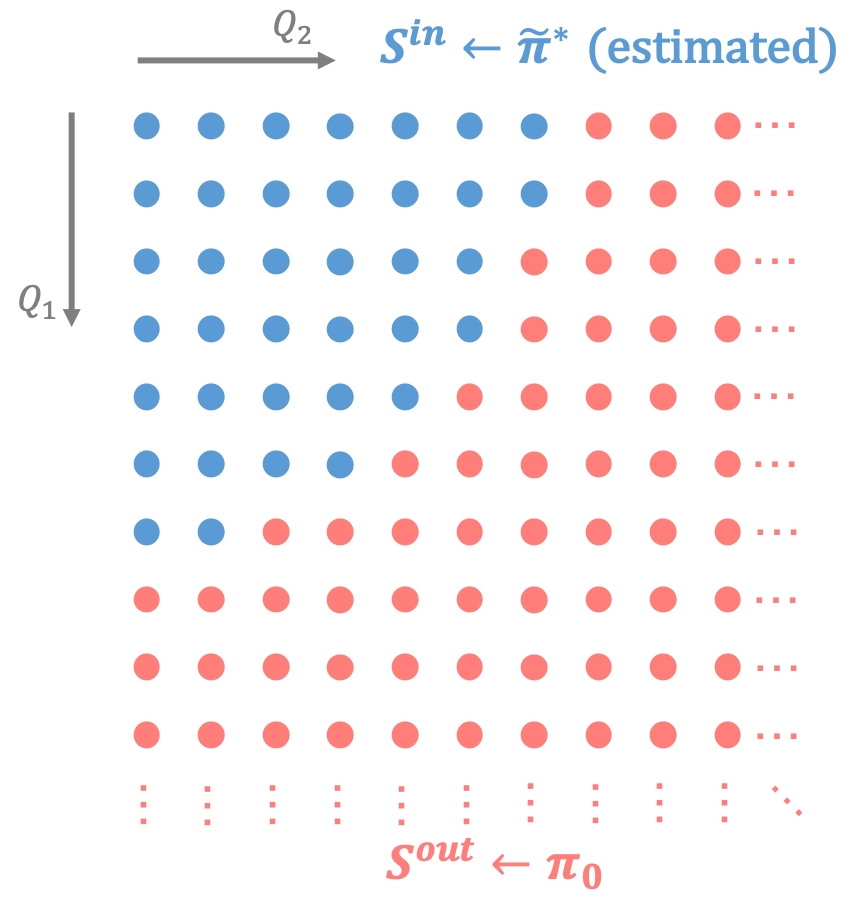}
      \caption{Exploitation}
      \label{Fig:Approach_2}
  \end{subfigure}
  \caption{Schemetic illustration of our approach (when $D=2$).}
  \label{Fig:Approach}
\end{figure}

We will show that the average queue backlog under our algorithm  converges to the optimal average queue backlog $\rho^*$ as we increase $U$. We divide the analysis into two stages: before and after $\tilde{\pi}^*$ is learned. For the first stage, our model-based reinforcement learning approach applies $\epsilon$-greedy exploration. We show that the proposed algorithm gradually obtains $\tilde{\pi}^*$ with high probability. For the second stage, by applying drift analysis on the Markov chain, we show that when $\tilde{\pi}^*$ is applied for states in $\mathcal{S}^{\mathrm{in}}$ and $\pi_0$ for states in $\mathcal{S}^{\mathrm{out}}$, the probability of queue backlog exceeding into $\mathcal{S}^{\mathrm{out}}$ decays exponentially with $U$. In addition, whenever $\bm{Q}$ leaves $\mathcal{S}^{\mathrm{in}}$, the expected accumulated queue backlog before $\bm{Q}$ returns back to $\mathcal{S}^{\mathrm{in}}$ can be upper bounded as a polynomial term in $U$. Together, we show that the gap between our result and the optimal average queue backlog $\rho^*$ is upper bounded by $\mathcal{O} \left( \mathrm{poly}(U) / \exp \left( U \right) \right)$, which diminishes as $U$ increases.


\subsection{Assumptions}


Our algorithm can be applied to a broad class of network problems. To establish rigorous performance guarantee, we need to impose some assumptions on $\mathcal{M}$ and $\tilde{\mathcal{M}}$. We will discuss these assumptions and argue that they are reasonable under many queueing networks.

\medskip

\noindent \textbf{Assumption on $\pi_0$:} As introduced in Section~\ref{Sec:Outline}, we only learn the optimal policy for an auxiliary system with a bounded state space, while for the rest of the states, we apply a \emph{known} stabilizing policy $\pi_0$ to control the queue backlog. We  quantify the stability of $\pi_0$ as follows.
\begin{assumption} \label{Asp:KnownPolicy}
There exists a \emph{known} policy $\pi_0$, a Lyapunov function $\Phi_0: \mathcal{S} \rightarrow \mathbb{R}_{+}$ and constants $a, \alpha, \epsilon_0, B_0 > 0$ such that the following properties hold:
\begin{enumerate}
    \item $\Phi_0 \left( \bm{Q} \right) \leqslant a Q_{\max}^{\alpha}$ for each $\bm{Q} \in \mathcal{S}$ ;
\item $\mathbb{E}_{\pi_0} \left[\Phi_0 \left( \bm{Q}(t+1) \right) - \Phi_0 \left( \bm{Q}(t) \right) \mid \bm{Q}(t) = \bm{Q} \right] \leqslant -\epsilon_0$ for each  $\bm{Q}$ such that $Q_{\max} \geqslant B_0$.
\end{enumerate}

\end{assumption}

The requirement of MDP respecting Lyapunov function under a stable policy is not restrictive. For a stabilizable stochastic network, there exists a policy $\pi_0$ under which the corresponding Markov chain is positive recurrent. Moreover, the property of positive recurrence is known to be equivalent to the existence of the so-called Lyapunov function~\cite{mertens1978necessary}. These observations motivate us to focus on MDPs that satisfy Assumption \ref{Asp:KnownPolicy}. In practice, for stabilizable networks, numerous stable policies have been proposed, including dynamic server allocation problems \cite{baras1985k, buyukkoc1985cmu, tassiulas1993dynamic}, multiclass routing networks \cite{bertsimas1994optimization, humes1997delay, kumar1994performance, kumar1995stability, bertsimas2001performance}, switch scheduling \cite{shah2012switched,keslassy2001analysis,tassiulas1993dynamic} and inventory control \cite{neely2010dynamic, dai2005maximum}, which all satisfy Assumption \ref{Asp:KnownPolicy}.

\medskip

\noindent\textbf{Assumption on $\tilde{\pi}^*$: } From the outline of performance analysis in Section~\ref{Sec:Outline}, we need to show that the probability of queue backlog entering $\mathcal{S}^{\mathrm{out}}$ decays exponentially with respect to $U$. We therefore make a natural assumption on the stability property of $\tilde{\pi}^*$. This assumption is necessary for the proof of Lemma \ref{Lem:pBd}. For clarity, we use $\tilde{\mathbb{E}}_{\tilde{\pi}}[\cdot]$ to denote the expectation with respect to the randomness of the auxilary system $\tilde{\mathcal{M}}$ under the policy $\tilde{\pi}$ (to distinguish from $\mathbb{E}[\cdot]$ for $\mathcal{M}$).
\begin{assumption} \label{Asp:DriftOptimalPolicy}
There exist a Lyapunov function $\tilde{\Phi}^*: \tilde{\mathcal{S}}\rightarrow \mathbb{R}_+$ and constants $\beta \geqslant 1$, $b_1, b_2, \tilde{B}^*, \tilde{\epsilon}^* > 0$, such that for any $U > 0$, the following properties hold:
\begin{enumerate}
\item $Q_{\max}^{\beta} \leqslant \tilde{\Phi}^* \left( \bm{Q} \right) \leqslant b_1 Q_{\max}^{\beta}$ for each $\bm{Q} \in \tilde{\mathcal{S}}$ ;
\item $\max_{\bm{Q}' \in \mathcal{R} \left( \bm{Q} \right)} \abs{\tilde{\Phi}^* \left( \bm{Q}' \right) - \tilde{\Phi}^* \left( \bm{Q} \right)} \leqslant b_2 \left( Q_{\max}^{\beta-1} + Q_{\max}^{'\beta-1} \right)$ for each $\bm{Q} \in \tilde{\mathcal{S}}$ ;
\item $\tilde{\mathbb{E}}_{\tilde{\pi}^*} \left[\tilde{\Phi}^* \left( \bm{Q}(t+1) \right) - \tilde{\Phi}^* \left( \bm{Q}(t) \right) \mid \bm{Q}(t) = \bm{Q} \right] \leqslant -\tilde{\epsilon}^* \cdot Q_{\max}^{\beta-1}$ for each $\bm{Q}\in \tilde{\mathcal{S}}$ such that $Q_{\max} \geqslant \tilde{B}^*$.
\end{enumerate}
\end{assumption}

The assumption is natural in the sense that optimal policies are stable and thus are likely to have good Lyapunov drift properties. For instance, if $\tilde{\Phi}^* \left( \bm{Q} \right) = \sum_{i=1}^D \omega_i Q_i^2$ has a negative drift for each $\bm{Q}$ satisfying $\sum_{i=1}^D Q_i \geqslant B$, one can easily show that $\omega_{min} \cdot Q_{\max}^2 \leqslant \tilde{\Phi}^* \left( \bm{Q} \right) \leqslant(\sum_{i=1}^D \omega_i) \cdot Q_{\max}^2$, $\max_{\bm{Q}' \in \mathcal{R} \left( \bm{Q} \right)} \abs{\tilde{\Phi}^* \left( \bm{Q}' \right) - \tilde{\Phi}^* \left( \bm{Q} \right)} \leqslant W \cdot(\sum_{i=1}^D \omega_i) \cdot (Q_{\max} + Q'_{\max})$, and has a negative drift for each $\bm{Q}$ with $Q_{\max} \geqslant B$. Possible methods to analyze Lyapunov drift properties from queueing stability can be found in \cite{meyn2012markov}. 

\medskip

\noindent\textbf{Assumption on communication properties: } Existing model-based reinforcement learning algorithms with performance guarantees usually require the ``diameter'' (i.e. the upper bound for the shortest first hitting time between any two states) of the MDP to be finite \cite{jaksch2010near}. However, it is unrealistic to assume the MDP diameter to be finite when the state space is unbounded. Instead, we make the following assumption (this assumption is used in the proof of Lemma \ref{Lem:EpiExBacklog}) where $T_{\bm{Q} \to \bm{Q}'}$ denotes the first hitting time from $\bm{Q}$ to $\bm{Q}'$.

\begin{assumption} \label{Asp:PolyTrans}
There exist constants $c, \gamma > 0$, such that for any $U > 0$ and every $\bm{Q}, \bm{Q}' \in \tilde{\mathcal{S}}$,
$$
\min_{\tilde{\pi}} \tilde{\mathbb{E}}_{\tilde{\pi}} \left[ T_{\bm{Q} \to \bm{Q}'} \right] \leqslant c \lVert \bm{Q}' - \bm{Q} \rVert_1^{\gamma} ,
$$
where $\tilde{\pi}$ is a policy that can be applied to $\tilde{S}$.
\end{assumption}

In other words, Assumption~\ref{Asp:PolyTrans} states that there exists a policy $\tilde{\pi}$ such that the first hitting time between two states $\bm{Q}$ and $\bm{Q}'$ in $\tilde{\mathcal{S}}$ is a polynomial function of $\lVert \bm{Q}' - \bm{Q} \rVert_1$. We emphasize that the constant $\gamma$ is independent of the value of $U$. Indeed, with larger $U$'s, $\tilde{\mathbb{E}}_{\tilde{\pi}} \left[ T_{\bm{Q} \to \bm{Q}'} \right]$ might grow larger, since the trajectory from $\bm{Q}$ to $\bm{Q}'$ are more likely to be longer. However, it is possible that the increment is bounded. For instance, certain $M/G/1$ queueing systems \cite{ross1999hitting} and random walk models \cite{khantha1983first} have constant $\gamma$'s even if the state space is unbounded. Directly verifying Assumption~\ref{Asp:PolyTrans} might be computationally difficult. Theorem 11.3.11 from \cite{meyn2012markov} offers a drift analysis approach for justifying Assumption~\ref{Asp:PolyTrans}. 

We remark that the maximum first passage time of the MDP plays an essential role in analyzing the reinforcement learning algorithms in infinite-horizon average-reward MDPs. Existing related works either use MDP diameter \cite{jaksch2010near,zhang2019regret}, or span \cite{bartlett2012regal,ouyang2017learning,fruit2018efficient}, or mixing time \cite{ortner2020regret,abbasi2019politex,wei2020model} to characterize the maximum first passage time of the MDP, where these metrics serve as coefficients in the gap to the optimum. The analysis in our paper also relies on
the assumption of reasonable bounds for the maximum first passage time. 
It would be of great interest to establish performance guarantees under relaxed assumptions.

\medskip

\noindent \textbf{Assumption on error tolerance: } As the learning process proceeds, ideally the estimation for $\tilde{\mathcal{M}}$ becomes increasingly accurate. However, it is unrealistic for us to obtain the exact $\tilde{\mathcal{M}}$. Therefore, we assume that if the estimates for the state-transition kernels are accurately enough (i.e., within a certain error bound), the optimal policy to the estimated $\tilde{\mathcal{M}}$ is the same as $\tilde{\pi}^*$. The assumption is stated as follows.
\begin{assumption} \label{Asp:ErrDist}
There exists a $\Delta p > 0$, such that for any MDP $\tilde{\mathcal{M}}'$ with the same state space, action space and cost function as $\tilde{\mathcal{M}}$, if for each $\bm{Q} \in \tilde{\mathcal{S}}$ and each $a \in \mathcal{A}$, we have
$$
\norm{ \tilde{p} \left( \cdot \mid \bm{Q}, a \right) - \tilde{p}' \left( \cdot \mid \bm{Q}, a \right) }_1 \leqslant \Delta p ,
$$
then the optimal policy for $\tilde{\mathcal{M}}'$ is the same as the optimal policy $\tilde{\pi}^*$ for $\tilde{\mathcal{M}}$.
\end{assumption}

In this paper we focus on network control problems with finite action space. For many networks, when the dynamics (e.g. exogenous arrival rates, service rates, channel capacities) vary slightly, the optimal policies remain the same. For instance, if the arrival rates of the switching networks vary yet remain heavy loads, Maximum Matching still remains a close-to-optimal policy \cite{maguluri2016heavy}. 


\subsection{Algorithm} \label{Sec:Algorithm}

We propose an algorithm called Reinforcement Learning for Queueing Networks (RL-QN). 
RL-QN operates in an episodic manner: at the beginning of episode $k$, we uniformly draw a real number $\xi \in [0, 1]$ to decide whether to explore or exploit during this episode. The length of each episode depends on the observations.
\begin{itemize}
    \item If $\xi \leqslant \epsilon_k \triangleq \ell / \sqrt{k} $ (where $\ell \in (0,1]$ can be tuned to control the exploration frequency), we perform exploration during this episode. For states in $\tilde{\mathcal{S}}$, we apply the randomized policy $\pi_{\mathrm{rand}}$. For states in $\mathcal{S} \setminus \tilde{\mathcal{S}}$, we apply $\pi_0$. 
    
    \item If $\xi > \epsilon_k$, we enter the exploitation stage. We first calculate sample means to estimate the state-transition function $\tilde{p}$ of $\tilde{\mathcal{M}}$. We then apply value iteration on the estimated system $\tilde{M}_k$ to obtain an estimated optimal policy $\tilde{\pi}^*_k$.  During this episode, we apply $\tilde{\pi}^*_k$ for states in $\mathcal{S}^{\mathrm{in}}$ and $\pi_0$ otherwise. 
    
    \item When the number of visits to states in $\mathcal{S}^{\mathrm{in}}$ exceeds $L_k = L \cdot \sqrt{k}$ (where $L$ is a positive constant and can be tuned to adjust the sampling rate), RL-QN enters episode $k+1$ and repeat the process above.
\end{itemize}

The details are presented in Algorithm \ref{Alg:RL-QN}. We use $\min \left\{ U, \bm{Q} \right\}$ to denote a vector with the $i$-th coordinate being $\min\{U,Q_i\}.$

\begin{algorithm}
\caption{The RL-QN algorithm} \label{Alg:RL-QN}
\begin{algorithmic}[1]
  \STATE \textbf{Input: } $\mathcal{A}$, $U > W$, $0 < \ell \leqslant 1$, $L > 0$, $K > 0$

  \FOR{episodes $k \leftarrow 1, 2, \cdots,K$}

    \STATE Set $L_k \leftarrow  L \cdot \sqrt{k}$, $\epsilon_k \leftarrow \ell / \sqrt{k}$ and uniformly draw $\xi \in [0, 1]$.

    \IF{$\xi \leqslant \epsilon_k$} 
      \STATE Set 
              $$
              \pi_k \left( \bm{Q} \right) = 
              \begin{cases}
              \pi_{\mathrm{rand}} \left( \bm{Q} \right) , \text{ for } \bm{Q} \in \tilde{\mathcal{S}} \\
              \pi_0 \left( \bm{Q} \right) , \text{ for } \bm{Q} \in \mathcal{S} \setminus \tilde{\mathcal{S}}
              \end{cases} .
              $$  
    \ELSE
      \STATE For each $\bm{Q}, \bm{Q}' \in \tilde{\mathcal{S}}$ and $a \in \mathcal{A}$, let $\tilde{p}_k \left( \bm{Q}' \mid \bm{Q}, a \right) = \tilde{P} \left( \bm{Q}, a, \bm{Q}' \right) / N \left( \bm{Q}, a \right)$ for $N \left( \bm{Q}, a \right) > 0$ and $\tilde{p}_k \left( \bm{Q}' \mid \bm{Q}, a \right) = 1 / \ \abs{\mathcal{R} \left( \bm{Q} \right)}$ otherwise.
      \STATE Solve the estimated MDP $\tilde{M}_k$ and obtain the estimated optimal policy $\tilde{\pi}^*_k$.
      \STATE Set 
              $$
              \pi_k \left( \bm{Q} \right) = 
              \begin{cases}
              \tilde{\pi}^*_k \left( \bm{Q} \right) , \text{ for } \bm{Q} \in \mathcal{S}^{\mathrm{in}} \\
              \pi_0 \left( \bm{Q} \right) , \text{ for } \bm{Q} \in \mathcal{S}^{\mathrm{out}}
              \end{cases} .
              $$  
    \ENDIF

    \WHILE{visits to states in $\mathcal{S}^{\mathrm{in}}$ is smaller than $L_k$}
      \STATE Take the action $a_t = \pi_k \left( \bm{Q}(t) \right)$ for the real system.
      \STATE Observe the next state $\bm{Q}(t+1)$.
      \IF{$\bm{Q}(t) \in \tilde{\mathcal{S}}$} 
        \STATE Increase $N \left( \bm{Q}(t), a_t \right)$ by $1$.
        \STATE Increase $\tilde{P} \left( \bm{Q}(t), a_t, \min \left\{ U, \bm{Q} \right\} \right)$ by $1$. 
      \ENDIF
      \STATE $t \leftarrow t+1$.
    \ENDWHILE

  \ENDFOR

  \STATE \textbf{Output: } estimated optimal policy $\tilde{\pi}^*_K$

\end{algorithmic}
\end{algorithm}


\section{Performance Analysis} \label{Sec:Theory}

We analyze the performance of our algorithm from both exploration and exploitation perspectives, under Assumptions \ref{Asp:KnownPolicy}-\ref{Asp:ErrDist}. We first prove that RL-QN can learn $\tilde{\pi}^*$ within finite episodes with high probability, which implies that RL-QN explores different states sufficiently to obtain an accurate estimation of $\tilde{\mathcal{M}}$ (cf. Theorem~\ref{Thm:LearningProcess}). We then show that RL-QN exploits the estimated optimal policy and achieves a performance close to the optimal result of $\rho^*$ (cf. Theorem~\ref{Thm:Backlog}).

In this paper, we focus on MDPs such that all states are accessible from each other under the following policies: (a) $\pi_{\mathrm{rand}}+\pi_0:$ applying $\pi_{\mathrm{rand}}$ to $\tilde{\mathcal{S}}$ and $\pi_0$ to $\mathcal{S} \setminus \tilde{\mathcal{S}}$; (b) $\tilde{\pi}^*+\pi_0:$ applying $\tilde{\pi}^*$ to $\mathcal{S}^{\mathrm{in}}$ and $\pi_0$ to $\mathcal{S}^{\mathrm{out}}$; and (c) $\pi^*:$ applying (a truncated version of) $\pi^*$ to $\tilde{\mathcal{S}}$ in $\tilde{\mathcal{M}}$. That is, the corresponding Markov chains under the above policies are irreducible.


\subsection{Convergence to the Optimal Policy} \label{Sec:Conv2Policy}

The following theorem states that, with arbitrarily high probability, RL-QN learns $\tilde{\pi}^*$ within a finite number of episodes.

\begin{theorem} \label{Thm:LearningProcess} Suppose Assumption~\ref{Asp:ErrDist} holds. For each $\delta\in(0,1)$, there exists $k^* < \infty$ such that \emph{RL-QN} learns $\tilde{\pi}^*$ (i.e. $\tilde{\pi}^*_{k^*} = \tilde{\pi}^*$) within $k^*$ episodes with probability at least $1 - \delta$.
\end{theorem}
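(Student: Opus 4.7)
The plan is to reduce the claim to a standard empirical-kernel concentration argument, and split the bad event into three pieces, each of probability at most $\delta/3$. For the first piece, let $X_k\in\{0,1\}$ indicate that episode $k$ is an exploration episode; the $X_k$ are independent with $\Pr(X_k=1)=\ell/\sqrt{k}$, so $\mathbb{E}\sum_{k\le K}X_k=\Theta(\sqrt{K})$ and a Chernoff bound gives $\sum_{k\le K}X_k\ge\tfrac12\ell\sqrt{K}$ with probability $1-\delta/3$ for all $K$ sufficiently large.

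Second, I need every pair $(\bm{Q},a)\in\tilde{\mathcal{S}}\times\mathcal{A}$ to be sampled at least $n^\star$ times by the end of episode $k^*$, where $n^\star$ is the sample complexity determined in Step~3. Inside an exploration episode the state process is a time-homogeneous Markov chain under $\pi_{\mathrm{rand}}+\pi_0$, which Assumption~\ref{Asp:KnownPolicy} together with the irreducibility hypothesis on this policy makes positive recurrent with a stationary law $\mu$ charging every state in $\tilde{\mathcal{S}}$; since $\pi_{\mathrm{rand}}$ picks actions uniformly, $(\bm{Q},a)$ has stationary frequency $\mu(\bm{Q})/|\mathcal{A}|>0$. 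The episode ends at the $L\sqrt{k}$-th visit to $\mathcal{S}^{\mathrm{in}}\subseteq\tilde{\mathcal{S}}$, so its length $T_j$ satisfies $T_j\ge L\sqrt{j}$, and an occupation-time concentration inequality for irreducible positive recurrent chains (e.g.\ a Chung-type or Lezaud-type spectral bound) guarantees at least $\tfrac12(\mu(\bm{Q})/|\mathcal{A}|)\,T_j$ visits to $(\bm{Q},a)$ within episode $j$ with exponentially small failure probability. Combining with Step~1 and summing over exploration episodes yields $N_{k^*}(\bm{Q},a)=\Omega(k^*)$ uniformly in $(\bm{Q},a)$, with probability at least $1-\delta/3$.

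Third, conditional on $N_{k^*}(\bm{Q},a)=n$ the recorded truncated next-state samples are i.i.d.\ from $\tilde p(\cdot\mid\bm{Q},a)$, so Weissman's $\ell_1$ deviation bound on empirical distributions over the finite support $\mathcal{R}(\bm{Q})$ of size at most $(2W+1)^D$ gives
$$\Pr\bigl(\lVert\tilde p_{k^*}(\cdot\mid\bm{Q},a)-\tilde p(\cdot\mid\bm{Q},a)\rVert_1>\Delta p\bigr)\le 2^{(2W+1)^D}\exp\bigl(-n(\Delta p)^2/2\bigr).$$
Setting $n^\star=(2/(\Delta p)^2)\log(6|\tilde{\mathcal{S}}||\mathcal{A}|2^{(2W+1)^D}/\delta)$ and choosing $k^*$ large enough that the bound from Step~2 exceeds $n^\star$, a union bound over the finite set $\tilde{\mathcal{S}}\times\mathcal{A}$ delivers simultaneous $\ell_1$ closeness within $\Delta p$ with probability at least $1-\delta/3$. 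On this event Assumption~\ref{Asp:ErrDist} gives $\tilde\pi^*_{k^*}=\tilde\pi^*$, and a union bound over the three exceptional events completes the proof.

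The main obstacle is Step~2: translating ergodicity of $\pi_{\mathrm{rand}}+\pi_0$ into a quantitative per-episode lower bound on visits to each $(\bm{Q},a)$. The random stopping rule couples the episode length to the trajectory, and the chain is restarted with an initial state inherited from the previous episode, so the occupation-time concentration has to be run from an arbitrary starting state; I would handle this via a regenerative decomposition at returns to a fixed reference state in $\tilde{\mathcal{S}}$, or by absorbing the initial-state dependence into a mixing-time factor obtained from the Lyapunov property of $\pi_0$. Once this step is in place, Steps~1, 3 and 4 are essentially bookkeeping and the value $k^*$ can be read off from the concentration inequalities directly.
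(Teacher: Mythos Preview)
Your three-step skeleton (count exploration episodes, lower-bound per-episode visits to each $(\bm{Q},a)$, apply Weissman plus Assumption~\ref{Asp:ErrDist}) matches the paper's structure, and Step~3 is essentially identical to the paper's Lemma on sample requirements. The differences lie in how Steps~1 and~2 are executed.

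For Step~2 the paper takes a much softer route than you do: it invokes the ergodic theorem for the positive-recurrent chain under $\pi_{\mathrm{rand}}+\pi_0$ to conclude that, since episode lengths $L_k'\ge L\sqrt{k}\to\infty$, there \emph{exists} a threshold $K_0$ beyond which each exploration episode visits every $(\bm{Q},a)$ at least $\tfrac12\,p^{\mathrm{rand}}(\bm{Q})L_k/|\mathcal{A}|$ times with probability at least $1/2$. No spectral gap, Lezaud bound, or regenerative machinery is needed; the price is that $K_0$ is non-explicit. Your quantitative concentration approach would produce an explicit $K_0$ but at the cost of additional ergodicity assumptions that the paper never makes (and that are not implied by Assumption~\ref{Asp:KnownPolicy} alone). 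The paper's version also sidesteps the random-starting-state issue you flag, because almost-sure convergence in the ergodic theorem holds from any initial state.

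For Step~1 the paper does not separate ``enough exploration episodes'' from ``good visits'' via Chernoff. Instead it defines the compound event $B_k=\{k\ge K_0,\ \text{episode }k\text{ explores},\ \text{visit counts are good}\}$, bounds $\Pr(B_k)\ge \ell/(2\sqrt{k})$, and then uses a direct union-bound/combinatorial estimate on binomial-type tails to show that $B_k$ occurs $J^*$ times within a finite number of episodes, finishing with Markov's inequality. Your Chernoff decomposition is cleaner but requires the per-episode concentration in Step~2 to hold with summable failure probability, which again pushes you toward quantitative mixing bounds the paper avoids.

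In short, your plan is sound, but the paper's argument is deliberately more elementary: it trades explicitness of $k^*$ for not needing any quantitative ergodicity beyond positive recurrence.
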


\begin{proof} We give the detailed proof in Appendix~\ref{App:PfLearningProcess}. Let us now sketch the main steps in our analysis. By analyzing the mixing property of the underlying Markov chain of applying $\pi_{\mathrm{rand}}$ to $\tilde{\mathcal{S}}$ and $\pi_0$ to $\mathcal{S} \setminus \tilde{\mathcal{S}}$, we first show that there exists a constant $K_0$ such that for episode $k \geqslant K_0$, we are able to obtain $\Theta(\sqrt{k})$ samples for each $(\bm{Q}, a)$ pair.

We then use a result in \cite{weissman2003inequalities} to compute the number of samples required for each $(\bm{Q}, a)$ pair to ensure $\tilde{\pi}^*_k = \tilde{\pi}^*$, under Assumption~\ref{Asp:ErrDist}. 
Finally, we show that by choosing the exploration probability of $\epsilon_k = \ell / \sqrt{k}$, exploration episodes occur infinitely often, which implies that we can reach the number of required exploration episodes within finite number of episodes.

\end{proof}

Theorem \ref{Thm:LearningProcess} indicates that RL-QN explores (i.e. samples) state-transition functions of each state-action pair $(\bm{Q}, a)$ in $\tilde{\mathcal{M}}$ sufficiently.


\subsection{Gap to Optimum} \label{Sec:Backlog}

Theorem \ref{Thm:LearningProcess} states the sufficient exploration aspect of RL-QN. In reinforcement learning, the trade-off between exploration and exploitation is of significant importance to the algorithm performance. In this section, we show that RL-QN also exploits the learned policy such that the episodic average queue backlog is bounded and can get arbitrarily close to $\rho^*$, the optimal average queue backlog of $\mathcal{M}$, as we increase $U$.

We denote the time step at the end of episode $k$ by $t_k$ and the actual length of episode $k$ by $L_k'$, i.e., $L_k' = t_k - t_{k-1}$ with $t_0 \triangleq 0$. We use $\pi_k^{\mathrm{in}}$ to represent the policy applied to $\mathcal{S}^{\mathrm{in}}$ during episode $k$ and $p^{\tilde{\pi} + \pi_0}(\cdot)$ to denote the stationary distribution of states when applying $\tilde{\pi}$ to states in $\mathcal{S}^{\mathrm{in}}$ and $\pi_0$ to states in $\mathcal{S}^{\mathrm{out}}$.

By Theorem \ref{Thm:LearningProcess}, RL-QN learns $\tilde{\pi}^*$ with high probability. Note that as the exploration probability decays by $1/\sqrt{k}$, the probability of utilizing the learned policy converges to $1$ as the episodes increase. Hence the episodic average queue backlog when $\pi_k^{\mathrm{in}} = \tilde{\pi}^*$ constitutes a large proportion of the overall expected average queue backlog. Therefore, the key step to upper bound the expected average queue backlog is to upper bound the episodic average queue backlog when $\pi_k^{\mathrm{in}} = \tilde{\pi}^*$. We prove that it can be upper bounded with respect to $\tilde{\rho}^*$, the optimal average queue backlog of $\tilde{\mathcal{M}}$, as stated in Lemma~\ref{Lem:EpiExBacklog}. We first define $\mathcal{S}^{\mathrm{in}}_{\mathrm{bd}}$ as the set of states in $\mathcal{S}^{\mathrm{in}}$ that is possible to exit into $\mathcal{S}^{\mathrm{out}}$ at the next time slot, i.e. $\mathcal{S}^{\mathrm{in}}_{\mathrm{bd}} \triangleq \left\{ \bm{Q} \in \mathcal{S}^{\mathrm{in}}: \mathcal{R} \left( \bm{Q} \right) \cap \mathcal{S}^{\mathrm{out}} \neq \emptyset \right\}$.

\begin{lemma} \label{Lem:EpiExBacklog} Under Assumptions~\ref{Asp:KnownPolicy}-\ref{Asp:PolyTrans}, we have
$$
\lim_{k \to \infty} \mathbb{E}_{\tilde{\pi}^* + \pi_0} \left[ \frac{\sum_{t = t_{k-1}+1}^{t_k} \sum_i Q_i(t)}{L_k'} \right] = \tilde{\rho}^* + p^{\tilde{\pi}^* + \pi_0} \left( \mathcal{S}^{\mathrm{in}}_{\mathrm{bd}} \right) \cdot \mathcal{O} \left( U^{1 + \max \{ 2\alpha, \gamma \}} \right) .
$$
\end{lemma}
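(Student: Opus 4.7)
The plan is to reduce the episodic time-average to a stationary expectation via ergodicity, and then decompose that expectation into an ``inside'' contribution that matches $\tilde{\rho}^*$ up to a boundary correction, plus an ``outside'' contribution from excursions into $\mathcal{S}^{\mathrm{out}}$. Under Assumption~\ref{Asp:KnownPolicy} the composite policy $\tilde{\pi}^*+\pi_0$ induces an irreducible positive recurrent chain on $\mathcal{S}$, so the stationary distribution $p^{\tilde{\pi}^*+\pi_0}$ exists. Because the episode length is driven by $L_k = L\sqrt{k}\to\infty$ visits to $\mathcal{S}^{\mathrm{in}}$, we have $L_k'\to\infty$ almost surely, so the ergodic theorem reduces the left-hand side of the lemma to $\sum_{\bm{Q}} p^{\tilde{\pi}^*+\pi_0}(\bm{Q})\sum_i Q_i$.

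For the inside contribution, I would couple the trajectories of $\mathcal{M}$ under $\tilde{\pi}^*+\pi_0$, restricted to visits to $\mathcal{S}^{\mathrm{in}}$, with those of $\tilde{\mathcal{M}}$ under $\tilde{\pi}^*$. The two transition kernels coincide on $\mathcal{S}^{\mathrm{in}}\setminus\mathcal{S}^{\mathrm{in}}_{\mathrm{bd}}$ and differ only at boundary states (overflow in $\mathcal{M}$ versus truncation in $\tilde{\mathcal{M}}$). Picking a fixed base state $\bm{Q}_0\in\mathcal{S}^{\mathrm{in}}$ and using a regenerative decomposition on returns to $\bm{Q}_0$, one writes the inside stationary cost as $\tilde{\rho}^*$ plus a correction proportional to the rate of re-entries from $\mathcal{S}^{\mathrm{out}}$. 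Assumption~\ref{Asp:PolyTrans} supplies a policy whose expected first-hitting time from any re-entry point back to $\bm{Q}_0$ is $\mathcal{O}(U^\gamma)$; the per-step cost along this trajectory is $\mathcal{O}(U)$, so the correction has magnitude $p^{\tilde{\pi}^*+\pi_0}(\mathcal{S}^{\mathrm{in}}_{\mathrm{bd}})\cdot\mathcal{O}(U^{1+\gamma})$, which is absorbed into the lemma's error term.

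For the outside contribution, I would analyze a single excursion that starts from some $\bm{Q}\in\mathcal{S}^{\mathrm{in}}_{\mathrm{bd}}$ and ends upon the first return to $\mathcal{S}^{\mathrm{in}}$ under $\pi_0$. Because $Q_{\max}\leqslant U+W$ at the start of the excursion, Assumption~\ref{Asp:KnownPolicy} gives $\Phi_0(\bm{Q})=\mathcal{O}(U^\alpha)$. Foster's theorem together with Dynkin's formula applied to $\Phi_0$ then yields expected return time $\mathbb{E}[\tau]=\mathcal{O}(U^\alpha)$, and a supermartingale maximal inequality on $\Phi_0(\bm{Q}(t\wedge\tau))+\epsilon_0\,(t\wedge\tau)$ bounds the expected excursion-supremum $\mathbb{E}[\sup_{t\leqslant\tau}Q_{\max}(t)]=\mathcal{O}(U)$. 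A Cauchy--Schwarz-type estimate on $\mathbb{E}[\tau\cdot\sup_t Q_{\max}(t)]$ (or, equivalently, a direct Lyapunov bound using $\Phi_0^{(\alpha+1)/\alpha}$) yields per-excursion accumulated cost $\mathbb{E}\bigl[\sum_{t=0}^{\tau-1}\sum_i Q_i(t)\bigr]=\mathcal{O}(U^{1+2\alpha})$. Palm calculus bounds the rate of excursion initiations per unit time by $p^{\tilde{\pi}^*+\pi_0}(\mathcal{S}^{\mathrm{in}}_{\mathrm{bd}})\cdot\mathcal{O}(1)$, so the outside contribution is $p^{\tilde{\pi}^*+\pi_0}(\mathcal{S}^{\mathrm{in}}_{\mathrm{bd}})\cdot\mathcal{O}(U^{1+2\alpha})$. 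Combining with the inside correction gives the claimed error of $p^{\tilde{\pi}^*+\pi_0}(\mathcal{S}^{\mathrm{in}}_{\mathrm{bd}})\cdot\mathcal{O}(U^{1+\max\{2\alpha,\gamma\}})$.

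The main obstacle is obtaining the excursion-integrated cost bound $\mathcal{O}(U^{1+2\alpha})$ in a tight way: the drift of $\Phi_0$ controls $Q_{\max}^\alpha$ rather than $Q_{\max}$ itself, so a naive use of Foster's criterion loses factors of $U$. I would circumvent this by applying the drift criterion to an augmented Lyapunov function of the form $V=\Phi_0+\lambda\,\|\bm{Q}\|_1^{\alpha+1}$, chosen so that its one-step drift dominates a constant multiple of $\|\bm{Q}\|_1$, and then invoking the standard ``drift implies moments'' machinery \cite{meyn2012markov} to convert the drift inequality into the required cost-moment estimate. A secondary difficulty is the coupling step: one must argue via a Palm calculus argument on the renewal process of boundary visits that the discrepancy between $\mathcal{M}$ and $\tilde{\mathcal{M}}$ at $\mathcal{S}^{\mathrm{in}}_{\mathrm{bd}}$ only perturbs the inside expectation by a $p^{\tilde{\pi}^*+\pi_0}(\mathcal{S}^{\mathrm{in}}_{\mathrm{bd}})$-scaled term, rather than biasing the leading $\tilde{\rho}^*$ contribution.
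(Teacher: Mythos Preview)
Your overall decomposition into an ``inside'' contribution relative to $\tilde{\rho}^*$ and an ``outside'' excursion contribution matches the paper's structure. However, both parts of your argument have concrete gaps that the paper resolves differently.

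\textbf{Inside contribution.} Your coupling/regeneration sketch invokes Assumption~\ref{Asp:PolyTrans} as though it bounded the hitting time \emph{under the policy actually running}, $\tilde{\pi}^*$. It does not: the assumption bounds $\min_{\tilde{\pi}}\tilde{\mathbb{E}}_{\tilde{\pi}}[T_{\bm{Q}\to\bm{Q}'}]$, a minimum over \emph{all} policies in $\tilde{\mathcal{M}}$. The paper closes this gap via the Bellman equation. Since $\tilde{p}(\cdot\mid\bm{Q},a)=p(\cdot\mid\bm{Q},a)$ for $\bm{Q}\in\mathcal{S}^{\mathrm{in}}_{\mathrm{in}}$, the bias $\tilde{h}^*$ satisfies $\tilde{\rho}^*+\tilde{h}^*(\bm{Q})=\sum_i Q_i+\mathbb{E}[\tilde{h}^*(\bm{Q}')\mid\bm{Q}]$ there, and the accumulated regret of an ``in'' excursion telescopes exactly to $\tilde{h}^*(\bm{Q}(t_s))-\tilde{h}^*(\bm{Q}(t_e+1))$. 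The span of $\tilde{h}^*$ is then bounded via Proposition~5.5.1 of \cite{bertsekas2017dynamic}, which expresses $\tilde{h}^*(\bm{Q}')-\tilde{h}^*(\bm{Q})$ as $\min_{\tilde{\pi}}\tilde{\mathbb{E}}_{\tilde{\pi}}[\sum_{t=1}^{T_{\bm{Q}\to\bm{Q}'}}(\sum_iQ_i(t)-\tilde{\rho}^*)]$ --- and \emph{now} Assumption~\ref{Asp:PolyTrans} legitimately applies, giving $cDU^{1+\gamma}$. Your route never produces this telescoping identity, so the link between the per-visit correction and the hitting-time assumption is broken.

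\textbf{Outside contribution.} Your supermartingale maximal inequality on $\Phi_0(\bm{Q}(t\wedge\tau))+\epsilon_0(t\wedge\tau)$ at best controls $\sup_t\Phi_0(\bm{Q}(t))$, but Assumption~\ref{Asp:KnownPolicy} only gives $\Phi_0(\bm{Q})\leqslant aQ_{\max}^\alpha$ --- an \emph{upper} bound on $\Phi_0$, not on $Q_{\max}$. Without a matching lower bound $\Phi_0\geqslant c\,Q_{\max}^\alpha$, bounding $\Phi_0$ does not bound $Q_{\max}$, so your claim $\mathbb{E}[\sup_{t\leqslant\tau}Q_{\max}(t)]=\mathcal{O}(U)$ does not follow. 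The augmented-Lyapunov fix you sketch is speculative and would require drift control on $\|\bm{Q}\|_1^{\alpha+1}$ that is not assumed. The paper avoids all of this: since each queue changes by at most $W$ per slot, one has the \emph{deterministic} bound $Q_{\max}(t)\leqslant U-W+Wt$ during the excursion, hence $R^{\mathrm{out}}\leqslant D(U-W)\tau+DW\tau^2$. This reduces the problem to bounding $\mathbb{E}[\tau]$ and $\mathbb{E}[\tau^2]$; the first is $\leqslant aU^\alpha/\epsilon_0$ by the standard Foster-type bound (Theorem~1.1, Chapter~5 of \cite{bremaud1999lyapunov}), and the second is $\leqslant 2(\mathbb{E}[\tau])^2$ by a hitting-time moment inequality (Theorem~6.3.4 of \cite{hou2012homogeneous}), yielding $\mathcal{O}(U^{1+2\alpha})$ directly.
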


\begin{proof}
We first define the accumulated regret regarding $\tilde{\rho}^*$ for a given episode $k$ with $\pi_k^{\mathrm{in}} = \tilde{\pi}^*$ as $\sum_{t = t_{k-1}+1}^{t_k} \big(\sum_i Q_i(t) - \tilde{\rho}^*\big)$. We then define $\mathcal{S}^{\mathrm{in}}_{\mathrm{in}}  \triangleq \mathcal{S}^{\mathrm{in}} \setminus \mathcal{S}^{\mathrm{in}}_{\mathrm{bd}}$ and decompose the expected accumulated regret into three parts according to state position: (i) $\bm{Q} \in \mathcal{S}^{\mathrm{in}}_{\mathrm{in}}$, (ii) $\bm{Q} \in \mathcal{S}^{\mathrm{in}}_{\mathrm{bd}}$ and (iii) $\bm{Q} \in \mathcal{S}^{\mathrm{out}}$. 

For the first part (i), we use Bellman equation analysis to show that, every time $\bm{Q}$ enters $\mathcal{S}^{\mathrm{in}}_{\mathrm{in}}$ from $\mathcal{S}^{\mathrm{in}}_{\mathrm{bd}}$ and returns back to $\mathcal{S}^{\mathrm{in}}_{\mathrm{bd}}$, the expected accumulated regret is upper bounded by the span of the solutions to the Bellman equations. We then use Proposition 5.5.1 in \cite{bertsekas2017dynamic} to obtain a polynomial upper bound for the span of 
the solution to the Bellman equation under Assumption~\ref{Asp:PolyTrans}.

The second part (ii) is trivially upper bounded by $DU$. 

For the third part (iii), we  prove that the time it takes to return back $\mathcal{S}^{\mathrm{in}}$ is polynomial in $U$ under  Assumption~\ref{Asp:KnownPolicy} (using techniques as the proof of Theorem 1.1 in Chapter 5 of \cite{bremaud1999lyapunov}). Therefore, by Theorem 6.3.4 in \cite{hou2012homogeneous}, every time $\bm{Q}$ enters $\mathcal{S}^{\mathrm{out}}$ from $\mathcal{S}^{\mathrm{in}}_{\mathrm{bd}}$ and returns back to $\mathcal{S}^{\mathrm{in}}_{\mathrm{bd}}$, the incurred expected accumulated regret is upper bounded by a polynomial function of $U$.

The detailed proof is given in Appendix~\ref{App:PfEpiExBacklog}.
\end{proof}

We then proceed to upper bound $p^{\tilde{\pi}^* + \pi_0} \left( \mathcal{S}^{\mathrm{in}}_{\mathrm{bd}} \right),$ as stated in the following lemma.

\begin{lemma} \label{Lem:pBd} Under Assumption~\ref{Asp:DriftOptimalPolicy}, we have
$$
p^{\tilde{\pi}^* + \pi_0} \left( \mathcal{S}^{\mathrm{in}}_{\mathrm{bd}} \right) = \mathcal{O} \left( \exp \left( - U \right) \right) .
$$
\end{lemma}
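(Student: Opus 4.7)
The plan is to bound $p^{\tilde{\pi}^*+\pi_0}(\mathcal{S}^{\mathrm{in}}_{\mathrm{bd}})$ via an exponential Foster--Lyapunov estimate inside $\mathcal{S}^{\mathrm{in}}$ combined with a regenerative cycle decomposition that handles excursions into $\mathcal{S}^{\mathrm{out}}$. The first step is to localize $\mathcal{S}^{\mathrm{in}}_{\mathrm{bd}}$: if $\bm{Q}\in\mathcal{S}^{\mathrm{in}}_{\mathrm{bd}}$ then there exists $\bm{Q}'\in\mathcal{R}(\bm{Q})\cap\mathcal{S}^{\mathrm{out}}$ with $Q_{\max}'\leqslant Q_{\max}+W\leqslant U$, so Assumption~\ref{Asp:DriftOptimalPolicy}(ii) forces $\tilde{\Phi}^*(\bm{Q})\geqslant (U-W)^\beta - 2b_2 U^{\beta-1}$, which combined with the upper bound $\tilde{\Phi}^*(\bm{Q})\leqslant b_1 Q_{\max}^\beta$ from Assumption~\ref{Asp:DriftOptimalPolicy}(i) yields $Q_{\max}\geqslant c_1 U - c_2$ for constants $c_1,c_2>0$ independent of $U$. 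It therefore suffices to prove an $\mathcal{O}(\exp(-U))$ bound on the stationary mass of $\{\bm{Q}:Q_{\max}\geqslant c_1 U - c_2\}$.

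For the exponential drift, I would set $V(\bm{Q})\triangleq \tilde{\Phi}^*(\bm{Q})^{1/\beta}$, so that $Q_{\max}\leqslant V(\bm{Q})\leqslant b_1^{1/\beta}Q_{\max}$. A mean-value argument applied to $x\mapsto x^{1/\beta}$ together with Assumption~\ref{Asp:DriftOptimalPolicy}(ii) shows that one-step increments of $V$ are uniformly bounded by a constant $C_V$ independent of $U$ once $Q_{\max}$ is large enough. Combining Jensen's inequality (via concavity of $x\mapsto x^{1/\beta}$) with Assumption~\ref{Asp:DriftOptimalPolicy}(iii) and the same upper bound on $\tilde{\Phi}^*$ converts the $\tilde{\Phi}^*$-drift of order $-\tilde{\epsilon}^*Q_{\max}^{\beta-1}$ into a constant negative drift $\mathbb{E}[V(\bm{Q}(t+1))-V(\bm{Q}(t))\mid\bm{Q}(t)]\leqslant -c_V$ on $\mathcal{S}^{\mathrm{in}}\cap\{Q_{\max}\geqslant \tilde{B}^*\}$ under $\tilde{\pi}^*$. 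A standard second-order Taylor expansion of $e^{\eta x}$ then upgrades this to a geometric drift for $f(\bm{Q})\triangleq e^{\eta V(\bm{Q})}$: for sufficiently small $\eta>0$ there is some $\rho\in(0,1)$ independent of $U$ with $\mathbb{E}[f(\bm{Q}(t+1))\mid\bm{Q}(t)]\leqslant \rho\, f(\bm{Q}(t))$ on the same region.

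I would then synthesize these inputs via a regenerative decomposition. Fix a reference state $\bm{Q}_*\in\mathcal{S}^{\mathrm{in}}$ with $V(\bm{Q}_*)$ small (e.g., $\bm{Q}_*=\bm{0}$, legitimate by the irreducibility assumption made after Assumption~\ref{Asp:ErrDist}), and partition time into cycles between successive visits to $\bm{Q}_*$. The exponential supermartingale $f$ together with the optional stopping theorem bound the per-cycle probability of ever reaching $V\geqslant v$ by $\mathcal{O}(e^{-\eta v})$; at $v=c_1 U - c_2$ this gives $\mathcal{O}(e^{-\eta c_1 U})$ for the probability of visiting $\mathcal{S}^{\mathrm{in}}_{\mathrm{bd}}$ in a cycle. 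Whenever the chain does leak into $\mathcal{S}^{\mathrm{out}}$, Assumption~\ref{Asp:KnownPolicy} together with the $\Phi_0$-drift analysis used in the sketch of Lemma~\ref{Lem:EpiExBacklog}(iii) caps the expected return time to $\mathcal{S}^{\mathrm{in}}$ under $\pi_0$ by $\mathrm{poly}(U)$. Hence the expected per-cycle time spent in $\mathcal{S}^{\mathrm{in}}_{\mathrm{bd}}$ and any subsequent $\mathcal{S}^{\mathrm{out}}$ excursion is at most $\mathrm{poly}(U)\cdot e^{-\eta c_1 U}$, while the expected cycle length is $\Omega(1)$; the renewal-reward theorem then delivers $p^{\tilde{\pi}^*+\pi_0}(\mathcal{S}^{\mathrm{in}}_{\mathrm{bd}})=\mathcal{O}(e^{-U})$ after absorbing the polynomial factor into a slightly smaller exponent.

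The hardest part is stitching together the two regimes. The exponential drift of $V$ only holds on $\mathcal{S}^{\mathrm{in}}$ under $\tilde{\pi}^*$, whereas on $\mathcal{S}^{\mathrm{out}}$ Assumption~\ref{Asp:KnownPolicy} controls only $\Phi_0$, which has no prescribed quantitative relationship to $V$ or $\tilde{\Phi}^*$; a direct Hajek-type tail estimate on the full chain would demand a single Lyapunov function with uniform drift throughout $\mathcal{S}$, which we do not have. The regenerative cycle approach dodges this obstruction at the cost of careful bookkeeping, ensuring that the product of the exponentially small hitting probability and the polynomial expected excursion length still yields the advertised $\exp(-U)$ decay rather than destroying it.
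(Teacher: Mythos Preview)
Your proposal shares the paper's crucial first step---passing from $\tilde{\Phi}^*$ to the ``linearized'' function $V(\bm{Q})=\tilde{\Phi}^*(\bm{Q})^{1/\beta}$ and verifying bounded increments plus constant negative drift on $\mathcal{S}^{\mathrm{in}}$ via Jensen---but then takes a genuinely different route. The paper does \emph{not} use regenerative cycles, exponential supermartingales, or Assumption~\ref{Asp:KnownPolicy}. Instead it extends $V$ to all of $\mathcal{S}$ by capping it at the constant $U-W$ on $\mathcal{S}^{\mathrm{out}}$; since the function is then at its global maximum throughout $\mathcal{S}^{\mathrm{out}}$, the one-step drift from any such state is automatically nonpositive, which dissolves exactly the ``stitching'' problem you flag as hardest---no relationship between $\Phi_0$ and $\tilde{\Phi}^*$ is ever needed. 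With a single bounded Lyapunov function having negative drift on the high part of $\mathcal{S}^{\mathrm{in}}$ and nonpositive drift on $\mathcal{S}^{\mathrm{out}}$, the paper runs a Bertsimas--Gamarnik--Tsitsiklis level-set argument: for $\Phi_m(\bm{Q})=\max\{U-W-2mV,\Phi'(\bm{Q})\}$, the zero-mean-drift identity in stationarity yields a geometric recursion in $m$ on the tail masses $p^{\tilde{\pi}^*+\pi_0}(\Phi'>U-W-(2m-1)V)$, giving the $\exp(-U)$ bound directly and using only Assumption~\ref{Asp:DriftOptimalPolicy} as the lemma promises.

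Your regenerative route also has a real gap in the bookkeeping you allude to. After the chain first hits $\mathcal{S}^{\mathrm{in}}_{\mathrm{bd}}$, excurses into $\mathcal{S}^{\mathrm{out}}$, and returns to $\mathcal{S}^{\mathrm{in}}$, it lands with $V$ of order $U-W$; from there it may revisit $\mathcal{S}^{\mathrm{in}}_{\mathrm{bd}}$ repeatedly (with probability \emph{not} exponentially small) before the cycle closes at $\bm{Q}_*$. Bounding the expected total per-cycle time in $\mathcal{S}^{\mathrm{in}}_{\mathrm{bd}}$ therefore requires a uniform-in-$U$ bound on the expected return time to $\bm{Q}_*$ from boundary states of the \emph{full} chain (including arbitrarily many further $\mathcal{S}^{\mathrm{out}}$ excursions), which you have not supplied and which seems to demand precisely the kind of combined Lyapunov function on all of $\mathcal{S}$ you set out to avoid. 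Similarly, asserting the expected cycle length is $\Omega(1)$ uniformly in $U$ is not immediate, since $\tilde{\pi}^*$ and hence the stationary law depend on $U$. The paper's capping trick removes all of these issues at once.
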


\begin{proof}
We show that under Assumption~\ref{Asp:DriftOptimalPolicy}, we can construct a linear Lyapunov function with a negative drift for states with large $\bm{Q}_{\max}$. By applying similar techniques as Lemma 1 in \cite{bertsimas1998geometric}, we establish an upper bound for the tail probability of Lyapunov values, which decays exponentially.

The detailed proof is given in Appendix~\ref{App:pBd}.
\end{proof}

With Theorem~\ref{Thm:LearningProcess}, Lemma \ref{Lem:EpiExBacklog} and Lemma \ref{Lem:pBd}, we establish the following main result of this paper.

\begin{theorem} \label{Thm:Backlog} Suppose Assumptions~\ref{Asp:KnownPolicy}-\ref{Asp:ErrDist} hold. When applying \emph{RL-QN} to $\mathcal{M}$, the asympototic episodic average queue backlog is upper bounded as follows:
\begin{align}
 \lim_{k \to \infty} \mathbb{E} \left[ \frac{\sum_{t = t_{k-1}+1}^{t_k} \sum_i Q_i(t)}{L_k'} \right] \leq \rho^* + \mathcal{O} \left( \frac{U^{1 + \max \{ 2\alpha, \gamma \}}}{\exp \left( U \right) } \right) .  \label{eq:episodic_bound} 
\end{align}
\end{theorem}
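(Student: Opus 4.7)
The plan is to combine Theorem~\ref{Thm:LearningProcess}, Lemma~\ref{Lem:EpiExBacklog}, and Lemma~\ref{Lem:pBd} with one additional comparison showing $\tilde{\rho}^* \leq \rho^*$. The key idea is to decompose the expectation according to whether the in-region policy used during episode $k$ equals $\tilde{\pi}^*$, invoke the three stated results on the ``good'' event, and control the ``bad'' event using the stability of $\pi_0$.

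More concretely, fix an arbitrarily small $\delta \in (0,1)$ and let $\mathcal{E}_k \triangleq \{\pi_k^{\mathrm{in}} = \tilde{\pi}^*\}$. By Theorem~\ref{Thm:LearningProcess} there is a finite $k^*(\delta)$ such that the cached policy equals $\tilde{\pi}^*$ from episode $k^*(\delta)$ onwards with probability at least $1-\delta$; together with the vanishing exploration probability $\epsilon_k = \ell/\sqrt{k}$, this yields $\mathbb{P}(\mathcal{E}_k^c) \leq \delta + \epsilon_k$ for all large $k$. On $\mathcal{E}_k$ the trajectory follows the Markov kernel induced by $\tilde{\pi}^* + \pi_0$; since exploration episodes become vanishingly rare, the chain has enough time between them to approach the stationary distribution $p^{\tilde{\pi}^*+\pi_0}$, so Lemma~\ref{Lem:EpiExBacklog} gives the conditional asymptotic episodic average as $\tilde{\rho}^* + p^{\tilde{\pi}^*+\pi_0}(\mathcal{S}^{\mathrm{in}}_{\mathrm{bd}})\cdot\mathcal{O}(U^{1+\max\{2\alpha,\gamma\}})$, which Lemma~\ref{Lem:pBd} collapses to $\tilde{\rho}^* + \mathcal{O}(U^{1+\max\{2\alpha,\gamma\}}/\exp(U))$. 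On $\mathcal{E}_k^c$, the stabilizing policy $\pi_0$ still governs all states outside $\tilde{\mathcal{S}}$, so Assumption~\ref{Asp:KnownPolicy} produces a Foster--Lyapunov bound $C(U)$ uniform in $k$; its contribution is at most $(\delta+\epsilon_k)\,C(U)$, which is made arbitrarily small by letting $\delta \to 0$.

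To finish, I would upgrade $\tilde{\rho}^*$ to $\rho^*$ in the bound by a coupling: run $\mathcal{M}$ under its optimal policy $\pi^*$ and $\tilde{\mathcal{M}}$ under the restriction of $\pi^*$ to $\tilde{\mathcal{S}}$, driven by a common realization of arrivals, service completions, and channel outcomes. Because $\tilde{\mathcal{M}}$ only differs from $\mathcal{M}$ by dropping packets that would overflow a buffer, a pathwise comparison yields $Q_i^{\tilde{\mathcal{M}}}(t) \leq Q_i^{\mathcal{M}}(t)$ for every coordinate and time, so the long-run average of $\sum_i Q_i^{\tilde{\mathcal{M}}}$ is at most $\rho^*$; optimality of $\tilde{\pi}^*$ on $\tilde{\mathcal{M}}$ then gives $\tilde{\rho}^* \leq \rho^*$, and chaining all the inequalities with $\delta \to 0$ yields~\eqref{eq:episodic_bound}. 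The main obstacle I anticipate is making this coupling clean, since once $\mathcal{M}$'s state escapes $\tilde{\mathcal{S}}$ the policy $\pi^*$ can take different actions in the two systems and naive pathwise monotonicity can fail; a standard workaround is to exploit the exponentially small stationary probability (an analog of Lemma~\ref{Lem:pBd} applied to $\pi^*$) that $\mathcal{M}$ ever visits $\mathcal{S}\setminus\tilde{\mathcal{S}}$, so the mismatch contributes only a $\mathcal{O}(\mathrm{poly}(U)/\exp(U))$ correction that can be absorbed into the error term.
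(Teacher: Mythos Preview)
Your proposal follows essentially the same architecture as the paper: decompose on the event $\{\pi_k^{\mathrm{in}}=\tilde{\pi}^*\}$, apply Lemma~\ref{Lem:EpiExBacklog} and Lemma~\ref{Lem:pBd} on the good event, bound the bad event using Assumption~\ref{Asp:KnownPolicy}, and finish by comparing $\tilde{\rho}^*$ with $\rho^*$.

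Two places where the paper differs in detail are worth flagging. First, your bound $\mathbb{P}(\mathcal{E}_k^c)\le \delta+\epsilon_k$ implicitly assumes that once $\tilde{\pi}^*$ is learned it stays learned; Theorem~\ref{Thm:LearningProcess} only guarantees $\tilde{\pi}^*_{k^*}=\tilde{\pi}^*$ at episode $k^*$, not persistence. The paper handles this by adding a third failure mode (``learned $\tilde{\pi}^*$ at $k^*$ but subsequent bad samples corrupt the estimate'') and invoking Lemma~\ref{Lem:SampleNum} to cap its probability by $\delta/2$, giving $\mathbb{P}(\mathcal{E}_k^c)\le 3\delta/2+\ell/\sqrt{k}$. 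For the bad-event cost, the paper makes your $C(U)$ explicit as $\mathcal{O}(U^{1+2\alpha})$ via Lemma~\ref{Lem:BoundROut}, and then sets $\delta=\exp(-U)$ rather than sending $\delta\to 0$; your limiting argument also works, but the specific choice is what produces the stated error exponent.

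Second, for the step $\tilde{\rho}^*\le \rho^*+\mathcal{O}(\cdot)$, the paper does not attempt a pathwise coupling. As you anticipated, the monotone coupling is fragile: once a packet is dropped in $\tilde{\mathcal{M}}$ the two states diverge, $\pi^*$ may take different actions in the two systems, and coordinatewise domination can fail for general networks (e.g.\ routing to the shorter queue). The paper goes directly to your ``workaround'': apply (the truncation of) $\pi^*$ to $\tilde{\mathcal{M}}$, use optimality of $\tilde{\pi}^*$ to get $\tilde{\rho}^*\le \tilde{\rho}(\pi^*)$, and then repeat the Bellman/renewal analysis of Lemma~\ref{Lem:EpiExBacklog} together with an analogue of Lemma~\ref{Lem:pBd} for the $\pi^*$-chain to obtain $\tilde{\rho}(\pi^*)\le \rho^*+\mathcal{O}(U^{1+\gamma}/\exp(U))$. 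So your final plan is correct, but you should drop the coupling attempt and go straight to the tail-probability argument.
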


\begin{proof}
With Lemma~\ref{Lem:EpiExBacklog} and Lemma~\ref{Lem:pBd}, we show that when $k \to \infty$, for each episode with $\pi_k^{\mathrm{in}} = \tilde{\pi}^*$, the expected average queue backlog is upper bounded by $\tilde{\rho}^* + \mathcal{O} (U^{1 + \max \{ 2\alpha, \gamma \}} / \exp (U))$. We next show that the expected average queue backlog contributed by episodes where $\pi_k^{\mathrm{in}} \neq \tilde{\pi}^*$ becomes negligible as $k \to \infty$. By applying similar techniques as the proof of Lemma~\ref{Lem:EpiExBacklog} and Lemma~\ref{Lem:pBd}, we then obtain that $\tilde{\rho}^* = \rho^* + \mathcal{O} ( U^{1 + \gamma } / \exp \left( U \right) )$.

The detailed proof is given in Appendix~\ref{App:PfBacklog}.
\end{proof}

Theorem~\ref{Thm:Backlog} provides an upper bound on the performance guarantee of RL-QN with respect to the threshold parameter $U$: by increasing $U$, the long-term episodic average queue backlog approaches $\rho^*$ exponentially fast. Recall that the episodic length $L'_k$ increases to $\infty$ as $k \rightarrow \infty$. We conjecture that the same upper bound hold for the overall average queue backlog regarding the time horizon $T,$ i.e., 
\begin{align}
\lim_{T \to \infty} \frac{\mathbb{E} \left[\sum_{t = 1}^{T} \sum_i Q_i(t) \right]}{T} \leq \rho^* + \mathcal{O} \left( \frac{U^{1 + \max \{ 2\alpha, \gamma \}}}{\exp \left( U \right) } \right) . \label{eq:conjecture}
\end{align}
We note that the result of Theorem~\ref{Thm:Backlog} does not imply (\ref{eq:conjecture}) directly. A rigorous proof of the conjecture~(\ref{eq:conjecture}) seems difficult with current techniques. We leave as an interesting future direction to investigate if (\ref{eq:conjecture}) holds.


\subsection{Complexity Analysis} \label{Sec:Complexity}

Here we present the complexity analysis. Our algorithm requires solving an estimated MDP for each exploitation episode. Since episode $k$ has length of at least $L \cdot \sqrt{k}$, and
$$
  \sum_{k=1}^{K} \sqrt{k} \geqslant \frac{2}{3} K^{\frac{3}{2}} ,
$$
we have that given a time horrzon $T$, the number of episodes is upper bounded as
$$
  K_T \leqslant \bigg( \frac{3T}{2L} \bigg)^{\frac{2}{3}} .
$$
It has been shown that to solve a general MDP, the time complexity is at least polynomial in the size of the state space~\cite{littman2013complexity}. The concrete time complexity depends on the solution method and its parameters. In our setting, the state space $\mathcal{S}^{in}$ has a size of $\Theta(U^N)$. Therefore, the time complexity of our algorithm is
$$
\mathcal{O} \Big( \big( T/L \big)^{\frac{2}{3}} \cdot \text{poly} \big( U^N \big) \Big) .
$$

Therefore, for a given system with fixed number of nodes, the time complexity grows at most polynomially in the threshold $U$. When applying our algorithm to problems of larger scales (larger $N$), the complexity suffers from ``curse of dimensionality''. However, our algorithm remains computationally feasible in practice for the following reasons. First, although the ``curse of dimensionality'' persists, our algorithm substantially simplifies the original MDP with unbounded state space to an MDP with finite state space. For instance, the system in Section 5.1 cannot be optimized by traditional methods but is optimized efficiently under our algorithm. Second, we only require solving the estimated MDP sparsely, i.e., no more than $(3T/2L)^{2/3}$ times. Choosing a relatively large $L$ can greatly reduce the time complexity. Third, solving the MDP is independent of other steps of our algorithm, which allows us to employ appropriate methods to solve MDPs according to different application scenarios, computational capacities and performance requirements. For instance, since we aim at optimizing the average cost of the MDP, undiscounted value iteration/policy iteration methods should be applied. However, we apply discounted methods in our numerical experiments and obtain the same solution with significantly less computation time. Also, we may utilize the special structure of the MDP \cite{guestrin2003efficient,koller2013policy} or apply approximation methods \cite{chen1999value,chen2018deep,veatch2010approximate} to alleviate the computational cost. Our numerical results in Section \ref{Sec:SA_2} further validates the conclusion and shows that our algorithm is computationally feasible in practice.


\section{Numerical Experiments} \label{Sec:Sim}

In this section, we evaluate the performance of RL-QN for three classes of queueing systems:  server allocation, routing, switching.


\subsection{Server Allocation (Two Nodes)} \label{Sec:SA_2}

We consider a dynamic server allocation problem: exogenous packets arrive to two nodes according to Bernoulli process with rate $\lambda_1$ and $\lambda_2$ respectively. At each time slot, the central server selects one node to serve. The head of line job in the selected queue $i$ completes the required service and leaves the system with probability $p_i$. The system model and parameters are illustrated in Figure~\ref{Fig:SA_2_Model}.

\begin{figure}[H]
\centering
\begin{minipage}{.5\textwidth}
  \centerline{\includegraphics[width=0.9\linewidth]{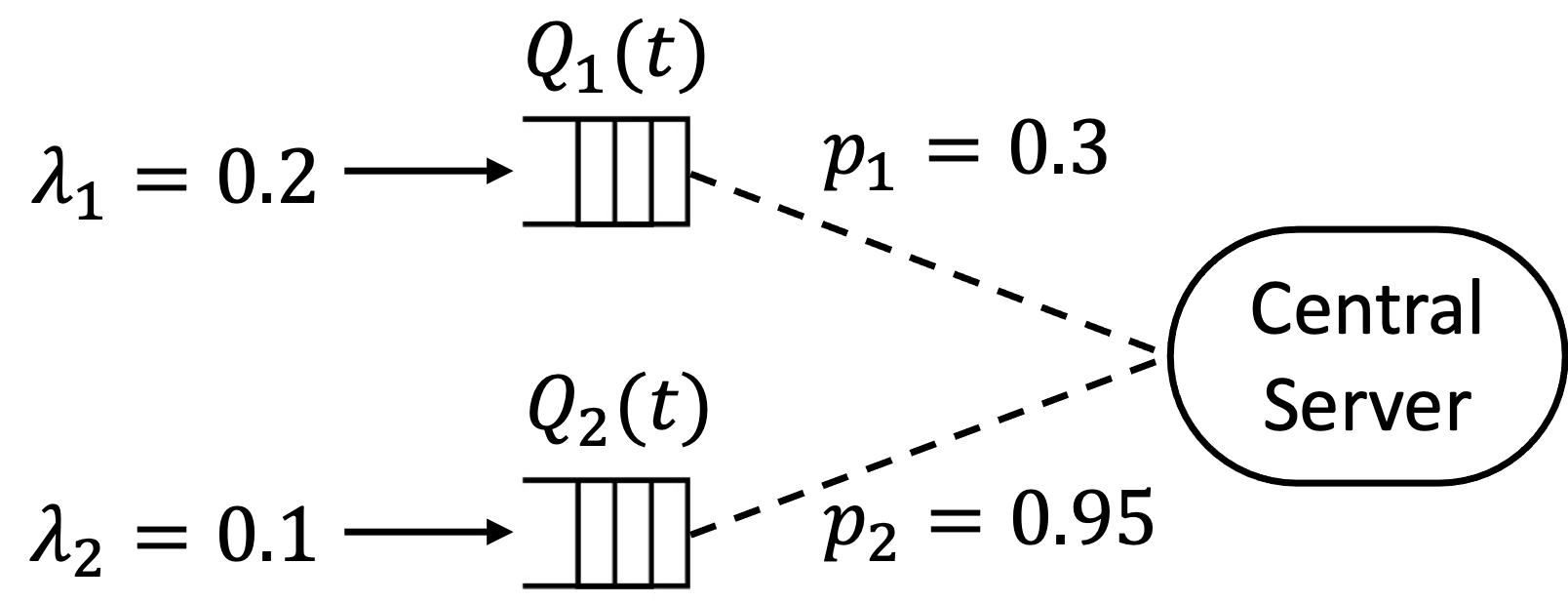}}
  \caption{System model of the dynamic server allocation model with two nodes.}
  \label{Fig:SA_2_Model}
\end{minipage}%
\begin{minipage}{.5\textwidth}
  \centerline{\includegraphics[width=0.9\linewidth]{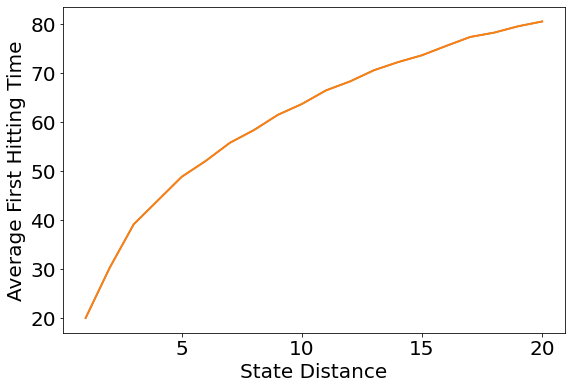}}
  \caption{Simulation results for the dynamic server allocation model of two nodes.}
  \label{Fig:SA_2_HT}
\end{minipage}
\end{figure}

According to \cite{tassiulas1993dynamic}, whenever the condition
$$
\frac{\lambda_1}{p_1} + \frac{\lambda_2}{p_2} < 1 ,
$$
is satisfied, one stabilizing policy is to always serve the node with the longest queue (LQ). Therefore, we can use LQ policy as $\pi_0$. To evaulate whether this problem satisfies Assumption \ref{Asp:PolyTrans}, we apply $\pi_0$ to the truncated state space $\tilde{\mathcal{S}}$ with $U = 10$ and simulate the system to collect the hitting times, as shown in Figure \ref{Fig:SA_2_HT}. We can see that as the state distance grows, the average first hitting time grows sublinearly, which is obviously upper bounded by linear growth and thus indicates that Assumption \ref{Asp:PolyTrans} is satisfied. Moreover, our numerical experiments show that our algorithm can optimize the average queue backlog for various settings even if we cannot provably verify Assumption \ref{Asp:PolyTrans}. Therefore, the applicability of our algorithm may not be sensitive to Assumption \ref{Asp:PolyTrans}, which helps to show that our algorithm is practical.

\begin{figure}[htbp]
\centerline{\includegraphics[width=0.5\linewidth]{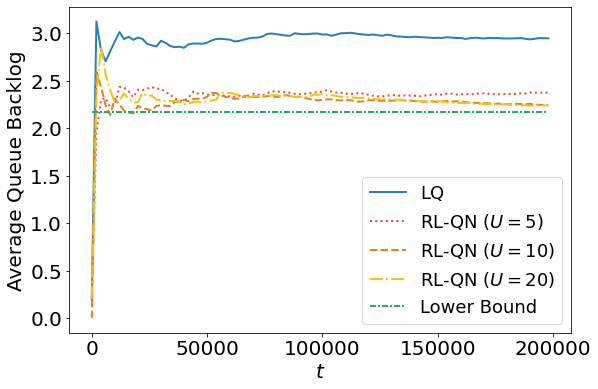}}
\caption{Average queue backlog under different policies for the server allocation model with two nodes.}
\label{Fig:SA_2_Result}
\end{figure}

We simulate RL-QN for $U=5$, $U=10$ and $U=20$. The results are shown in Figure~\ref{Fig:SA_2_Result}. To obtain a lower bound on the average queue backlog, we assume that the system dynamics, i.e., arrival rates and success rates, are known. We then optimize the MDP and obtain its average queue backlog. This average queue backlog is guaranteed to be a lower bound since, in practice, the system dynamics are unknown and errors that occur during the learning process can downgrade the performance. From Figure~\ref{Fig:SA_2_Result}, we see that the longest queue (LQ) policy stabilizes the queue backlog, yet its average queue backlog is far from the lower bound. All of our RL-QN methods outperform $\pi_0$. When $U = 5$, the average queue backlog converges to $2.38$, while for $U = 10$ and $U = 20$, the average queue backlog becomes $2.24$. This indicates that RL-QN achieves better performance with a larger threshold parameter $U$, as implied by Theorem~\ref{Thm:Backlog}. Moreover, since the cases with $U = 10$ and $U = 20$ achieve similar performance, it is very likely that in practice a small $U$ achieves satisfactory performance, which makes our algorithm more computationally feasible.

\begin{figure}[H]
\centering
\begin{minipage}{.48\textwidth}
  \centerline{\includegraphics[width=0.95\linewidth]{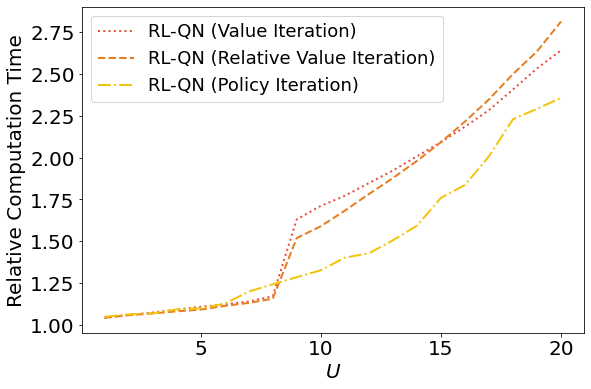}}
  \caption{The relationship between $U$ and the relative \\ computation time.}
  \label{Fig:SA_2_TU}
\end{minipage}%
\begin{minipage}{.48\textwidth}
  \centerline{\includegraphics[width=0.95\linewidth]{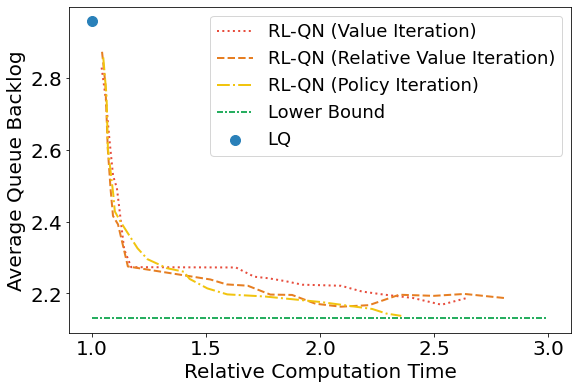}}
  \caption{The relationship between the relative computation time \\ and the average queue backlog.}
  \label{Fig:SA_2_QT}
\end{minipage}
\end{figure}

We then study the time complexity of our algorithm. We apply three different MDP solvers in our algorithm: value iteration, relative value iteration and policy iteration, all with discount factor $0.99$. Figure \ref{Fig:SA_2_TU} shows the relationship between the relative computation time (i.e., the actual computation time of RL-QN devided by the actual computation time of LQ) and the average queue backlog. Even when $U = 20$, the computation time is less than three times of the LQ policy, which shows that increasing $U$ has relatively small impact on the computation time. From Figure \ref{Fig:SA_2_QT}, we see that the average queue backlog drops significantly even with a relatively small $U$. The average queue backlog is reduced by $23.1\%$ when $U = 8$, while the computation time is only $1.17$ times of the traditional non-ML LQ policy. Therefore, our algorithm is computationally efficient in practice.


\subsection{Server Allocation (Ten Nodes)} \label{Sec:SA_10}

We then consider a dynamic server allocation problem with greater scale: exogenous packets arrive to ten nodes according to Bernoulli process with rate $\lambda_i, \ i = 1, \cdots, 10$ respectively. At each time slot, the central server selects one node to serve. The head of line job in the selected queue $i$ completes the required service and leaves the system with probability $p_i$. The system model and parameters are illustrated in Figure~\ref{Fig:SA_10_Model}.

\begin{figure}[h]
\centerline{\includegraphics[width=0.9\linewidth]{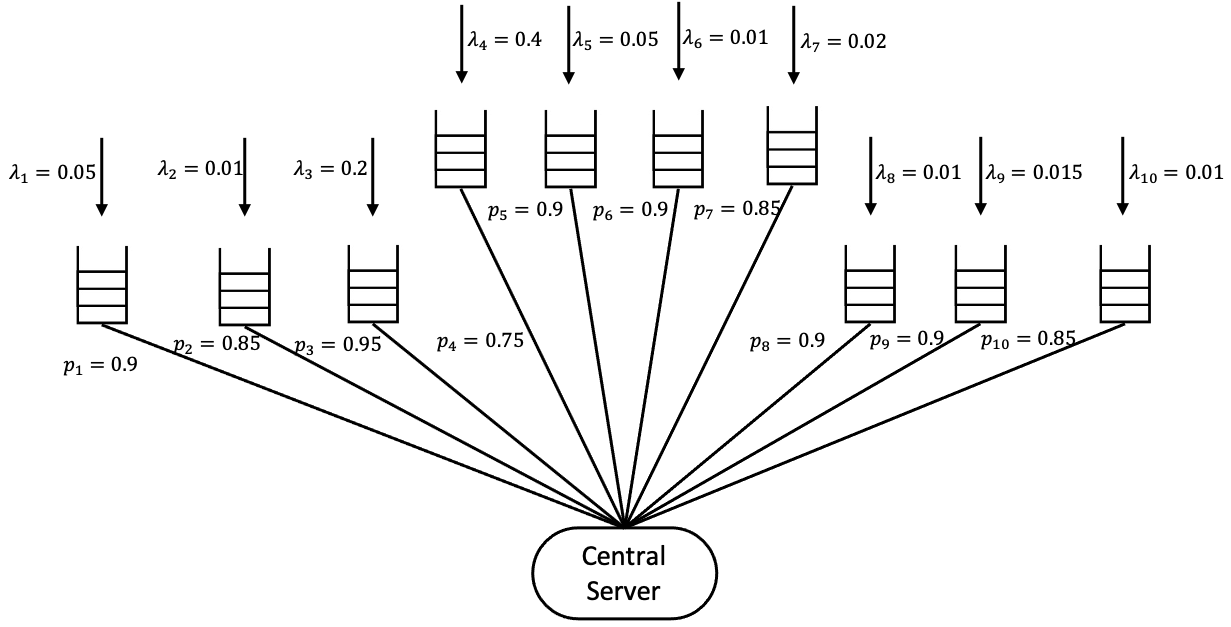}}
\caption{System model of the dynamic server allocation model with ten nodes.}
\label{Fig:SA_10_Model}
\end{figure}

As discussed in Section \ref{Sec:Outline}, if we directly apply classical methods to solve the estimated MDP, the computational complexity is at least $\Theta(U^N)$, which grows exponentially with the number of nodes $N$. Therefore, for the ten-node dynamic server allocation model, we prefer to utilize the special structures of the MDP or apply approximation methods to reduce the computational cost.

It has been shown in \cite{buyukkoc1985cmu} that when all nodes are always connected to the central server, the policy to minimize the average job delay is to serve the node with the largest service rate $p_i$ among all non-empty nodes. Therefore, for each exploitation episode, instead of solving the estimated MDP $\tilde{\mathcal{M}}_k$ with classical methods, we directly obtain the optimal solution: serving the node with the largest estimated $p_i$ among all non-empty nodes. We emphasize that our algorithm does not depend on how the estimated MDPs are solved. The above computational trick utilizes the special structure of the server allocation model to speed up the computation, and other techniques for efficiently solving the MDPs can also potentially be employed.

\begin{figure}[htbp]
\centerline{\includegraphics[width=0.5\linewidth]{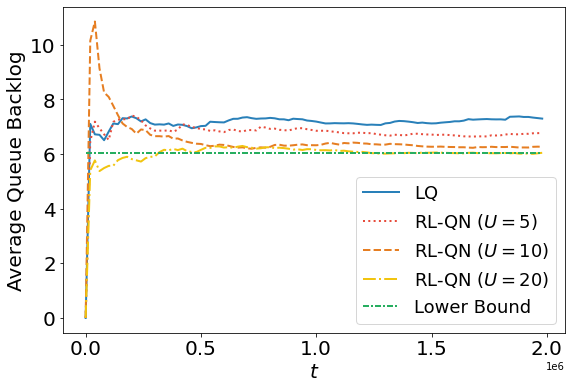}}
\caption{Simulation results for the server allocation model with ten nodes.}
\label{Fig:SA_10_Result}
\end{figure}

The results are shown in Figure~\ref{Fig:SA_10_Result}. From the figure, we observe that RL-QN outperforms LQ. The LQ reaches an average queue backlog of $7.6$, while RL-QN reaches $7$ when $U=5$. When $U$ is beyond $10$, RL-QN reaches an average queue backlog of $6$, which is close to the optimal result.


\subsection{Routing} \label{Sec:Routing_2}

We consider a simple routing problem: exogenous packets arrive at the source node $s$ according to Bernoulli process with rate $\lambda = 0.85$. Node $1$ and node $2$ are two intermediate nodes and can serve at most one packet during each time slot, with probability $p_1$ and $p_2$ respectively. Node $d$ is the destination node. At each time slot, node $s$ has to choose between routes $s \to 1 \to d$ and $s \to 2 \to d$ to transit new exogenous packets. Specifically, the system model and parameters are shown in Figure~\ref{Fig:Routing_2_Model}.

The parameters $(p_1,p_2)$ are queue-dependent here:
$$
(p_1, p_2) = 
\begin{cases}
(0.9, 0.1), & Q_2(t) \leqslant 5 \\
(0.1, 0.9), & Q_2(t) > 5
\end{cases}
.
$$

For each $\lambda < 0.9$, an intuitive stabilizing policy is to always use the fixed path $s \to 1 \to d$, while never choose $s \to 2 \to d$. Therefore, we can use the policy that always routes through $s \to 1 \to d$ as $\pi_0$. However, it is possible that we could split the external arrivals into the two routes to fully utilize the service capacities of both node $1$ and $2$, and achieve better performance.

We simulate RL-QN for $U=10$. The results are plotted in Figure~\ref{Fig:Routing_2_Result}, which shows that RL-QN outperforms the fixed path stabilizing policy and converges to the optimum quickly.

\begin{figure}[H]
\centering
\begin{minipage}{.4\textwidth}
  \centerline{\includegraphics[width=0.9\linewidth]{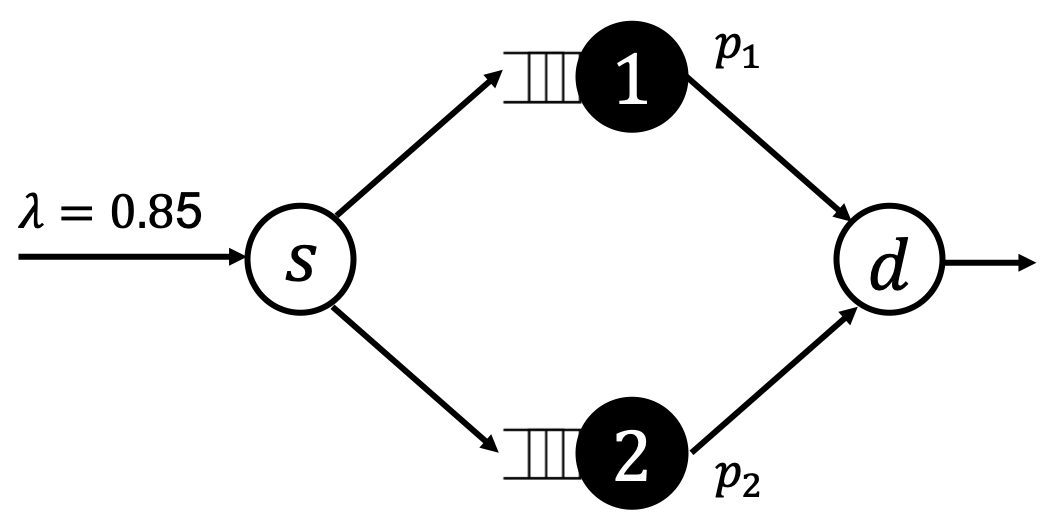}}
  \caption{System model of the routing network with two nodes.}
  \label{Fig:Routing_2_Model}
\end{minipage}%
\begin{minipage}{.5\textwidth}
  \centerline{\includegraphics[width=0.9\linewidth]{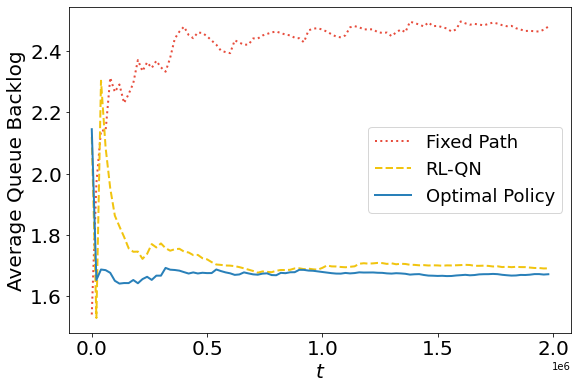}}
  \caption{Simulation results of the routing network with two nodes.}
  \label{Fig:Routing_2_Result}
\end{minipage}
\end{figure}


\subsection{Switching} \label{Sec:Switching}

We consider a $2 \times 2$ input-queued switch as illustrated in Figure~\ref{Fig:Switching_Model}. Data packets arriving at input $i$ destined for output $j$ are stored at input port $i,$ in queue $Q_{i,j}$, thus there are $4$ queues in total. 
We consider the case where the new data packets are arriving
at queue $(i,j)$ at rate $\lambda_{i,j}$ for $1\leq i,j \leq 2$, according to a Bernoulli
process. That is, for each time slot, the number of packets arriving at queue $Q_{i,j}$ is
a Bernoulli random variable with mean $\lambda_{i,j}$.
The server then selects a matching between the inputs and outputs to transmit packets. If input $i$ is connected with output $j$, then a buffered packet is removed from the input queue $Q_{i,j}$ and sent to output $j$.
\begin{figure}[htbp]
\centerline{\includegraphics[width=0.5\linewidth]{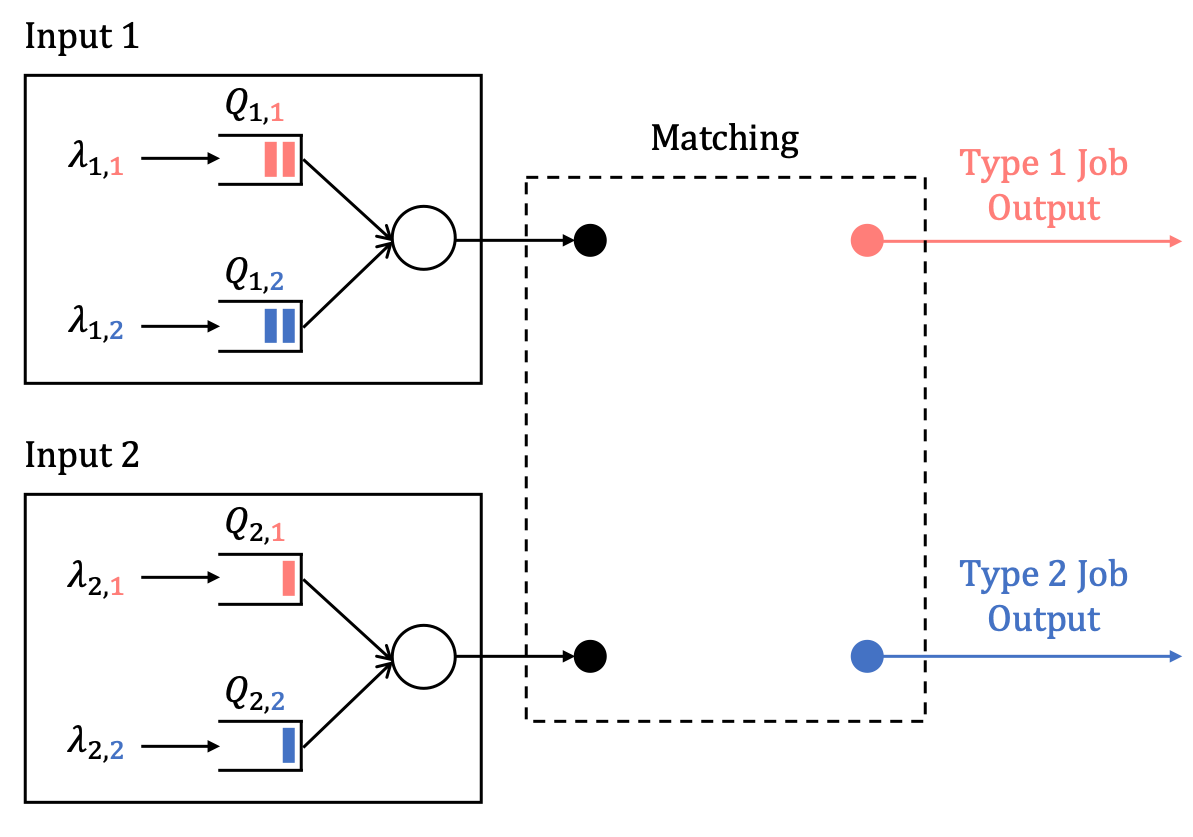}}
\caption{System model of a $2 \times 2$ input-queued cell-switch.}
\label{Fig:Switching_Model}
\end{figure}

According to \cite{mckeown1999achieving}, whenever the condition that $\sum_{i} \lambda_{i, j} < 1, \sum_{j} \lambda_{i, j} < 1$ is satisfied, the Maximum Matching algorithm which selects the matching that maximizes the total queue backlog of the connected channels is stabilizing, hence we use it as $\pi_0$.

\begin{figure}[H]
    \centering
    \begin{subfigure}[b]{0.45\textwidth} 
        \centering
        \includegraphics[width=\linewidth]{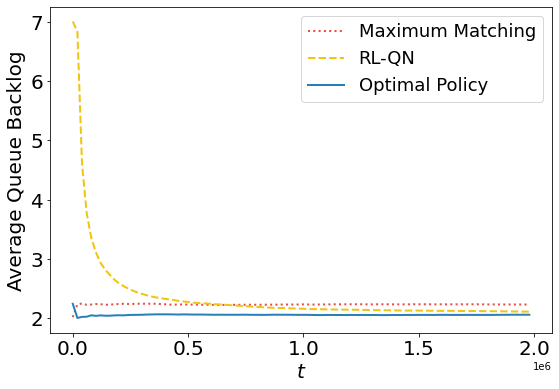}
        \caption{$\lambda_{1, 1} = \lambda_{1, 2} = 0.4$, $\lambda_{2, 1} = 0.2$, $\lambda_{2, 2} = 0.1$}
        \label{Fig:Switching_asy_Result}
    \end{subfigure}
    \hfill
    \begin{subfigure}[b]{0.45\textwidth} 
        \centering
        \includegraphics[width=\linewidth]{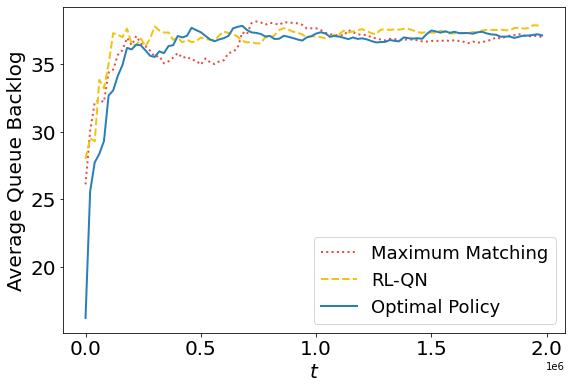}
        \caption{$\lambda_{1, 1} = \lambda_{1, 2} = \lambda_{2, 1} = \lambda_{2, 2} = 0.49$}
        \label{Fig:Switching_heavy_Result}
    \end{subfigure}
    \caption{Simulation results for the input-queued switch model.}
    \label{Fig:Switching_Result}
\end{figure}

We implement the simulation for $U=5$ under two different settings of arrival traffic. The results are shown in Figure~\ref{Fig:Switching_Result}. When $\lambda_{1, 1} = \lambda_{1, 2} = 0.4$, $\lambda_{2, 1} = 0.2$, $\lambda_{2, 2} = 0.1$, we can see that our algorithm outperforms the stabilizing policy $\pi_0$ and converges to $\tilde{\pi}^* + \pi_0$. When $\lambda_{1, 1} = \lambda_{1, 2} = \lambda_{2, 1} = \lambda_{2, 2} = 0.49$, the system is under heavy traffic. In \cite{maguluri2016heavy}, it was shown that the Maximum Matching policy $\pi_0$  is close to the optimal policy that minimizes the average queue backlog. Our simulation result (Figure~\ref{Fig:Switching_heavy_Result}) indicates that our algorithm achieves comparable performance as the near-optimal policy $\pi_0.$


\section{Conclusion} \label{Sec:Conclusion}

In this work, we apply a model-based reinforcement learning framework to general queueing networks with unbounded state spaces. We propose the RL-QN algorithm, which applies an $\epsilon$-greedy exploration scheme. By leveraging Lyapunov analysis, we show that the average queue backlog of the proposed approach can get arbitrarily close to the optimal average queue backlog under the optimal policy. The proposed RL-QN algorithm requires the knowledge of a stable policy. An interesting future direction is to investigate this problem when such information is not available.


\appendix
\appendixpage

\section{Proof of Theorem \ref{Thm:LearningProcess}} \label{App:PfLearningProcess}

In this section, we prove Theorem~\ref{Thm:LearningProcess}. As outlined in Section~\ref{Sec:Conv2Policy}, our proof consists of three steps, which are presented in the subsections to follow.

\subsection{Sufficient Exploration for $\tilde{\mathcal{S}}\times{\mathcal{A}}$}

For each $(\bm{Q}, a)\in \tilde{\mathcal{S}}\times{\mathcal{A}}$, we denote by $N_k(\bm{Q}, a)$ the number of times that $(\bm{Q}, a)$ is encountered during episode $k$. For simplicity, we use $\tilde{\pi}_k$ to denote the policy applied to $\tilde{\mathcal{S}}$ during episode $k$ and $p^{\mathrm{rand}}(\cdot)$ to denote the stationary distribution of states when applying $\pi_{\mathrm{rand}}$ to states in $\tilde{\mathcal{S}}$ and $\pi_0$ to states in $\mathcal{S} \setminus \tilde{\mathcal{S}}$. The following lemma illustrates that after a certain number of episodes, $\pi_{\mathrm{rand}}$ samples each $(\bm{Q}, a) \in \tilde{\mathcal{S}} \times \mathcal{A}$ sufficiently with a relatively large probability (e.g., greater than $1/2$). 

\begin{lemma} \label{Lem:Mix}
Under the setting of Theorem~\ref{Thm:LearningProcess} and Algorithm \ref{Alg:RL-QN}, there exists $K_0 > 0$ such that for each $k \geqslant K_0$,
$$
\Pr \left\{ \frac{N^{\mathrm{rand}}_k(\bm{Q}, a)}{L_k} \leqslant \frac{p^{\mathrm{rand}} \left( \bm{Q} \right)}{2\abs{\mathcal{A}}},\forall \bm{Q} \in \tilde{\mathcal{S}},\forall a \in \mathcal{A} \right\} \geqslant \frac{1}{2}.
$$
\end{lemma}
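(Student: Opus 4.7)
The plan is to view a pure exploration episode as a run of a positive-recurrent Markov chain with stationary distribution $p^{\mathrm{rand}}$, then apply an ergodic-theorem-plus-concentration argument over the finite set $\tilde{\mathcal{S}} \times \mathcal{A}$. The first step is to verify that the chain induced by $\pi_{\mathrm{rand}}+\pi_0$ is positive recurrent with a unique stationary distribution $p^{\mathrm{rand}}$. Outside $\tilde{\mathcal{S}}$ the policy $\pi_0$ is applied, and Assumption~\ref{Asp:KnownPolicy} supplies a Lyapunov function $\Phi_0$ whose one-step drift is at most $-\epsilon_0$ whenever $Q_{\max} \geqslant B_0$. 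Swapping $\pi_0$ for $\pi_{\mathrm{rand}}$ only on the bounded set $\tilde{\mathcal{S}}$ preserves this drift inequality outside the enlarged exceptional set $\{\bm{Q} : Q_{\max} < \max(B_0, U)\}$, which is finite. A Foster--Lyapunov argument combined with the irreducibility hypothesis stated at the start of Section~\ref{Sec:Theory} then yields positive recurrence and a unique stationary distribution $p^{\mathrm{rand}}$.

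Second, because $\pi_{\mathrm{rand}}$ selects actions uniformly and independently of the past on $\tilde{\mathcal{S}}$, the long-run frequency of visits to a pair $(\bm{Q}, a) \in \tilde{\mathcal{S}} \times \mathcal{A}$ equals $p^{\mathrm{rand}}(\bm{Q})/\abs{\mathcal{A}}$. The strong law of large numbers for positive-recurrent Markov chains therefore gives $N^{\mathrm{rand}}_k(\bm{Q}, a)/L_k' \to p^{\mathrm{rand}}(\bm{Q})/\abs{\mathcal{A}}$ almost surely as the episode length $L_k' \to \infty$. To make this quantitative at finite $k$, I would invoke a concentration inequality for additive functionals of ergodic Markov chains (e.g., a Hoeffding-type bound based on regeneration between returns to a fixed recurrent state, or a Chebyshev bound via the variance of the partial sum), so that for any fixed tolerance $\eta > 0$, $\Pr\{\abs{N^{\mathrm{rand}}_k(\bm{Q},a)/L_k' - p^{\mathrm{rand}}(\bm{Q})/\abs{\mathcal{A}}} > \eta\}$ vanishes as $L_k' \to \infty$. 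Choosing $\eta$ to be half the stationary probability for each pair and union-bounding over the finite set $\tilde{\mathcal{S}} \times \mathcal{A}$ yields a simultaneous concentration statement with probability at least $1/2$ for all $k$ beyond a threshold $K_0$.

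The main obstacle is the random stopping rule: the episode ends when visits to $\mathcal{S}^{\mathrm{in}}$ first exceed $L_k$, so $L_k'$ is a stopping time rather than a deterministic index. I would handle this by combining the deterministic inequality $L_k' \geqslant L_k$ with the ratio limit $L_k'/L_k \to 1/p^{\mathrm{rand}}(\mathcal{S}^{\mathrm{in}})$, which follows from the ergodic theorem applied to $\mathbf{1}\{\bm{Q}(t) \in \mathcal{S}^{\mathrm{in}}\}$. This allows one to transfer the concentration bound from the normalization by $L_k'$ to a normalization by $L_k$ while losing only a multiplicative constant that is absorbed into the factor $1/2$ on the right-hand side. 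Since $L_k = L\sqrt{k} \to \infty$, an appropriate $K_0$ exists for which the aggregate deviation probability is at most $1/2$, completing the argument.
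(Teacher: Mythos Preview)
Your proposal is correct and follows essentially the same strategy as the paper: positive recurrence of the chain under $\pi_{\mathrm{rand}}+\pi_0$, ergodic convergence of the visit frequencies $N_k^{\mathrm{rand}}(\bm{Q},a)/L_k'$ to $p^{\mathrm{rand}}(\bm{Q})/|\mathcal{A}|$, a union bound over the finite set $\tilde{\mathcal{S}}\times\mathcal{A}$, and the monotonicity step $L_k'\geqslant L_k$ to pass from $L_k'$ to $L_k$. The paper's argument is slightly leaner in two places: it takes positive recurrence as given by the standing irreducibility hypothesis rather than rechecking Foster--Lyapunov, and instead of a quantitative concentration inequality it simply uses the implication ``almost-sure convergence $\Rightarrow$ convergence in probability'' to obtain the existence of $K_0$, which is all the lemma asserts.
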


\begin{proof}

Here we only consider the episodes that $\tilde{\pi}_k = \pi_{\mathrm{rand}}$. (That is, in this proof episode $k$ is understood as the episode in which the policy $\pi_{\mathrm{rand}}$ is executed for the $k$-th time.) 
Recall that under the policy that applies $\pi_{\mathrm{rand}}$ to states in $\tilde{\mathcal{S}}$ and $\pi_0$ to states in $\mathcal{S} \setminus \tilde{\mathcal{S}}$, the corresponding Markov chain is positive recurrent.
For each $\bm{Q}\in \mathcal{S},$ define $N^{\mathrm{rand}}_k \left( \bm{Q} \right)$ as the number of times that $\bm{Q}$ is visited during episode $k$. For an irreducible positive recurrent Markov chain a on countable state space, we have the mixing property that for every $\bm{Q} \in \mathcal{S},$
\begin{equation} \label{Eqn:ASQ2}
\lim_{k \to \infty} \frac{N^{\mathrm{rand}}_k \left( \bm{Q} \right)}{L'_k} = p^{\mathrm{rand}} \left( \bm{Q} \right) \qquad w.p. 1 .
\end{equation}

Note that under $\pi_{\mathrm{rand}}+\pi_0$, for every $\bm{Q} \in \tilde{\mathcal{S}}$, we take each $a \in \mathcal{A}$ with equal probability $1 / \abs{\mathcal{A}}$. According to strong law of large number, we have
\begin{equation} \label{Eqn:ASQA2}
\lim_{k \to \infty} \frac{N^{\mathrm{rand}}_k(\bm{Q}, a)}{N^{\mathrm{rand}}_k \left( \bm{Q} \right)} = \frac{1}{\abs{\mathcal{A}}} \qquad w.p. 1 .
\end{equation}
for every $\bm{Q} \in \tilde{\mathcal{S}}$ and $a \in \mathcal{A}$.

Since both \eqref{Eqn:ASQ2} and \eqref{Eqn:ASQA2} converge to constants, the multiplication rule for limit holds. That is, for every $\bm{Q} \in \tilde{\mathcal{S}}$ and $a \in \mathcal{A}$, we have 
\begin{align}
\lim_{k \to \infty} \frac{N^{\mathrm{rand}}_k(\bm{Q}, a)}{L'_k} = & \lim_{k \to \infty} \frac{N^{\mathrm{rand}}_k \left( \bm{Q} \right)}{L'_k} \cdot \frac{N^{\mathrm{rand}}_k(\bm{Q}, a)}{N^{\mathrm{rand}}_k \left( \bm{Q} \right)} \nonumber \\
= & \lim_{k \to \infty} \frac{N^{\mathrm{rand}}_k \left( \bm{Q} \right)}{L'_k} \cdot \lim_{k \to \infty} \frac{N^{\mathrm{rand}}_k(\bm{Q}, a)}{N^{\mathrm{rand}}_k \left( \bm{Q} \right)} \nonumber \\
= & \frac{p^{\mathrm{rand}} \left( \bm{Q} \right)}{\abs{\mathcal{A}}} \qquad w.p. 1. \nonumber
\end{align}

Note that almost sure convergence implies convergence in probability. Hence for each $(\bm{Q},a) \in \tilde{\mathcal{S}} \times \mathcal{A}$ and $\epsilon, \xi > 0$, there exists a finite constant such that for each $k$ larger than the constant,
$$
\Pr \left\{ \abs{\frac{N^{\mathrm{rand}}_k(\bm{Q}, a)}{L'_k} - \frac{p^{\mathrm{rand}} \left( \bm{Q} \right)}{\abs{\mathcal{A}}}} \geqslant \epsilon \right\} \leqslant \xi.
$$

Since $L_k'\geq L_k$, we have 
\begin{align}
\Pr \left\{ \frac{N^{\mathrm{rand}}_k(\bm{Q}, a)}{L_k} \leqslant \frac{p^{\mathrm{rand}} \left( \bm{Q} \right)}{2\abs{\mathcal{A}}} \right\} \leqslant & \Pr \left\{ \frac{N^{\mathrm{rand}}_k(\bm{Q}, a)}{L'_k} \leqslant \frac{p^{\mathrm{rand}} \left( \bm{Q} \right)}{2\abs{\mathcal{A}}} \right\} \nonumber \\
\leqslant & \Pr \left\{ \abs{\frac{N^{\mathrm{rand}}_k(\bm{Q}, a)}{L'_k} - \frac{p^{\mathrm{rand}} \left( \bm{Q} \right)}{\abs{\mathcal{A}}}} \geqslant \frac{p^{\mathrm{rand}} \left( \bm{Q} \right)}{2\abs{\mathcal{A}}} \right\} . \nonumber
\end{align}
By setting $\epsilon = p^{\mathrm{rand}} \left( \bm{Q} \right) / (2 \abs{\mathcal{A}})$, $\xi = 1/\left(2 \abs{\tilde{\mathcal{S}}} \abs{\mathcal{A}} \right)$ and taking a union bound over $\tilde{\mathcal{S}}$ and $\mathcal{A}$, we have that there exists some constant $K_0 < \infty$ such that for each $k \geqslant K_0$,
$$
\Pr \left\{ \frac{N^{\mathrm{rand}}_k(\bm{Q}, a)}{L_k} \leqslant \frac{p^{\mathrm{rand}} \left( \bm{Q} \right)}{2\abs{\mathcal{A}}},\forall \bm{Q} \in \tilde{\mathcal{S}} \mbox{ and } \forall a \in \mathcal{A} \right\} \leqslant \frac{1}{2 \abs{\tilde{\mathcal{S}}} \abs{\mathcal{A}}} \cdot \abs{\tilde{\mathcal{S}}} \cdot \abs{\mathcal{A}} = \frac{1}{2}.
$$
We complete the proof.

\end{proof}

\subsection{Sample Requirement for learning $\tilde{\pi}^*$}

Define $R \triangleq \max_{\bm{Q} \in \tilde{\mathcal{S}}, a \in \mathcal{A}} \abs{\mathcal{R} \left( \bm{Q}, a \right)}$. We  have the following lemma on the number of samples required for each $(\bm{Q}, a)\in \tilde{\mathcal{S}}, a \in \mathcal{A}$ to estimate the model of the auxiliary system $\tilde{\mathcal{M}}$ sufficiently accurate.

\begin{lemma} \label{Lem:SampleNum}
Given $\delta > 0$, if for each $(\bm{Q},a)\in \tilde{\mathcal{S}}\times \mathcal{A}$, the number of samples  $N(\bm{Q},a)$ satisfies
$$
N(\bm{Q}, a) \geqslant \frac{2}{(\Delta p)^2} \cdot \log \frac{2^{R+1} (U+1)^D \abs{\mathcal{A}}}{\delta} ,
$$
then with probability at least $1 - \frac{\delta}{2}$, the optimal solution of the estimated truncated MDP is exactly $\tilde{\pi}^*$.
\end{lemma}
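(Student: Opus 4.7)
The plan is to combine a concentration inequality for empirical distributions with Assumption~\ref{Asp:ErrDist}. The key observation is that Assumption~\ref{Asp:ErrDist} reduces the problem of recovering $\tilde{\pi}^*$ to ensuring that, with high probability, the estimated transition kernels $\tilde{p}_k(\cdot\mid \bm{Q},a)$ are uniformly within $\ell_1$ distance $\Delta p$ of the true kernels $\tilde{p}(\cdot\mid \bm{Q},a)$ across all $(\bm{Q},a) \in \tilde{\mathcal{S}}\times\mathcal{A}$. Since the empirical kernel $\tilde{p}_k(\cdot\mid \bm{Q},a)$ is simply the sample mean estimator based on $N(\bm{Q},a)$ i.i.d.\ transitions, this is a textbook $\ell_1$-concentration problem.

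First, I would invoke the Weissman et al.\ inequality~\cite{weissman2003inequalities}, which states that for an empirical distribution $\hat{P}$ of $n$ i.i.d.\ samples drawn from a distribution $P$ supported on at most $k$ atoms,
$$\Pr\bigl\{\lVert \hat{P} - P\rVert_1 \geqslant \epsilon\bigr\} \leqslant \bigl(2^{k} - 2\bigr)\exp\bigl(-n\epsilon^2/2\bigr) \leqslant 2^{k}\exp\bigl(-n\epsilon^2/2\bigr).$$
For each pair $(\bm{Q},a)$, the support of $\tilde{p}(\cdot\mid \bm{Q},a)$ is contained in $\mathcal{R}(\bm{Q})$, which has cardinality at most $R$. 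Setting $\epsilon = \Delta p$ and $n = N(\bm{Q},a)$, and plugging in the hypothesized lower bound on $N(\bm{Q},a)$, I obtain
$$\Pr\bigl\{\lVert \tilde{p}_k(\cdot\mid \bm{Q},a) - \tilde{p}(\cdot\mid \bm{Q},a)\rVert_1 \geqslant \Delta p\bigr\} \leqslant 2^{R}\cdot\exp\!\left(-\log\frac{2^{R+1}(U+1)^{D}\lvert\mathcal{A}\rvert}{\delta}\right) = \frac{\delta}{2(U+1)^{D}\lvert\mathcal{A}\rvert}.$$

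Next, I would apply a union bound over all $(\bm{Q},a) \in \tilde{\mathcal{S}}\times\mathcal{A}$. Since $\tilde{\mathcal{S}} = \{\bm{Q}\in\mathcal{S}:Q_{\max}\leqslant U\}$ has cardinality $(U+1)^{D}$, the union bound yields
$$\Pr\Bigl\{\exists (\bm{Q},a):\lVert \tilde{p}_k(\cdot\mid \bm{Q},a) - \tilde{p}(\cdot\mid \bm{Q},a)\rVert_1 \geqslant \Delta p\Bigr\} \leqslant (U+1)^{D}\lvert\mathcal{A}\rvert\cdot\frac{\delta}{2(U+1)^{D}\lvert\mathcal{A}\rvert} = \frac{\delta}{2}.$$
On the complement event, the estimated system $\tilde{M}_k$ satisfies the hypothesis of Assumption~\ref{Asp:ErrDist}, so its optimal policy coincides with $\tilde{\pi}^*$. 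This delivers the claim.

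There is no real obstacle here beyond bookkeeping the constants so they line up with the stated bound on $N(\bm{Q},a)$. One subtlety worth flagging is whether the $N(\bm{Q},a)$ transitions used to form $\tilde{p}_k(\cdot\mid \bm{Q},a)$ are actually i.i.d.\ samples from $\tilde{p}(\cdot\mid \bm{Q},a)$, as Weissman's inequality presupposes. This is justified by the strong Markov property: conditioned on visiting $(\bm{Q},a)$ for the $j$-th time, the subsequent one-step transition is independent of the trajectory up to that point and distributed exactly according to $\tilde{p}(\cdot\mid \bm{Q},a)$ (with the cap at $U$ baked into how the auxiliary transitions are recorded). Thus the empirical-mean estimator behaves as if built from i.i.d.\ draws, and the proof is complete.
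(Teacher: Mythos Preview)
Your proposal is correct and follows essentially the same approach as the paper: apply the Weissman et al.\ $\ell_1$-deviation inequality per $(\bm{Q},a)$ pair with support bounded by $R$, then union bound over $\tilde{\mathcal{S}}\times\mathcal{A}$ of size $(U+1)^D|\mathcal{A}|$, and conclude via Assumption~\ref{Asp:ErrDist}. Your additional remark on the strong Markov property justifying the i.i.d.\ structure of the samples is a nice touch the paper leaves implicit.
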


\begin{proof}
Our proof is based on the following lemma from \cite{weissman2003inequalities}.

\begin{lemma} [Theorem 2.1 in \cite{weissman2003inequalities}] \label{Lem:L1Ineq}
For a probability distribution $p$ over $n_1$ distinct events, we obtain the empirical distribution $\hat{p}$ based on $n_2$ samples from $p$. Then the $L^1-$ deviation between $p$ and $\hat{p}$ is upper bounded as
$$
\Pr \left\{ \norm{ p(\cdot) - \hat{p}(\cdot) }_1 \geqslant \epsilon \right\} \leqslant \left( 2^{n_1}-2 \right) \exp \left( -\frac{n_2 \epsilon^2}{2} \right) .
$$
\end{lemma}

In our case, for a given $(\bm{Q}, a)\in \tilde{\mathcal{S}}\times\mathcal{A}$, the true distribution over next state $\mathcal{R}(\bm{Q})$ is given by $\tilde{p} \left( \cdot \mid \bm{Q}, a \right)$. We denote the empirical distribution by $\hat{p} \left( \cdot \mid \bm{Q}, a \right)$. Note that $ \abs{\mathcal{R} \left( \bm{Q} \right)} \leqslant R$. 
We set 
$$
n_2 \geqslant \frac{2}{(\Delta p)^2} \cdot \log \frac{ 2^{R+1} (U+1)^D \abs{\mathcal{A}}}{\delta}.
$$
By Lemma~\ref{Lem:L1Ineq}, we have that for each $(\bm{Q}, a)\in \tilde{\mathcal{S}}\times\mathcal{A}$, 
\begin{align}
& \Pr \left\{ \sum_{\bm{Q}' \in \mathcal{R} \left( \bm{Q} \right)} \abs{\tilde{p}( \bm{Q}' \mid \bm{Q}, a) - \hat{p}( \bm{Q}' \mid \bm{Q}, a)} \geqslant \Delta p \right\} \nonumber \\
\leqslant & (2^R-2) \cdot \exp \left( - \frac{(\Delta p)^2}{2} \cdot \frac{2}{(\Delta p)^2} \cdot \log \frac{2^{R+1} (U+1)^D \abs{\mathcal{A}}}{\delta} \right) \nonumber \\
\leqslant & \frac{\delta}{2(U+1)^D \abs{\mathcal{A}}}. \nonumber
\end{align}

Taking a union bound over each $\bm{Q} \in \tilde{\mathcal{S}}$ and $a \in \mathcal{A}$, we obtain
\begin{align}
& \Pr \left\{ \text{there exists } (\bm{Q},a) \in \tilde{\mathcal{S}} \times \mathcal{A} \text{ such that } \norm{ \tilde{p} \left( \cdot \mid \bm{Q}, a \right) - \hat{p} \left( \cdot \mid \bm{Q}, a \right) }_1 \geqslant \Delta p \right\} \nonumber \\
\leqslant & \frac{\delta}{2(U+1)^D \abs{\mathcal{A}}} \cdot (U+1)^D \cdot \abs{\mathcal{A}} = \frac{\delta}{2}. \nonumber
\end{align}
This completes the proof.

\end{proof}

\subsection{Proof of Theorem~\ref{Thm:LearningProcess}}

We are now ready to prove Theorem~\ref{Thm:LearningProcess}.

\begin{proof}
Define $k^*$ as the number of required episodes for RL-QN to learn $\tilde{\pi}^*$. Based on Lemma \ref{Lem:Mix} and Lemma \ref{Lem:SampleNum}, we show that as the learning process proceeds, each $(\bm{Q}, a) \in \tilde{\mathcal{S}} \times \mathcal{A}$ will be sampled sufficiently to learn the optimal policy for $\tilde{\mathcal{M}}$. We define the event 
$$
B_k \triangleq \left\{ k \geqslant K_0, \tilde{\pi}_k = \pi_{\mathrm{rand}}, N_k (\bm{Q}, a) > \frac{p^{\mathrm{rand}} \left( \bm{Q} \right) \cdot L_k}{2 \abs{\mathcal{A}}}, \forall \bm{Q} \in \tilde{\mathcal{S}}, a \in \mathcal{A} \right\} .
$$

When $B_k$ is true, at least $p^{\mathrm{rand}} \left( \bm{Q} \right) L \sqrt{K_0} / (2 \abs{\mathcal{A}})$ samples are obtained for each $(\bm{Q}, a)$. Therefore, a sufficient condition to obtain the required number of samples for each $(\bm{Q}, a)$ as Lemma \ref{Lem:SampleNum} is that $B_k$ occurs for 
$$
J^*\triangleq \ceil*{\frac{4 \abs{\mathcal{A}} \log \frac{2^{R+1} (U+1)^D \abs{\mathcal{A}}}{\delta} }{L \sqrt{K_0} (\Delta p)^2 \cdot \min_{\bm{Q} \in \tilde{\mathcal{S}}}{p^{\mathrm{rand}} \left( \bm{Q} \right)}}} 
$$
times.

We denote by $m^*$ the number of episodes needed for $B_k$ to occur for $J^*$ times. Define $E_{k_1, k_2, \cdots, k_n}$ as the event that for episodes $k = k_1, k_2, \cdots, k_n$, the event $B_k$ does NOT occur. For $n \geqslant 1$, we have
\begin{align}
 & \Pr \left\{ m^* \geqslant K_0+J^*+n \right\} \nonumber \\
= & \Pr \left\{ \text{from episode } K_0 \text{ to } K_0+J^*+n-1 \text{, } B_k \text{ does not occur for at least } n \text{ times} \right\} \nonumber \\
= & \Pr \left\{ \bigcup_{ K_0 \leqslant k_1 < k_2 < \cdots < k_n \leqslant K_0+J^*+n-1} E_{k_1, k_2, \cdots, k_n} \right\} \nonumber \\
\leqslant & \sum_{K_0 \leqslant k_1 < k_2 < \cdots < k_n \leqslant K_0+J^*+n-1} \Pr \left\{ E_{k_1, k_2, \cdots, k_n} \right\}. \label{Eqn:Prm*1}
\end{align}

By applying Lemma \ref{Lem:Mix}, we have that for each $k \geqslant K_0$,
$$
\Pr \left\{ B_k \right\} \geqslant \frac{1}{2} \cdot \Pr \left\{ \pi_{\mathrm{rand}} \text{ is selected at episode } k \right\} = \frac{\ell}{2\sqrt{k}}.
$$
Therefore, for any $K_0 \leqslant k_1 < k_2 < \cdots < k_n \leqslant K_0+J^*+n-1$, we have
\begin{equation} \label{Eqn:Prm*2}
\Pr \left\{ E_{k_1, k_2, \cdots, k_n} \right\} \leqslant \prod_{i=1}^n \left( 1 - \frac{\ell}{2\sqrt{k_i}} \right) \leqslant \left( 1 - \frac{\ell}{2\sqrt{K_0+J^*+n-1}} \right)^n.
\end{equation}
Inserting \eqref{Eqn:Prm*2} into \eqref{Eqn:Prm*1} yields
\begin{align}
\Pr \left\{ m^* \geqslant K_0+J^*+n \right\} \leqslant & \binom{J^*+n}{n} \cdot \left( 1 - \frac{\ell}{2\sqrt{K_0+J^*+n-1}} \right)^n \nonumber \\
= & \frac{(J^*+n) \cdot (J^*+n-1) \cdots (n+1)}{J^*!} \cdot \left( 1 - \frac{\ell}{2\sqrt{K_0+J^*+n-1}} \right)^n \nonumber \\
\leqslant & \frac{(J^*+n)^{J^*}}{J^*!} \cdot \left( 1 - \frac{\ell}{2\sqrt{K_0+J^*+n-1}} \right)^n \nonumber \\
\leqslant & \frac{(J^*+n)^{J^*}}{J^*!} \cdot \exp \left( - \frac{n\ell}{2\sqrt{K_0+J^*+n-1}} \right) \label{Eqn:Prm*3} \\
< & \frac{(J^*+n)^{J^*}}{J^*!} \cdot \frac{(2J^*+4)! \cdot 4^{J^*+2} \cdot (K_0+J^*+n-1)^{J^*+2}}{(n\ell)^{2J^*+4}} \label{Eqn:Prm*4} \\
\leqslant & \frac{4^{J^*+2} \cdot (2J^*+4)! \cdot (K_0+J^*)^{2J^*+2}}{J^*! \cdot \ell^{2J^*+4}} \cdot \frac{1}{n^2} , \nonumber
\end{align}
where \eqref{Eqn:Prm*3} follows from the fact that for $x\in (0,1)$, $1-x\leq e^{-x}$ and \eqref{Eqn:Prm*4} is obtained by the fact that $\exp(u) = \sum_{k=0}^{\infty} (u)^k/k! > u^{2J^*+4}/(2J^*+4)!$ holds for $u>0$. 

We then have
\begin{align}
\mathbb{E} \left[ k^* \right] \leqslant & \mathbb{E} \left[ m^* \right] =  \sum_{i=1}^{\infty} \Pr \left\{ m^* \geqslant i \right\} = K_0 + J^* + \sum_{n=1}^{\infty} \Pr \left\{ m^* \geqslant K_0+J^*+n \right\} \nonumber \\
\leq  & K_0+J^* + \frac{4^{J^*+2} \cdot (2J^*+4)! \cdot (K_0+J^*)^{2J^*+2}}{J^*! \cdot \ell^{2J^*+4}} \sum_{n=1}^{\infty} \frac{1}{n^2} \nonumber\\
\leq & K_0+J^* + \frac{\pi^2 \cdot 4^{J^*+2} \cdot (2J^*+4)! \cdot (K_0+J^*)^{2J^*+2}}{6 \cdot J^*! \cdot \ell^{2J^*+4}} \triangleq K(J^*) . \nonumber
\end{align}
By Markov's inequality, we have
\begin{equation} \label{Eqn:ProbEpiNum}
k^* \leqslant \frac{2 \mathbb{E} \left[ k^* \right]}{\delta} \leqslant \frac{2 K(J^*)}{\delta},
\end{equation}
with probability at least $1 - \delta/2$.

By taking a union bound over the events of Lemma \ref{Lem:SampleNum} and \eqref{Eqn:ProbEpiNum}, we have that with probability at least $1-\delta$,  
$$
k^* \leqslant \frac{2}{\delta} \left( K_0+J^* + \frac{\pi^2 \cdot 4^{J^*+2} \cdot (2J^*+4)! \cdot (K_0+J^*)^{2J^*+2}}{6 \cdot J^*! \cdot \ell^{2J^*+4}} \right) ,
$$
which completes the proof.

\end{proof}


\section{Proof of Lemma \ref{Lem:EpiExBacklog}} \label{App:PfEpiExBacklog}

This section is devoted to the proof of Lemma~\ref{Lem:EpiExBacklog}. 

\begin{proof}
We only discuss an episode $k$ with $\pi_k^{\mathrm{in}} = \tilde{\pi}^*$ here.

We define $\sum_{t = t_{k-1}+1}^{t_k} \left( \sum_i Q_i(t) - \tilde{\rho}^* \right)$ as the regret of episode $k$. Let $t_k'$ and $t_k''$ denote the first and last time slot such that $\bm{Q}(t) \in \mathcal{S}^{\mathrm{in}}_{\mathrm{bd}}$ for $t\in\{t_{k-1}, \cdots, t_k\}$. For the simplicity, we first include the regret incurred at time step $t_{k-1}$ into the episodic regret analysis and subtract it in the end.

We decompose average episodic regret as follows.
\begin{align}
& \lim_{k \to \infty} \mathbb{E}_{\tilde{\pi}^* + \pi_0} \left[ \frac{\sum_{t = t_{k-1}+1}^{t_k} \left( \sum_i Q_i(t) - \tilde{\rho}^* \right)}{L_k'} \right] \nonumber \\
= & \lim_{k \to \infty} \mathbb{E}_{\tilde{\pi}^* + \pi_0} \left[ \frac{\sum_{t = t_k'}^{t_k''} \left( \sum_i Q_i(t) - \tilde{\rho}^* \right)}{L_k'} \right] + \label{Eqn:DecB} \\
& \lim_{k \to \infty} \mathbb{E}_{\tilde{\pi}^* + \pi_0} \left[ \frac{\sum_{t = t_{k-1}}^{t_k'-1} \left( \sum_i Q_i(t) - \tilde{\rho}^* \right) + \sum_{t = t_k''+1}^{t_k} \left( \sum_i Q_i(t) - \tilde{\rho}^* \right) - \left( \sum_i Q_i(t_{k-1}) - \tilde{\rho}^* \right)}{L_k'} \right] \label{Eqn:DecAC} .
\end{align}

\noindent{\bf Upper bound of \eqref{Eqn:DecB}:}

We partition the time slots from $t_k'$ to $t_k''$ into $\mathcal{T}^{\mathrm{in}}_k$, $\mathcal{T}^{\mathrm{bd}}_k$ and $\mathcal{T}^{\mathrm{out}}_k$, which denote the set of time slots that $\bm{Q}(t)$ is in $\mathcal{S}^{\mathrm{in}}_{\mathrm{in}}$, $\mathcal{S}^{\mathrm{in}}_{\mathrm{bd}}$ and $\mathcal{S}^{\mathrm{out}}$, respectively.

To analyze the regret associated with $\mathcal{T}^{\mathrm{in}}_k,$
we define an "in process" unit as the process that $\bm{Q}(t)$ leaves $\mathcal{S}^{\mathrm{in}}_{\mathrm{bd}}$, enters $\mathcal{S}^{\mathrm{in}}_{\mathrm{in}}$, stays in $\mathcal{S}^{\mathrm{in}}_{\mathrm{in}}$ for some time and finally returns back to $\mathcal{S}^{\mathrm{in}}_{\mathrm{bd}}$. Then the process during $\mathcal{T}^{\mathrm{in}}_k$ can be decomposed into multiple "in process" units. An "in process" unit is said to start from $\bm{Q}^{\mathrm{in}} \in \mathcal{S}^{\mathrm{in}}_{\mathrm{bd}}$ if $\bm{Q}^{\mathrm{in}}$ is its last state before entering  $\mathcal{S}^{\mathrm{in}}_{\mathrm{in}}$. We use $N^{\mathrm{in}}_k \big( \bm{Q}^{\mathrm{in}} \big)$ to denote the number of times that an "in process" unit starts from $\bm{Q}^{\mathrm{in}}$. The accumulated regret during the $i^{th}$ "in process" unit  starting from $\bm{Q}^{\mathrm{in}}$ is denoted by $R^{\mathrm{in}}_{k, i} \big( \bm{Q}^{\mathrm{in}} \big)$.

Similarly, we define "out process" units to decompose $\mathcal{T}^{\mathrm{out}}_k$.  An "out process" unit is said to start from $\bm{Q}^{\mathrm{out}}\in \mathcal{S}^{\mathrm{in}}_{\mathrm{bd}}$ if $\bm{Q}^{\mathrm{out}}$ is its last state before entering $\mathcal{S}^{\mathrm{out}}$. We define $N^{\mathrm{out}}_k \left( \bm{Q}^{\mathrm{out}} \right)$ and $R^{\mathrm{out}}_{k, i} \left( \bm{Q}^{\mathrm{out}} \right)$ in similar manners.

Now we can further decompose \eqref{Eqn:DecB} as follows.
\begin{align}
\lim_{k \to \infty} \mathbb{E}_{\tilde{\pi}^* + \pi_0} \left[ \frac{\sum_{t = t_k'}^{t_k''} \left( \sum_i Q_i(t) - \tilde{\rho}^* \right)}{L_k'} \right] = & \lim_{k \to \infty} \mathbb{E}_{\tilde{\pi}^* + \pi_0} \left[ \frac{\sum_{\bm{Q}^{\mathrm{in}} \in \mathcal{S}^{\mathrm{in}}_{\mathrm{bd}}} \sum_{i = 1}^{N^{\mathrm{in}}_k \left( \bm{Q}^{\mathrm{in}} \right)} R^{\mathrm{in}}_{k, i} \left( \bm{Q}^{\mathrm{in}} \right)}{L_k'} \right] + \label{Eqn:DecIn} \\
& \lim_{k \to \infty} \mathbb{E}_{\tilde{\pi}^* + \pi_0} \left[ \frac{\sum_{\bm{Q}^{\mathrm{out}} \in \mathcal{S}^{\mathrm{in}}_{\mathrm{bd}}} \sum_{i = 1}^{N^{\mathrm{out}}_k \left( \bm{Q}^{\mathrm{out}} \right)} R^{\mathrm{out}}_{k, i} \left( \bm{Q}^{\mathrm{out}} \right)}{L_k'} \right] + \label{Eqn:DecOut} \\
& \lim_{k \to \infty} \mathbb{E}_{\tilde{\pi}^* + \pi_0} \left[ \frac{\sum_{t \in \mathcal{T}^{\mathrm{bd}}_k} \left( \sum_i Q_i(t) - \tilde{\rho}^* \right)}{L_k'} \right] . \label{Eqn:DecBd}
\end{align}

For \eqref{Eqn:DecIn}, we use $Y^{\mathrm{in}}_{k, i} \big( \bm{Q}^{\mathrm{in}} \big)$ to denote the length of the time interval between the starting time of the $i^{th}$ and $(i+1)^{th}$ "in process" units that start from $\bm{Q}^{\mathrm{in}}$. By the Markovian property of the system, for a given $\bm{Q}^{\mathrm{in}}$, $Y^{\mathrm{in}}_{k, i} \big( \bm{Q}^{\mathrm{in}} \big)$'s are i.i.d. and $R^{\mathrm{in}}_{k, i} \big( \bm{Q}^{\mathrm{in}} \big)$'s are also i.i.d. Then according to the renewal reward theorem, for every $\bm{Q}^{\mathrm{in}} \in \mathcal{S}^{\mathrm{in}}_{\mathrm{bd}}$, we have 

\begin{equation} \label{Eqn:Ren1} 
\lim_{k \to \infty} \mathbb{E}_{\tilde{\pi}^* + \pi_0} \left[ \frac{ \sum_{i = 1}^{N^{\mathrm{in}}_k \left( \bm{Q}^{\mathrm{in}} \right)} R^{\mathrm{in}}_{k, i} \left( \bm{Q}^{\mathrm{in}} \right)}{L_k'} \right] = \frac{\mathbb{E}_{\tilde{\pi}^* + \pi_0} \left[ R^{\mathrm{in}}_{k, 1} \left( \bm{Q}^{\mathrm{in}} \right) \right]}{\mathbb{E}_{\tilde{\pi}^* + \pi_0} \left[ Y^{\mathrm{in}}_{k, 1} \left( \bm{Q}^{\mathrm{in}} \right) \right]} .
\end{equation}
{Note that we inherently use the fact that as $k \to \infty$, $L_k' \to \infty$ since $L_k' \geqslant L \sqrt{k}$.}

However, it is not straightforward to compute $\mathbb{E}_{\tilde{\pi}^* + \pi_0} \Big[ Y^{\mathrm{in}}_{k, 1} \big( \bm{Q}^{\mathrm{in}} \big) \Big]$  directly. Note that every time when an "in process" unit starting from $\bm{Q}^{\mathrm{in}}$ occurs, $\bm{Q}^{\mathrm{in}}$ must be visited. Therefore, we have the bound that for every $\bm{Q}^{\mathrm{in}} \in \mathcal{S}^{\mathrm{in}}_{\mathrm{bd}}$,
\begin{equation} \label{Eqn:Ren2}
\frac{1}{\mathbb{E}_{\tilde{\pi}^* + \pi_0} \left[ Y^{\mathrm{in}}_{k, 1} \left( \bm{Q}^{\mathrm{in}} \right) \right]} \leqslant \frac{1}{\mathbb{E}_{\tilde{\pi}^* + \pi_0} \left[ \text{Interval between visits to } \bm{Q}^{\mathrm{in}} \right]} = p^{\tilde{\pi}^* + \pi_0} \left( \bm{Q}^{\mathrm{in}} \right) .
\end{equation}

By inserting \eqref{Eqn:Ren1} and \eqref{Eqn:Ren2} into \eqref{Eqn:DecIn}, we have
\begin{align}
\lim_{k \to \infty} \mathbb{E}_{\tilde{\pi}^* + \pi_0} \left[ \frac{\sum_{\bm{Q}^{\mathrm{in}} \in \mathcal{S}^{\mathrm{in}}_{\mathrm{bd}}} \sum_{i = 1}^{N^{\mathrm{in}}_k \left( \bm{Q}^{\mathrm{in}} \right)} R^{\mathrm{in}}_{k, i} \left( \bm{Q}^{\mathrm{in}} \right)}{L_k'} \right] \leqslant & \sum_{\bm{Q}^{\mathrm{in}} \in \mathcal{S}^{\mathrm{in}}_{\mathrm{bd}}} p^{\tilde{\pi}^* + \pi_0} \left( \bm{Q}^{\mathrm{in}} \right) \cdot \mathbb{E}_{\tilde{\pi}^* + \pi_0} \left[ R^{\mathrm{in}}_{k, 1} \left( \bm{Q}^{\mathrm{in}} \right) \right] \nonumber \\
\leqslant & p^{\tilde{\pi}^* + \pi_0} \left( \mathcal{S}^{\mathrm{in}}_{\mathrm{bd}} \right) \cdot \max_{\bm{Q} \in \mathcal{S}^{\mathrm{in}}_{\mathrm{bd}}} \mathbb{E}_{\tilde{\pi}^* + \pi_0} \left[ R^{\mathrm{in}}_{k, 1} \left( \bm{Q} \right) \right] . \label{Eqn:DecIn2}
\end{align}

We provide an upper bound for $\max_{\bm{Q} \in \mathcal{S}^{\mathrm{in}}_{\mathrm{bd}}} \mathbb{E}_{\tilde{\pi}^* + \pi_0} \left[ R^{\mathrm{in}}_{k, 1} \left( \bm{Q} \right) \right]$, as stated in the following lemma (see Appendix \ref{App:BoundRIn} for the proof). 

\begin{lemma} \label{Lem:BoundRIn} Under Assumptions~\ref{Asp:KnownPolicy}-\ref{Asp:PolyTrans}, we have 
$$
\max_{\bm{Q} \in \mathcal{S}^{\mathrm{in}}_{\mathrm{bd}}} \mathbb{E}_{\tilde{\pi}^* + \pi_0} \left[ R^{\mathrm{in}}_{k, 1} \left( \bm{Q} \right) \right] \leqslant cDU^{1+\gamma}.
$$
\end{lemma}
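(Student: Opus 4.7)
The plan is to upper bound the expected regret accumulated during a single "in process" unit by the span of the optimal bias function of the auxiliary MDP $\tilde{\mathcal{M}}$, and then to bound this span polynomially in $U$ using Assumption~\ref{Asp:PolyTrans}. A useful preliminary observation is that $\mathcal{S}^{\mathrm{in}}\subseteq\{\bm{Q}:Q_{\max}\leqslant U-W\}$ (from the definition of $\mathcal{S}^{\mathrm{in}}$ combined with Assumption~\ref{Asp:DriftOptimalPolicy}), so trajectories that remain in $\mathcal{S}^{\mathrm{in}}$ never cross the truncating boundary of $\tilde{\mathcal{M}}$. Consequently, the dynamics of $\mathcal{M}$ under $\tilde{\pi}^*$ and those of $\tilde{\mathcal{M}}$ under $\tilde{\pi}^*$ coincide throughout any in-unit, which justifies analyzing the in-unit via the Bellman equation of $\tilde{\mathcal{M}}$.

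Concretely, I would introduce the optimal bias function $h^{*}:\tilde{\mathcal{S}}\to\mathbb{R}$ associated with $\tilde{\pi}^*$, which satisfies
$$c(\bm{Q})-\tilde{\rho}^{*} = h^{*}(\bm{Q}) - \sum_{\bm{Q}'}\tilde{p}\bigl(\bm{Q}'\mid\bm{Q},\tilde{\pi}^{*}(\bm{Q})\bigr)\,h^{*}(\bm{Q}')$$
for each $\bm{Q}\in\tilde{\mathcal{S}}$; existence follows from the finiteness and irreducibility of $\tilde{\mathcal{M}}$ under $\tilde{\pi}^{*}$. Summing this identity along a random trajectory that starts at $\bm{Q}^{\mathrm{in}}\in\mathcal{S}^{\mathrm{in}}_{\mathrm{bd}}$ and is stopped at the first return time $\tau$ to $\mathcal{S}^{\mathrm{in}}_{\mathrm{bd}}$, and then applying optional stopping to the resulting zero-mean martingale, collapses the sum to
$$\mathbb{E}_{\tilde{\pi}^{*}+\pi_{0}}\bigl[R^{\mathrm{in}}_{k,1}(\bm{Q}^{\mathrm{in}})\bigr] = h^{*}(\bm{Q}^{\mathrm{in}}) - \mathbb{E}_{\tilde{\pi}^{*}+\pi_{0}}\bigl[h^{*}(\bm{Q}(\tau))\bigr],$$
which is bounded by the span $\mathrm{sp}(h^{*}) \triangleq \max_{\bm{Q},\bm{Q}'\in\tilde{\mathcal{S}}}\bigl(h^{*}(\bm{Q})-h^{*}(\bm{Q}')\bigr)$.

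The final step bounds this span. Proposition 5.5.1 of \cite{bertsekas2017dynamic} yields
$$\mathrm{sp}(h^{*})\leqslant \max_{\bm{Q},\bm{Q}'\in\tilde{\mathcal{S}}}\min_{\tilde{\pi}}\tilde{\mathbb{E}}_{\tilde{\pi}}\bigl[T_{\bm{Q}\to\bm{Q}'}\bigr]\cdot \max_{\bm{Q}\in\tilde{\mathcal{S}}}c(\bm{Q}).$$
Assumption~\ref{Asp:PolyTrans} bounds the first factor by $c(DU)^{\gamma}$ since $\lVert\bm{Q}-\bm{Q}'\rVert_{1}\leqslant DU$ on $\tilde{\mathcal{S}}$, while the second factor is at most $DU$ since each coordinate is at most $U$. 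Combining gives a bound of order $DU\cdot(DU)^{\gamma}=D^{1+\gamma}U^{1+\gamma}$; absorbing the $D^{\gamma}$ factor into the constant produces the claimed $cDU^{1+\gamma}$.

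The main obstacle I anticipate is the optional-stopping step of the second paragraph: one has to verify both that $\tau$ has finite expectation and that $h^{*}$ is sufficiently bounded on $\tilde{\mathcal{S}}$ for the telescoped martingale to collapse exactly to the endpoint difference. Both facts are true (finiteness of $\tilde{\mathcal{S}}$ plus the irreducibility and positive recurrence guaranteed by Assumptions~\ref{Asp:KnownPolicy} and~\ref{Asp:DriftOptimalPolicy}), but writing the martingale cleanly and checking uniform integrability is the part most likely to require care.
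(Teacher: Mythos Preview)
Your proposal is correct and follows essentially the same route as the paper: telescope the in-unit regret via the Bellman/Poisson equation of $\tilde{\mathcal{M}}$ to a difference of bias values, then bound the span of $\tilde{h}^{*}$ using Proposition~5.5.1 of \cite{bertsekas2017dynamic} together with Assumption~\ref{Asp:PolyTrans} and the crude cost bound $\sum_i Q_i \leqslant DU$ on $\tilde{\mathcal{S}}$. The paper's telescoping runs over the interior portion $t_s,\dots,t_e$ rather than starting at the boundary state $\bm{Q}^{\mathrm{in}}$, and it does not spell out the optional-stopping verification you flag (finiteness of $\tilde{\mathcal{S}}$ makes it routine), but these are cosmetic differences.
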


For \eqref{Eqn:DecOut}, by following a  similar argument, we have that
\begin{equation} \label{Eqn:DecOut2}
\lim_{k \to \infty} \mathbb{E}_{\tilde{\pi}^* + \pi_0} \left[ \frac{\sum_{\bm{Q}^{\mathrm{out}} \in \mathcal{S}^{\mathrm{in}}_{\mathrm{bd}}} \sum_{i = 1}^{N^{\mathrm{out}}_k \left( \bm{Q}^{\mathrm{out}} \right)} R^{\mathrm{out}}_{k, i} \left( \bm{Q}^{\mathrm{out}} \right)}{L_k'} \right] \leqslant p^{\tilde{\pi}^* + \pi_0} \left( \mathcal{S}^{\mathrm{in}}_{\mathrm{bd}} \right) \cdot \max_{\bm{Q} \in \mathcal{S}^{\mathrm{in}}_{\mathrm{bd}}} \mathbb{E}_{\tilde{\pi}^* + \pi_0} \left[ R^{\mathrm{out}}_{k, 1} \left( \bm{Q} \right) \right].
\end{equation}

The following lemma gives an upper bound for $\max_{\bm{Q} \in \mathcal{S}^{\mathrm{in}}_{\mathrm{bd}}} \mathbb{E}_{\tilde{\pi}^* + \pi_0} \left[ R^{\mathrm{out}}_{k, 1} \left( \bm{Q} \right) \right]$ (see Appendix \ref{App:BoundROut} for the proof).

\begin{lemma} \label{Lem:BoundROut} Under Assumptions~\ref{Asp:KnownPolicy}-\ref{Asp:PolyTrans}, we have
$$
\max_{\bm{Q} \in \mathcal{S}^{\mathrm{in}}_{\mathrm{bd}}} \mathbb{E}_{\tilde{\pi}^* + \pi_0} \left[ R^{\mathrm{out}}_{k, 1} \left( \bm{Q} \right) \right] \leqslant \frac{2a^2DU^{2\alpha+1}}{\epsilon_0^2} .
$$
\end{lemma}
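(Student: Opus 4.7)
The plan is to bound $\mathbb{E}_{\tilde{\pi}^* + \pi_0}\bigl[R^{\mathrm{out}}_{k,1}(\bm{Q})\bigr]$ by separately controlling (i) the expected length of an out-excursion and (ii) the queue sizes reached along it. By construction of RL-QN the entire excursion in $\mathcal{S}^{\mathrm{out}}$ is governed by $\pi_0$, so only Assumption~\ref{Asp:KnownPolicy} is invoked.

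First I would bound the excursion length $T^{\mathrm{out}}$. One slot after leaving $\mathcal{S}^{\mathrm{in}}_{\mathrm{bd}}$ we have $Q_{\max}(1) \leq U$, since per-slot changes are at most $W$; hence $\Phi_0(\bm{Q}(1)) \leq a U^\alpha$ by Assumption~\ref{Asp:KnownPolicy}. For $U$ large enough that every $\bm{Q} \in \mathcal{S}^{\mathrm{out}}$ satisfies $Q_{\max} \geq B_0$, the negative-drift property of Assumption~\ref{Asp:KnownPolicy} is in force throughout the excursion, so $\Phi_0(\bm{Q}(t)) + \epsilon_0 t$ is a nonnegative supermartingale while the state remains in $\mathcal{S}^{\mathrm{out}}$. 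Optional stopping then yields $\epsilon_0 \mathbb{E}[T^{\mathrm{out}}] \leq \Phi_0(\bm{Q}(1)) \leq a U^\alpha$, i.e.\ $\mathbb{E}[T^{\mathrm{out}}] \leq a U^\alpha / \epsilon_0$.

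Second, using $|Q_{\max}(t+1) - Q_{\max}(t)| \leq W$, I would obtain $Q_{\max}(t) \leq U + W t$ along the excursion, so $\sum_i Q_i(t) \leq D Q_{\max}(t) \leq D(U + Wt)$. Summing over the excursion and dropping the nonpositive contribution from $-\tilde{\rho}^*$ gives
\begin{equation*}
R^{\mathrm{out}}_{k,1}(\bm{Q}) \leq D U T^{\mathrm{out}} + \tfrac{D W}{2} (T^{\mathrm{out}})^2.
\end{equation*}
To close the argument I need a second-moment bound on $T^{\mathrm{out}}$. I would argue, either via Markov's inequality iterated by the strong Markov property or via a direct Lyapunov analysis of $\Phi_0^2$ in the style of Chapter~5 of~\cite{bremaud1999lyapunov}, that the uniform negative drift together with bounded increments of $\Phi_0$ forces $T^{\mathrm{out}}$ to have exponential tails, which yields $\mathbb{E}[(T^{\mathrm{out}})^2] \leq C (a U^\alpha / \epsilon_0)^2$ for some absolute constant $C$. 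Taking expectations then gives $\mathbb{E}_{\tilde{\pi}^* + \pi_0}[R^{\mathrm{out}}_{k,1}(\bm{Q})] \leq D a U^{\alpha+1}/\epsilon_0 + C D W a^2 U^{2\alpha}/(2 \epsilon_0^2)$, and both terms are dominated by $\tfrac{2 a^2 D U^{2\alpha+1}}{\epsilon_0^2}$ once $U \geq \max\{W, \epsilon_0/a\}$, a regime that the input condition $U > W$ in Algorithm~\ref{Alg:RL-QN} guarantees (up to absorbing constants) for sufficiently large $U$.

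The main obstacle is the second-moment control of $T^{\mathrm{out}}$: Assumption~\ref{Asp:KnownPolicy} only provides a first-moment negative drift of $\Phi_0$, whereas bounding $\mathbb{E}[(T^{\mathrm{out}})^2]$ requires ruling out heavy tails of the return time. This is the most delicate step and forces one to invoke tail bounds for hitting times of supermartingales with bounded increments; the constants must be tracked carefully to match the precise stated form $\tfrac{2 a^2 D U^{2\alpha+1}}{\epsilon_0^2}$.
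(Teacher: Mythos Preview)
Your decomposition matches the paper's exactly: write $R^{\mathrm{out}}_{k,1}$ as a sum over the excursion, bound each term by $D(U-W+W\tau)$ where $\tau$ is the excursion length, expand into $D(U-W)\mathbb{E}[\tau]+DW\mathbb{E}[\tau^2]$, and control $\mathbb{E}[\tau]\leq aU^\alpha/\epsilon_0=:T_0$ via the Foster--Lyapunov drift of $\Phi_0$ (the paper cites Theorem~1.1, Chapter~5 of~\cite{bremaud1999lyapunov}; your optional-stopping argument is the same thing).

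The one substantive divergence is exactly the step you flag as delicate: the second moment of $\tau$. Your proposed route---exponential tails from negative drift \emph{plus bounded increments of $\Phi_0$}---has a gap, because Assumption~\ref{Asp:KnownPolicy} imposes no increment bound on $\Phi_0$ (only the upper bound $\Phi_0(\bm{Q})\leq aQ_{\max}^\alpha$ and a drift condition). The paper avoids this entirely by invoking a general moment inequality for hitting times, Theorem~6.3.4 of~\cite{hou2012homogeneous}: if $\mathbb{E}[T_{s\to\mathcal{B}}]\leq C\cdot F^*_{s,\mathcal{B}}$ then $\mathbb{E}[T^p_{s\to\mathcal{B}}]\leq p!\,C^p\,F^*_{s,\mathcal{B}}$. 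Positive recurrence of the chain under $\tilde{\pi}^*+\pi_0$ gives $F^*_{\bm{Q},\mathcal{S}^{\mathrm{in}}}=1$, so with $C=T_0$ one reads off $\mathbb{E}[\tau^2]\leq 2T_0^2$ directly---no increment hypothesis on $\Phi_0$ needed, and the factor $2=2!$ is what produces the exact constant in $\tfrac{2a^2DU^{2\alpha+1}}{\epsilon_0^2}$.
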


For \eqref{Eqn:DecBd}, since under $\tilde{\pi}^* + \pi_0$, the Markov chain is positive recurrent, we have that
\begin{equation} \label{Eqn:DecBd2} 
\lim_{k \to \infty} \mathbb{E}_{\tilde{\pi}^* + \pi_0} \left[ \frac{\sum_{t \in \mathcal{T}^{\mathrm{bd}}_k} \left( \sum_i Q_i(t) - \tilde{\rho}^* \right)}{L_k'} \right] \leqslant (DU-\tilde{\rho}^*) \cdot \lim_{k \to \infty} \mathbb{E}_{\tilde{\pi}^* + \pi_0} \left[ \frac{\abs{\mathcal{T}^{\mathrm{bd}}_k}}{L_k'}  \right] \leqslant p^{\tilde{\pi}^* + \pi_0} \left( \mathcal{S}^{\mathrm{in}}_{\mathrm{bd}} \right) \cdot DU .
\end{equation}

By combining \eqref{Eqn:DecIn2}, \eqref{Eqn:DecOut2}, \eqref{Eqn:DecBd2}, Lemma \ref{Lem:BoundRIn} and Lemma \ref{Lem:BoundROut}, we upper bound \eqref{Eqn:DecB} as follows:
\begin{equation} \label{Eqn:DecB2}
\lim_{k \to \infty} \mathbb{E}_{\tilde{\pi}^* + \pi_0} \left[ \frac{\sum_{t = t_k'}^{t_k''} \left( \sum_i Q_i(t) - \tilde{\rho}^* \right)}{L_k'} \right] \leqslant p^{\tilde{\pi}^* + \pi_0} \left( \mathcal{S}^{\mathrm{in}}_{\mathrm{bd}} \right) \cdot \mathcal{O} \left( U^{1 + \max \{ 2\alpha, \gamma \}} \right) .
\end{equation}

\noindent{\bf Upper bound of \eqref{Eqn:DecAC}:}

From the episode termination criteria of RL-QN, we have $\bm{Q}(t_{k-1}) \in \mathcal{S}^{\mathrm{in}}$. If $\bm{Q}(t_{k-1}) \in \mathcal{S}^{\mathrm{in}}_{\mathrm{bd}}$, we have that $t_k' = t_{k-1}$, and therefore have that
\begin{equation*} 
\mathbb{E}_{\tilde{\pi}^* + \pi_0} \left[ \sum_{t = t_{k-1}}^{t_k'-1} \left( \sum_i Q_i(t) - \tilde{\rho}^* \right) \mid \bm{Q}(t_{k-1}) \in \mathcal{S}^{\mathrm{in}}_{\mathrm{bd}} \right] = 0.
\end{equation*}
If $\bm{Q}(t_{k-1}) \in \mathcal{S}^{\mathrm{in}}_{\mathrm{in}}$, then from $t = t_{k-1}$ to $t = t_k'-1$, $\bm{Q}(t) \in \mathcal{S}^{\mathrm{in}}_{\mathrm{in}}$. By applying techniques in the proof of Lemma \ref{Lem:BoundRIn}, we have 
\begin{equation*} 
\mathbb{E}_{\tilde{\pi}^* + \pi_0} \left[ \sum_{t = t_{k-1}}^{t_k'-1} \left( \sum_i Q_i(t) - \tilde{\rho}^* \right) \mid \bm{Q}(t_{k-1}) \in \mathcal{S}^{\mathrm{in}}_{\mathrm{in}} \right] \leqslant cDU^{1+\gamma} .
\end{equation*}

Therefore, we have 
\begin{equation} \label{Eqn:DecAC4}
\mathbb{E}_{\tilde{\pi}^* + \pi_0} \left[ \sum_{t = t_{k-1}}^{t_k'-1} \left( \sum_i Q_i(t) - \tilde{\rho}^* \right) \right] \leqslant cDU^{1+\gamma} .
\end{equation}

Since we also have that $\bm{Q}(t_k) \in \mathcal{S}^{\mathrm{in}}$, by following a similar argument, we have 
\begin{equation} \label{Eqn:DecAC5}
\mathbb{E}_{\tilde{\pi}^* + \pi_0} \left[ \sum_{t = t_k''+1}^{t_k} \left( \sum_i Q_i(t) - \tilde{\rho}^* \right) \right] \leqslant cDU^{1+\gamma} .
\end{equation}

By combining \eqref{Eqn:DecAC4} and \eqref{Eqn:DecAC5}, together with the fact that $L_k' \geqslant L_k$, we have an upper bound for \eqref{Eqn:DecAC} as follows.
\begin{align}
& \lim_{k \to \infty} \mathbb{E}_{\tilde{\pi}^* + \pi_0} \left[ \frac{\sum_{t = t_{k-1}}^{t_k'-1} \left( \sum_i Q_i(t) - \tilde{\rho}^* \right) + \sum_{t = t_k''+1}^{t_k} \left( \sum_i Q_i(t) - \tilde{\rho}^* \right) - \left( \sum_i Q_i(t_{k-1}) - \tilde{\rho}^* \right)}{L_k'} \right] \nonumber \\
\leqslant & \lim_{k \to \infty} \frac{1}{L_k} \mathbb{E}_{\tilde{\pi}^* + \pi_0} \left[\sum_{t = t_{k-1}}^{t_k'-1} \left( \sum_i Q_i(t) - \tilde{\rho}^* \right) + \sum_{t = t_k''+1}^{t_k} \left( \sum_i Q_i(t) - \tilde{\rho}^* \right) - \left( \sum_i Q_i(t_{k-1}) - \tilde{\rho}^* \right) \right] \nonumber \\
\leqslant & \lim_{k \to \infty} \frac{2cDU^{1+\gamma} + \tilde{\rho}^*}{L \cdot \sqrt{k}} = 0 . \label{Eqn:DecAC6}
\end{align}

By summing up \eqref{Eqn:DecB2} and \eqref{Eqn:DecAC6}, we complete the proof.

\end{proof}


\subsection{Proof of Lemma \ref{Lem:BoundRIn}} \label{App:BoundRIn}

\begin{proof}

From Proposition 5.5.1 in \cite{bertsekas2017dynamic}, when applying $\tilde{\pi}^*$ to $\tilde{\mathcal{M}}$, there exists a function $\tilde{h}^*:\tilde{\mathcal{S}}\rightarrow \mathbb{R}$ such that  the following Bellman equation holds:
\begin{equation} \label{Eqn:Bell_1}
\tilde{\rho}^* + \tilde{h}^* \left( \bm{Q} \right) = \sum_i Q_i + \sum_{\bm{Q}' \in \mathcal{S}^{\mathrm{in}}} \tilde{p} \left( \bm{Q}' \mid \bm{Q}, \tilde{\pi}^* \left( \bm{Q} \right) \right) \cdot \tilde{h}^* \left( \bm{Q}' \right),\qquad \forall \bm{Q} \in \mathcal{S}^{\mathrm{in}}.
\end{equation}

By the construction of $\tilde{\mathcal{M}}$ in Section \ref{Sec:Approach}, for each $\bm{Q} \in \mathcal{S}^{\mathrm{in}}_{\mathrm{in}}$, $\bm{Q}' \in \mathcal{S}^{\mathrm{in}}$ and $a \in \mathcal{A}$, we have
$$
\tilde{p} \left( \bm{Q}' \mid \bm{Q}, \tilde{\pi}^* \left( \bm{Q} \right) \right) = p \left( \bm{Q}' \mid \bm{Q}, \tilde{\pi}^* \left( \bm{Q} \right) \right).
$$ 
Therefore, for each $\bm{Q} \in \mathcal{S}^{\mathrm{in}}_{\mathrm{in}}$, \eqref{Eqn:Bell_1} can be rewritten as
\begin{equation} \label{Eqn:Bell_2}
\tilde{\rho}^* + \tilde{h}^* \left( \bm{Q} \right) = \sum_i Q_i + \sum_{\bm{Q}' \in \mathcal{S}^{\mathrm{in}}} p \left( \bm{Q}' \mid \bm{Q}, \tilde{\pi}^* \left( \bm{Q} \right) \right) \cdot \tilde{h}^* \left( \bm{Q}' \right).
\end{equation}

Fix an $\hat{\bm{Q}} \in \mathcal{S}^{\mathrm{in}}_{\mathrm{bd}}$, we now analyze $R^{\mathrm{in}}_{k, 1} \left( \hat{\bm{Q}} \right)$. We denote the start and end time slot of this "in process" unit as $t_s$ and $t_e$ respectively. Note that $\bm{Q}(t)\in \mathcal{S}^{\mathrm{in}}_{\mathrm{in}}$ for $t\in [t_s,t_e]$. Using \eqref{Eqn:Bell_2}, we have 

\begin{align}
\mathbb{E}_{\tilde{\pi}^* + \pi_0} \left[ R^{\mathrm{in}}_{k, 1} \left( \hat{\bm{Q}} \right) \right] = & \mathbb{E} \left[ \sum_{t = t_s}^{t_e} \left( \sum_i Q_i - \tilde{\rho}^* \right) 
\mid \bm{Q}(t_s-1)=\hat{\bm{Q}} \right] \nonumber \\
= & \mathbb{E}_{\tilde{\pi}^* + \pi_0} \left[ \sum_{t = t_s}^{t_e} \left( \tilde{h}^* \left( \bm{Q}(t) \right) - \mathbb{E} \left[ \tilde{h}^* \left( \bm{Q}(t+1) \right) \mid \bm{Q}(t) \right] \right) \mid \bm{Q}(t_s-1)=\hat{\bm{Q}}  \right] \nonumber \\
= & \mathbb{E}_{\tilde{\pi}^* + \pi_0} \left[ \tilde{h}^* \left(\bm{Q}(t_s)\right) - \mathbb{E} \left[ \tilde{h}^* \left( \bm{Q} (t_e + 1) \right) \mid \bm{Q}(t_e) \right] \mid \bm{Q}(t_s-1)=\hat{\bm{Q}}  \right] \nonumber \\
& +  \mathbb{E}_{\tilde{\pi}^* + \pi_0} \left[ \sum_{t = t_s}^{t_e - 1} \left( \tilde{h}^* (\bm{Q}(t+1)) - \mathbb{E} \left[ \tilde{h}^* (\bm{Q}(t+1)) \mid \bm{Q}(t) \right] \right) \mid \bm{Q}(t_s-1)=\hat{\bm{Q}}  \right] \nonumber \\
= & \mathbb{E}_{\tilde{\pi}^* + \pi_0} \left[ \tilde{h}^* \left(\bm{Q}(t_s)\right) - \tilde{h}^* \left( \bm{Q} (t_e + 1) \right) \mid \bm{Q}(t_s-1)=\hat{\bm{Q}}  \right] . \label{Eqn:DiffH}
\end{align}

On the other hand, for each $(\bm{Q}, \bm{Q}') \in \mathcal{S}\times \mathcal{S}$, by following the analysis for the proof of Proposition 5.5.1 in \cite{bertsekas2017dynamic}, we can upper bound $\tilde{h}^* \left( \bm{Q}' \right) - \tilde{h}^* \left( \bm{Q} \right)$ as follows,
\begin{equation} \label{Eqn:BoundH}
\tilde{h}^* \left( \bm{Q}' \right) - \tilde{h}^* \left( \bm{Q} \right) = \min_{\tilde{\pi}} \tilde{\mathbb{E}}_{\tilde{\pi}} \left[ \sum_{t=1}^{T_{\bm{Q} \to \bm{Q}'}} \left[ \sum_{i}Q_i(t) - \tilde{\rho}^* \right] \right] 
\stackrel{(a)}{\leqslant}
\min_{\tilde{\pi}} \tilde{\mathbb{E}}_{\tilde{\pi}} \left[ T_{\bm{Q} \to \bm{Q}'} \right] \cdot DU \stackrel{(b)}{\leqslant} cDU^{1+\gamma}, 
\end{equation}
where (a) comes from the fact that $\sum_{i}Q_i(t) \leqslant DU$ for  $\bm{Q}\in \tilde{\mathcal{S}}$, and the second inequality (b) holds under Assumption~\ref{Asp:PolyTrans}. 
By inserting \eqref{Eqn:BoundH} into \eqref{Eqn:DiffH}, we complete the proof.

\end{proof}


\subsection{Proof of Lemma \ref{Lem:BoundROut}} \label{App:BoundROut}

\begin{proof}
Fix an $\hat{\bm{Q}} \in \mathcal{S}^{\mathrm{in}}_{\mathrm{bd}}$, we now turn to analyze $R^{\mathrm{out}}_{k, 1} \left( \hat{\bm{Q}} \right)$. From the definition of $\mathcal{S}^{\mathrm{in}}$, we have $\hat{Q}_{\max} \leqslant U-W$. We denote the length of this "out process" unit as $\tau$. We then have 

\begin{equation} \label{Eqn:BoundROut}
\mathbb{E}_{\tilde{\pi}^* + \pi_0} \left[ R^{\mathrm{out}}_{k, 1} \left( \hat{\bm{Q}} \right) \right] \leqslant \mathbb{E}_{\tilde{\pi}^* + \pi_0} \left[ D \cdot \left( U-W + W \tau \right) \cdot \tau \right] = D(U-W) \mathbb{E}_{\tilde{\pi}^* + \pi_0} \left[ \tau \right] + DW \mathbb{E}_{\tilde{\pi}^* + \pi_0} \left[ \tau^2 \right] ,
\end{equation}
where the inequality comes from the fact that the backlog of each queue can be no larger than $\left( U - W + W \tau \right)$ during this "out process" unit. We upper bound $\mathbb{E}_{\tilde{\pi}^* + \pi_0}[\tau]$ and $\mathbb{E}_{\tilde{\pi}^* + \pi_0}[\tau^2]$ separately.

\medskip
\noindent{\bf Upper bound of $\mathbb{E}_{\tilde{\pi}^* + \pi_0} \left[ \tau \right]$:} We use the following lemma from \cite{bremaud1999lyapunov}.

\begin{lemma} [Theorem 1.1 in Chapter 5 of \cite{bremaud1999lyapunov}] \label{Lem:BoundTau}
For an irreducible Markov chain with countable state space, if there exists $\epsilon > 0$, a finite state set $\mathcal{F}$ and a nonnegative Lyapunov function $\Phi \left( \cdot \right)$, such that 
$\mathbb{E} \left[\Phi \left( \bm{Q}(t+1) \right) \mid \bm{Q}(t)=\bm{Q} \right] < \infty$ for each $\bm{Q} \notin \mathcal{F}$ and $\mathbb{E} \left[\Phi \left( \bm{Q}(t+1) \right) - \Phi \left( \bm{Q}(t) \right) \mid \bm{Q}(t)=\bm{Q} \right] \leqslant -\epsilon$ for each $\bm{Q} \notin \mathcal{F}$, we then have that for each $\bm{Q} \notin \mathcal{F}$, $\mathbb{E} \left[ T_{\bm{Q} \to \mathcal{F}} \right] \leqslant \Phi \left( \bm{Q} \right) / \epsilon$, where $T_{\bm{Q} \to \mathcal{F}}$ is the first hitting time of the set $\mathcal{F}$ when starting from state $\bm{Q}$.
\end{lemma}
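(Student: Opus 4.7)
The plan is to prove this by the classical Foster--Lyapunov supermartingale argument. Fix $\bm{Q}\notin\mathcal{F}$ and let $T \triangleq T_{\bm{Q}\to\mathcal{F}}$ denote the first hitting time of $\mathcal{F}$ starting from $\bm{Q}(0)=\bm{Q}$. The idea is to absorb the negative drift into an additive compensator so that the process becomes a nonnegative supermartingale, then apply optional stopping and take limits.

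Concretely, I will define the stopped process
$$
M_t \triangleq \Phi\bigl(\bm{Q}(t\wedge T)\bigr) + \epsilon\cdot (t\wedge T),\qquad t\ge 0,
$$
and verify that $\{M_t\}$ is a nonnegative supermartingale with respect to the natural filtration of $\bm{Q}(\cdot)$. Nonnegativity is immediate since $\Phi\ge 0$ and $\epsilon,t\wedge T\ge 0$. For the supermartingale property, I would split on whether $t<T$ or $t\ge T$: on $\{t\ge T\}$ the process is already constant, while on $\{t<T\}$ we have $\bm{Q}(t)\notin\mathcal{F}$, so the drift hypothesis gives $\mathbb{E}[\Phi(\bm{Q}(t+1))-\Phi(\bm{Q}(t))\mid \mathcal{F}_t]\le -\epsilon$, and hence $\mathbb{E}[M_{t+1}\mid\mathcal{F}_t]\le M_t$. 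The finiteness assumption $\mathbb{E}[\Phi(\bm{Q}(t+1))\mid\bm{Q}(t)=\bm{Q}]<\infty$ off $\mathcal{F}$ ensures the conditional expectations involved are well defined.

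From the supermartingale property and $M_0=\Phi(\bm{Q})$, it follows that for every finite $t$,
$$
\epsilon\cdot\mathbb{E}[t\wedge T] \;\le\; \mathbb{E}[M_t] - \mathbb{E}\bigl[\Phi(\bm{Q}(t\wedge T))\bigr] \;\le\; \Phi(\bm{Q}),
$$
using $\Phi\ge 0$ in the last step. Letting $t\to\infty$, the monotone convergence theorem yields $\mathbb{E}[T]\le \Phi(\bm{Q})/\epsilon$, which is the desired bound. (Irreducibility plus the negative drift outside the finite set $\mathcal{F}$ guarantees $T<\infty$ almost surely, so the limit is meaningful; in any case the bound on $\mathbb{E}[t\wedge T]$ already implies $\mathbb{E}[T]<\infty$ by Fatou.)

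The main obstacle is the careful handling of the optional stopping and the limit: one must ensure that the supermartingale inequality is applied only at bounded stopping times $t\wedge T$ rather than directly at $T$ (which may be unbounded), and that dropping the term $\mathbb{E}[\Phi(\bm{Q}(t\wedge T))]$ uses only $\Phi\ge 0$ rather than any integrability of $\Phi(\bm{Q}(T))$ itself. Once these points are treated cleanly, the rest is essentially bookkeeping.
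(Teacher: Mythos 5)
Your proof is correct: the compensated process $M_t = \Phi(\bm{Q}(t\wedge T)) + \epsilon\,(t\wedge T)$ is a nonnegative supermartingale under the drift hypothesis, and bounding $\epsilon\,\mathbb{E}[t\wedge T]$ by $M_0 = \Phi(\bm{Q})$ before passing to the limit via monotone convergence handles the unboundedness of $T$ properly (and, as you note, itself yields $\mathbb{E}[T]<\infty$ without invoking irreducibility). The paper cites this result from Br\'emaud without reproving it, and the argument in that reference is essentially the same Foster--Lyapunov drift/supermartingale computation, merely phrased by telescoping $\mathbb{E}[\Phi(\bm{Q}((t+1)\wedge T))] \leqslant \mathbb{E}[\Phi(\bm{Q}(t\wedge T))] - \epsilon \Pr\{T>t\}$ rather than through an explicit compensator, so your approach matches it.
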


We can interpret $\tau$ as the hitting time of the set $\mathcal{S}^{\mathrm{in}}$ when the Markov chain starts from state $\bm{Q} \in \mathcal{S}^{\mathrm{out}}$ . From the selection of $U$ in Algorithm $\ref{Alg:RL-QN}$, each state in $\mathcal{S}^{\mathrm{out}}$ has negative Lyapunov drift regarding $\Phi_0$. Also, from Assumption~\ref{Asp:KnownPolicy}, the value of Lyapunov function $\Phi_0$ for the beginning state of the "out process" unit can be upper bounded as $a (U-W+W)^{\alpha} = a U^{\alpha}$. By Lemma~\ref{Lem:BoundTau}, it follows that
\begin{equation} \label{Eqn:BoundTau}
\mathbb{E}_{\tilde{\pi}^* + \pi_0} \left[ \tau  \right] \leqslant \frac{a U^{\alpha}}{\epsilon_0} \triangleq T_0.
\end{equation}

\medskip
\noindent{\bf Upper bound of $\mathbb{E}_{\tilde{\pi}^* + \pi_0} \left[ \tau^2 \right]$:} We use the following lemma from \cite{hou2012homogeneous}.

\begin{lemma} [Theorem 6.3.4 in \cite{hou2012homogeneous}] \label{Lem:Moments}
In a Markov chain with a countable state space, for a nonempty state set $\mathcal{B}$, and a state $s$, if there exists a constant $C$ such that $\mathbb{E} \left[ T_{s \to \mathcal{B}} \right] \leqslant C \cdot F^*_{s, \mathcal{B}}$, 
then for $p \geqslant 1$, we have $\mathbb{E} \left[ T_{s \to \mathcal{B}}^p \right] \leqslant p! \cdot C^p \cdot F^*_{s, \mathcal{B}}$, where $F^*_{s, \mathcal{B}} \triangleq \Pr\{ s_1 \in \mathcal{B} \mid s_0 = s \} + \sum_{n=2}^{\infty} \Pr\{ s_1 \notin \mathcal{B}, \cdots, s_{n-1} \notin \mathcal{B}, s_n \in \mathcal{B} \mid s_0 = s \}$.
\end{lemma}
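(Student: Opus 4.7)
The statement I am planning to prove is Lemma \ref{Lem:Moments}, a classical bound on the higher moments of a Markov chain hitting time in terms of its first moment, attributed to \cite{hou2012homogeneous}. Since the paper invokes this tool without proof, my plan is to sketch an independent proof. The overall strategy is to leverage the strong Markov property to convert the first-moment hypothesis into a geometric tail bound for $T := T_{s \to \mathcal{B}}$, and then extract the $p$-th moment from the tail. Noting that $F^*_{s,\mathcal{B}} = \Pr_s\{T < \infty\}$, the hypothesis is equivalent to $\mathbb{E}_s[T \mid T < \infty] \leq C$ (when $F^* > 0$), so I will work conditionally on $\{T < \infty\}$ throughout and multiply by $F^*_{s,\mathcal{B}}$ at the very end.

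The key step is a restart argument. By Markov's inequality, $\Pr_s\{T > 2C \mid T < \infty\} \leq 1/2$. Invoking the strong Markov property at time $\lceil 2C \rceil$, the event $\{T > 2C\}$ triggers a restart from some state $s'$ which, by the uniformity of the hypothesis in the context of its intended use, still satisfies $\mathbb{E}_{s'}[T \mid T < \infty] \leq C$. Iterating yields the geometric tail $\Pr_s\{T > 2Cn \mid T < \infty\} \leq 2^{-n}$ for every integer $n \geq 0$. Substituting into the tail-sum identity
\[
\mathbb{E}_s[T^p \mid T < \infty] = \sum_{k=0}^{\infty} \bigl[(k+1)^p - k^p\bigr] \Pr_s\{T > k \mid T < \infty\}
\]
and evaluating the resulting sum (which is controlled by the polylogarithm at $1/2$) gives a bound of order $p! \cdot C^p$ times $F^*_{s,\mathcal{B}}$, yielding the claimed factorial dependence on $p$.

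The main obstacle is recovering the precise constant $p! C^p$ in the statement: the crude geometric-tail route above gives a constant like $(2C)^p$ up to sub-leading factors, not the sharp $C^p$. To close this gap I would switch to an induction on $p$ via the generating function $\phi_s(z) := \mathbb{E}_s[z^T \mathbf{1}\{T<\infty\}]$, whose $p$-th derivative at $z=1$ is the $p$-th factorial moment $\mathbb{E}_s[T(T-1)\cdots(T-p+1)\mathbf{1}\{T<\infty\}]$. Differentiating the one-step decomposition $\phi_s(z) = z\sum_{s' \in \mathcal{B}} p(s,s') + z\sum_{s' \notin \mathcal{B}} p(s,s')\phi_{s'}(z)$ exactly $p$ times yields a linear recursion of the form $(I - P_{\mathcal{B}^c})\vec{\phi^{(p)}}(1) = p \cdot P_{\mathcal{B}^c} \vec{\phi^{(p-1)}}(1)$, which can be solved inductively using the Green's function $(I-P_{\mathcal{B}^c})^{-1}$ (whose row sums equal $\mathbb{E}_\cdot[T\mathbf{1}\{T<\infty\}]$, i.e., are bounded by $C\cdot F^*$ by hypothesis). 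Each induction step supplies a clean factor of $p \cdot C$, so that $p$ iterations produce the sharp $p!\,C^p$. The delicate point is that this induction requires the first-moment hypothesis to hold \emph{uniformly} over all states reachable from $s$ before hitting $\mathcal{B}$, rather than only at the initial $s$; reconciling this uniformity with the statement as written is the step that I expect to require the most care.
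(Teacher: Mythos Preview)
The paper does not supply a proof of this lemma at all: it is quoted verbatim as Theorem~6.3.4 of \cite{hou2012homogeneous} and invoked as a black box inside the proof of Lemma~\ref{Lem:BoundROut}. So there is no in-paper argument to compare against, and your proposal to furnish an independent proof already goes beyond what the authors attempt.

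On the substance, your generating-function induction via the Green's operator $(I-P_{\mathcal{B}^c})^{-1}$ is the right mechanism for recovering the sharp constant $p!\,C^p$, and you have correctly put your finger on the one real issue. The lemma as paraphrased in the paper states the hypothesis $\mathbb{E}[T_{s\to\mathcal{B}}]\leq C\cdot F^*_{s,\mathcal{B}}$ for a \emph{single} initial state $s$, but both your restart argument and your induction require it to hold for \emph{every} state $s'$ reachable from $s$ before absorption in $\mathcal{B}$. Without that uniformity the conclusion is false in general (a random variable with finite mean need have no higher moments, and the Markov structure alone cannot compensate if nothing is assumed at intermediate states), so the original theorem in \cite{hou2012homogeneous} must carry the uniform hypothesis and the paper's paraphrase has simply suppressed it. Fortunately the only place the paper applies the lemma is with $F^*\equiv 1$ (positive recurrence) and with the bound $\mathbb{E}_{\bm{Q}}[\tau]\leq T_0$ supplied by Lemma~\ref{Lem:BoundTau} uniformly over all $\bm{Q}\in\mathcal{S}^{\mathrm{out}}$, so the application is sound even though the stated lemma is loose. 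For a self-contained write-up, just add the uniform hypothesis explicitly and your induction goes through.
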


In our case, under $\tilde{\pi}^* + \pi_0$, the Markov chain is positive recurrent. Thus, for each $\bm{Q} \in \mathcal{S}^{\mathrm{out}}$ and each $\bm{Q}' \in \mathcal{S}^{\mathrm{in}}$, we have $F^*_{\bm{Q}, \bm{Q}'} = 1$. Note that $F^*_{\bm{Q}, \mathcal{S}^{\mathrm{in}}}$ is upper bounded by $1$ since it is a probability. Also, $F^*_{\bm{Q}, \mathcal{S}^{\mathrm{in}}} \geqslant F^*_{\bm{Q}, \bm{Q}'}$, 
we thus have $F^*_{\bm{Q}, \mathcal{S}^{\mathrm{in}}} = 1$. Therefore, $\mathbb{E} \left[ \tau  \right] \leqslant T_0 = T_0 \cdot F^*_{\bm{Q}, \mathcal{S}^{\mathrm{in}}}$. By applying Lemma \ref{Lem:Moments}, we have that

\begin{equation} \label{Eqn:BoundTau^2}
\mathbb{E}_{\tilde{\pi}^* + \pi_0} \left[ \tau^2 \right] \leqslant 2\cdot T_0^2 \cdot F^*_{\bm{Q}, \mathcal{S}^{\mathrm{in}}} = 2 T_0^2.
\end{equation}

Inserting \eqref{Eqn:BoundTau} and \eqref{Eqn:BoundTau^2} into \eqref{Eqn:BoundROut} gives
$$
\mathbb{E}_{\tilde{\pi}^* + \pi_0} \left[ R^{\mathrm{out}}_{k, 1} \left( \hat{\bm{Q}} \right) \right] \leqslant D(U-W)T_0 + 2DWT_0^2 \leqslant 2D(U-W+W)T_0^2 = \frac{2a^2DU^{2\alpha+1}}{\epsilon_0^2} ,
$$
which completes the proof.

\end{proof}


\section{Proof of Lemma \ref{Lem:pBd}} \label{App:pBd}

\begin{proof}
For the ease of exposition, we first define a ``linear'' Lyapunov function $\Phi'(\cdot)$ as
$$
\Phi' \left( \bm{Q} \right) = 
\begin{cases}
\left[ \tilde{\Phi}^* \left( \bm{Q} \right) \right]^{\frac{1}{\beta}}, & \text{ for } \bm{Q} \in \mathcal{S}^{\mathrm{in}} \\
U-W, & \text{ for } \bm{Q} \in \mathcal{S}^{\mathrm{out}}
\end{cases}
$$
which has the following properties (see Appendix \ref{App:LinearPhi} for the proof).

\begin{lemma} \label{Lem:LinearPhi} Under Assumption~\ref{Asp:DriftOptimalPolicy},
there exist constants $V, \epsilon' > 0$ such that for each $U > W$, the following properties hold:
\begin{enumerate}
\item For each $\bm{Q} \in \mathcal{S}$, $\max_{\bm{Q}' \in \mathcal{R} \left( \bm{Q} \right)} \abs{\Phi' \left( \bm{Q}' \right) - \Phi' \left( \bm{Q} \right)} \leqslant V$;
\item For each $\bm{Q} \in \mathcal{S}^{\mathrm{in}}$ such that $Q_{\max} \geqslant \tilde{B}^*$, we have
$$\mathbb{E}_{\tilde{\pi}^* + \pi_0} \left[\Phi' \left( \bm{Q}(t+1) \right) - \Phi' \left( \bm{Q}(t) \right) \mid \bm{Q}(t) = \bm{Q} \right] \leqslant -\epsilon'.$$

\end{enumerate}

\end{lemma}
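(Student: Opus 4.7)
The plan is to prove the two properties separately using casework on whether $\bm{Q}$ and $\bm{Q}' \in \mathcal{R}(\bm{Q})$ lie in $\mathcal{S}^{\mathrm{in}}$ or $\mathcal{S}^{\mathrm{out}}$, and to exploit the concavity of $x \mapsto x^{1/\beta}$ for $\beta \geq 1$. A crucial preliminary observation is that $\bm{Q} \in \mathcal{S}^{\mathrm{in}}$ forces $Q_{\max} \leq U - W$ (combining the definition of $\mathcal{S}^{\mathrm{in}}$ with Assumption~\ref{Asp:DriftOptimalPolicy}(i)); hence any $\bm{Q}' \in \mathcal{R}(\bm{Q})$ satisfies $Q_{\max}' \leq U$, so $\bm{Q}' \in \tilde{\mathcal{S}}$ and $\tilde{\Phi}^*(\bm{Q}')$ is well-defined. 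Moreover, no packet drops occur when transitioning out of $\bm{Q} \in \mathcal{S}^{\mathrm{in}}$, so the dynamics of $\mathcal{M}$ and $\tilde{\mathcal{M}}$ coincide at $\bm{Q}$, and Assumption~\ref{Asp:DriftOptimalPolicy}(iii) may be applied directly to the original system.

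For property 1 (bounded increment) there are three cases. If both $\bm{Q}, \bm{Q}'$ lie in $\mathcal{S}^{\mathrm{out}}$ the difference vanishes. If both lie in $\mathcal{S}^{\mathrm{in}}$, I would apply the mean value theorem to $x \mapsto x^{1/\beta}$ to write the difference as $(1/\beta)\,\xi^{1/\beta - 1}\,|\tilde{\Phi}^*(\bm{Q}') - \tilde{\Phi}^*(\bm{Q})|$, then bound $\xi^{1/\beta - 1}$ by $\min(Q_{\max}, Q_{\max}')^{1-\beta}$ using Assumption~\ref{Asp:DriftOptimalPolicy}(i), and combine with Assumption~\ref{Asp:DriftOptimalPolicy}(ii) to get an upper bound of order $(2b_2/\beta)(\max/\min)^{\beta-1}$. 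Since $|Q_{\max} - Q_{\max}'| \leq W$, this ratio is at most $(1+W)^{\beta-1}$ whenever $\min(Q_{\max}, Q_{\max}') \geq 1$, a $U$-free constant; the corner case $\min = 0$ is handled directly using $[\tilde{\Phi}^*(\bm{Q})]^{1/\beta} \leq b_1^{1/\beta} W$. In the mixed case, say $\bm{Q} \in \mathcal{S}^{\mathrm{in}}$ and $\bm{Q}' \in \mathcal{S}^{\mathrm{out}}$, the difference equals $(U-W) - [\tilde{\Phi}^*(\bm{Q})]^{1/\beta}$; using Assumption~\ref{Asp:DriftOptimalPolicy}(ii) together with $\tilde{\Phi}^*(\bm{Q}') > (U-W)^\beta$ gives $\tilde{\Phi}^*(\bm{Q}) > (U-W)^\beta - 2b_2 U^{\beta-1}$, and then the elementary inequality $(1-x)^{1/\beta} \geq 1-x$ for $x \in [0,1]$, $\beta \geq 1$, yields $[\tilde{\Phi}^*(\bm{Q})]^{1/\beta} \geq U - W - 2^\beta b_2$ whenever $U \geq 2W$ (small $U$ is dismissed via the trivial bound $U-W$). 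Setting $V$ to be the maximum of the constants from each case completes property 1.

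For property 2 (negative drift), I would first note that $\Phi'(\bm{Q}') \leq [\tilde{\Phi}^*(\bm{Q}')]^{1/\beta}$ for every $\bm{Q}' \in \tilde{\mathcal{S}}$ (equality on $\mathcal{S}^{\mathrm{in}}$, strict on $\mathcal{S}^{\mathrm{out}}$ where $\Phi' = U-W$ but $[\tilde{\Phi}^*]^{1/\beta} > U-W$), so
\begin{equation*}
\mathbb{E}_{\tilde{\pi}^* + \pi_0}\bigl[\Phi'(\bm{Q}(t+1)) \mid \bm{Q}(t) = \bm{Q}\bigr] \leq \mathbb{E}_{\tilde{\pi}^* + \pi_0}\bigl[[\tilde{\Phi}^*(\bm{Q}(t+1))]^{1/\beta} \mid \bm{Q}(t) = \bm{Q}\bigr].
\end{equation*}
Applying Jensen's inequality (concavity of $x^{1/\beta}$ for $\beta \geq 1$) together with Assumption~\ref{Asp:DriftOptimalPolicy}(iii), the right-hand side is at most $[\tilde{\Phi}^*(\bm{Q}) - \tilde{\epsilon}^* Q_{\max}^{\beta-1}]^{1/\beta}$. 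Subtracting $\Phi'(\bm{Q}) = [\tilde{\Phi}^*(\bm{Q})]^{1/\beta}$ and invoking the MVT (using that $x^{1/\beta - 1}$ is decreasing for $\beta \geq 1$) together with the upper bound $\tilde{\Phi}^*(\bm{Q}) \leq b_1 Q_{\max}^\beta$ produces a drift of at most $-\tilde{\epsilon}^*/(\beta b_1^{1-1/\beta})$, so I may take $\epsilon' = \tilde{\epsilon}^*/(\beta b_1^{1-1/\beta})$.

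The main obstacle will be the mixed case of property 1: a priori, the gap $(U-W) - [\tilde{\Phi}^*(\bm{Q})]^{1/\beta}$ at the $\mathcal{S}^{\mathrm{in}}/\mathcal{S}^{\mathrm{out}}$ boundary could grow with $U$, and showing that it stays uniformly bounded requires carefully combining the one-step smoothness bound on $\tilde{\Phi}^*$ with the boundary constraint $\tilde{\Phi}^*(\bm{Q}') > (U-W)^\beta$, then extracting a $U$-free constant via the root-linearization inequality $(1-x)^{1/\beta} \geq 1-x$.
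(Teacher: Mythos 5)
Your proposal is correct and follows essentially the same route as the paper's own proof: the identical case split on $\mathcal{S}^{\mathrm{in}}/\mathcal{S}^{\mathrm{out}}$ membership for the increment bound (including the $Q_{\max}=0$ corner handled via $b_1^{1/\beta}W$ and the ratio bound $(1+W)^{\beta-1}$ from Assumption~\ref{Asp:DriftOptimalPolicy}), and the same Jensen-plus-concavity drift argument yielding the identical constant $\epsilon' = \tilde{\epsilon}^*/(\beta\, b_1^{1-1/\beta})$. The only local deviation is the mixed boundary case, where the paper simply observes $U-W-\left[\tilde{\Phi}^*\left(\bm{Q}\right)\right]^{1/\beta} \leqslant \left[\tilde{\Phi}^*\left(\bm{Q}'\right)\right]^{1/\beta} - \left[\tilde{\Phi}^*\left(\bm{Q}\right)\right]^{1/\beta}$ (since $\tilde{\Phi}^*\left(\bm{Q}'\right) > (U-W)^\beta$ on $\mathcal{S}^{\mathrm{out}}$) and reuses the in--in bound, whereas your direct estimate via $(1-x)^{1/\beta} \geqslant 1-x$ reaches the same $U$-free constant with slightly more computation but is equally valid.
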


With Lemma \ref{Lem:LinearPhi}, we obtain the following properties on the distribution of $\Phi'(\cdot)$ (see Appendix \ref{App:TailProb} for the proof).

\begin{lemma} \label{Lem:TailProb} Suppose Assumption~\ref{Asp:DriftOptimalPolicy} holds.
Define $m^* \triangleq \left\lfloor (U+V-W-b_1^{1/\beta}\tilde{B}^*)/(2V) \right\rfloor$, we then have that for $m = 1, 2, \cdots , m^*$,
\begin{align}
& p^{\tilde{\pi}^* + \pi_0} \left( \left\{ \bm{Q} \in \mathcal{S}^{\mathrm{in}} : \Phi' \left( \bm{Q} \right) > U-W-(2m-1)V \right\} \right) \nonumber \\
\leqslant & \frac{V}{V+\tilde{\epsilon}^*} \cdot p^{\tilde{\pi}^* + \pi_0} \left( \left\{ \bm{Q} \in \mathcal{S}^{\mathrm{in}} : \Phi' \left( \bm{Q} \right) > U-W-(2m+1)V \right\} \right) \nonumber .
\end{align} 
\end{lemma}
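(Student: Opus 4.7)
The bound is a geometric-tail estimate for the stationary law of $\Phi'$ in the spirit of Lemma~1 of \cite{bertsimas1998geometric}, and I plan to adapt that level-crossing / Lyapunov-ramp argument to the present restricted setting. First, I verify that the negative-drift condition of Lemma~\ref{Lem:LinearPhi} part 2 applies uniformly on the tail set $B_m^{\mathrm{in}} \triangleq \{\bm{Q} \in \mathcal{S}^{\mathrm{in}} : \Phi'(\bm{Q}) > s_m\}$, where $s_m \triangleq U - W - (2m-1)V$. Assumption~\ref{Asp:DriftOptimalPolicy} part 1 gives $\Phi'(\bm{Q}) \leqslant b_1^{1/\beta} Q_{\max}$, and the definition of $m^*$ is tuned exactly so that $s_m \geqslant b_1^{1/\beta}\tilde{B}^*$ for every $m \leqslant m^*$. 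Hence any $\bm{Q} \in B_m^{\mathrm{in}}$ automatically has $Q_{\max} \geqslant \tilde{B}^*$, and the drift bound kicks in.

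The main tool is the bounded ramp test function $f(\bm{Q}) \triangleq \min\bigl(2V,\,(\Phi'(\bm{Q}) - s_m + 2V)_+\bigr)$, which vanishes on $\{\Phi' \leqslant s_m - 2V\}$, rises linearly with slope $1$ across the band $\tilde{B}_m \setminus B_m$, and saturates at $2V$ on $B_m$. Boundedness makes stationarity under $\tilde{\pi}^* + \pi_0$ yield $\mathbb{E}[f(\bm{Q}(t+1)) - f(\bm{Q}(t))] = 0$. I would then decompose this zero total drift by the region of $\bm{Q}(t)$: (i) on $B_m^{\mathrm{in}}$ the saturation of $f$ combined with the $\leqslant -\tilde{\epsilon}^*$ drift of $\Phi'$ and the bounded-increment property (Lemma~\ref{Lem:LinearPhi} part 1) yields a contribution of at most $-\tilde{\epsilon}^* \, p^{\tilde{\pi}^*+\pi_0}(B_m^{\mathrm{in}})$; (ii) on $\mathcal{S}^{\mathrm{out}}$, where $\Phi' \equiv U - W$, $f$ is already saturated and $\Phi'(\bm{Q}(t+1)) \leqslant U - W$ forces a non-positive contribution; (iii) on the band $\tilde{B}_m^{\mathrm{in}} \setminus B_m^{\mathrm{in}}$ the $1$-Lipschitz dependence of $f$ on $\Phi'$ gives a contribution bounded by $V\bigl(p^{\tilde{\pi}^*+\pi_0}(\tilde{B}_m^{\mathrm{in}}) - p^{\tilde{\pi}^*+\pi_0}(B_m^{\mathrm{in}})\bigr)$. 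The $2V$ width of the ramp (matching the gap between the two thresholds) is chosen so that any upward leakage from the lower region $\{\Phi' \leqslant s_m - 2V\}$ cannot cross the saturation plateau in one step and can be absorbed into the band estimate (iii).

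Assembling the pieces and using that the total drift vanishes yields
\[
\tilde{\epsilon}^* \, p^{\tilde{\pi}^*+\pi_0}(B_m^{\mathrm{in}}) \;\leqslant\; V\bigl( p^{\tilde{\pi}^*+\pi_0}(\tilde{B}_m^{\mathrm{in}}) - p^{\tilde{\pi}^*+\pi_0}(B_m^{\mathrm{in}}) \bigr),
\]
which rearranges to the claim with ratio $V/(V+\tilde{\epsilon}^*)$.

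The hard part will be controlling the interaction between $\mathcal{S}^{\mathrm{in}}$ and $\mathcal{S}^{\mathrm{out}}$ in the drift accounting: the negative-drift bound of Lemma~\ref{Lem:LinearPhi} part 2 is only available on $\mathcal{S}^{\mathrm{in}}$, and a naive flux-balance across level $s_m$ would pick up an uncontrolled term from $\mathcal{S}^{\mathrm{out}} \subset B_m$. The saving grace is that $\Phi' \equiv U - W$ on $\mathcal{S}^{\mathrm{out}}$, which lies strictly above $s_m$, so $f$ is already at its saturation value there and the contribution from $\mathcal{S}^{\mathrm{out}}$ to the total drift of $f$ can only be non-positive; this both explains why the lemma restricts both sides to $\mathcal{S}^{\mathrm{in}}$ and lets the clean ratio $V/(V+\tilde{\epsilon}^*)$ survive. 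A secondary technicality is passing from the drift of $\Phi'$ to the drift of $f$ on $B_m^{\mathrm{in}}$: any saturation error of $f$ (occurring when $\Phi'(\bm{Q}(t+1))$ accidentally undershoots $s_m$) is bounded by $V$ per unit of probability mass via Lemma~\ref{Lem:LinearPhi} part 1, so it is absorbed into the band term and does not spoil the sharp constants.
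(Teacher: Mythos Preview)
Your decomposition and the zero-stationary-drift strategy are the right framework, but step (i) has a genuine gap. Your ramp $f(\bm{Q}) = \min\bigl(2V,(\Phi'(\bm{Q}) - s_m + 2V)_+\bigr)$ is saturated at its \emph{maximum} value $2V$ precisely on the tail set $B_m^{\mathrm{in}} = \{\bm{Q}\in\mathcal{S}^{\mathrm{in}}:\Phi'(\bm{Q})>s_m\}$. From any state with $\Phi'(\bm{Q}(t)) > s_m + V$ (and such states exist, since $\Phi'$ ranges up to $U-W = s_m + (2m-1)V$), the bounded-increment property forces $\Phi'(\bm{Q}(t+1)) > s_m$, hence $f(\bm{Q}(t+1)) = 2V = f(\bm{Q}(t))$ and the conditional drift of $f$ is exactly zero, \emph{regardless} of the negative drift of $\Phi'$. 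The contribution from $B_m^{\mathrm{in}}$ is therefore only $\leqslant 0$, not $\leqslant -\tilde{\epsilon}^*\,p^{\tilde\pi^*+\pi_0}(B_m^{\mathrm{in}})$, and your assembled inequality degenerates to the trivial $0 \leqslant V\cdot p^{\tilde\pi^*+\pi_0}(\tilde{B}_m^{\mathrm{in}}\setminus B_m^{\mathrm{in}})$. The ``saturation error'' you flag in the last paragraph concerns undershoot of $s_m$, but the actual obstruction is the opposite: when the chain stays above $s_m$, the upper saturation erases the drift entirely.

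The remedy --- and this is what the paper does --- is to truncate only from \emph{below}: take $\Phi_m(\bm{Q}) \triangleq \max\{U-W-2mV,\,\Phi'(\bm{Q})\}$. This is still bounded (since $\Phi' \leqslant U-W$ everywhere by construction), so the zero-stationary-drift argument applies. On the tail $\{\Phi' > s_m\}$, bounded increments give $\Phi'(\bm{Q}(t+1)) > s_m - V = U-W-2mV$, so $\Phi_m = \Phi'$ at both times and the conditional drift of $\Phi_m$ equals that of $\Phi'$, yielding the needed $\leqslant -\epsilon'$. On $\mathcal{S}^{\mathrm{out}}$, $\Phi_m$ sits at its maximum $U-W$, so the drift there is non-positive --- exactly your observation (ii). Your treatment of the band and of the flat region below carries over verbatim. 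The essential point is that the lower truncation keeps $\Phi_m$ linear in $\Phi'$ on the tail, so the negative drift survives intact.
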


Lemma \ref{Lem:TailProb} implies that
\begin{align*}
&p^{\tilde{\pi}^* + \pi_0} \left( \left\{\bm{Q} \in \mathcal{S}^{\mathrm{in}} : \Phi' \left( \bm{Q} \right) > U-W-V \right\} \right) \\
\leqslant & \left( \frac{V}{V+\epsilon^*} \right)^{m^*} \cdot p^{\tilde{\pi}^* + \pi_0} \left( \left\{ \bm{Q} \in \mathcal{S}^{\mathrm{in}} : \Phi' \left( \bm{Q} \right) > U-W-(2m^*+1)V \right\} \right) \leqslant \left( \frac{V}{V+\epsilon^*} \right)^{m^*}.
\end{align*}

Note that our goal is to obtain the stationary probability of the state being in $\mathcal{S}^{\mathrm{in}}_{\mathrm{bd}}$.
Recall that $\mathcal{S}^{\mathrm{in}} \triangleq \big\{ \bm{Q} : \tilde{\Phi}^* \left( \bm{Q} \right) \leqslant (U-W)^\beta \big\}$
and
$\mathcal{S}^{\mathrm{in}}_{\mathrm{bd}} \triangleq \left\{ \bm{Q} \in \mathcal{S}^{\mathrm{in}}: \mathcal{R} \left( \bm{Q} \right) \cap \mathcal{S}^{\mathrm{out}} \neq \emptyset \right\}$.
Therefore,  
$$
\max_{\bm{Q}' \in \mathcal{R}(\bm{Q})} \tilde{\Phi}^* \left( \bm{Q}' \right) > (U-W)^{\beta},\qquad \forall \bm{Q} \in \mathcal{S}^{\mathrm{in}}_{\mathrm{bd}}.
$$
By the definition of $\Phi'$ and Lemma~\ref{Lem:LinearPhi}, we have $\Phi' \left( \bm{Q} \right) > U-W-V$.  Therefore,
$$
\mathcal{S}^{\mathrm{in}}_{\mathrm{bd}} \subseteq \left\{ \bm{Q} \in \mathcal{S}^{\mathrm{in}} : \Phi' \left( \bm{Q} \right) > U-W-V \right\}
$$ 
and the following inequality holds
$$
p^{\tilde{\pi}^* + \pi_0} \left( \mathcal{S}^{\mathrm{in}}_{\mathrm{bd}} \right) \leqslant p^{\tilde{\pi}^* + \pi_0} \left(  \left\{ \bm{Q} \in \mathcal{S}^{\mathrm{in}} : \Phi' \left( \bm{Q} \right) > U-W-V \right\} \right) \leqslant \left( \frac{V}{V+\epsilon^*} \right)^{m^*} .
$$

By taking natural logarithm on both sides and applying the fact that $\log(1+x) \geqslant x/(1+x)$ for $x > 0$, we have
$$
\log p^{\tilde{\pi}^* + \pi_0} \left( \mathcal{S}^{\mathrm{in}}_{\mathrm{bd}} \right) \leqslant -m^* \log \left( 1+\frac{\tilde{\epsilon}^*}{V} \right) \leqslant -m^* \cdot \frac{\tilde{\epsilon}^*}{V+\tilde{\epsilon}^*} = - \left\lfloor \frac{U+V-W-b_1^{1/\beta}\tilde{B}^*}{2V} \right\rfloor \cdot \frac{\tilde{\epsilon}^*}{V+\tilde{\epsilon}^*} = \mathcal{O} \left( -U \right) ,
$$
which completes the proof.

\end{proof}

\subsection{Proof of Lemma \ref{Lem:LinearPhi}} \label{App:LinearPhi}

\begin{proof}
Consider a fixed $\bm{Q} \in \mathcal{S}^{\mathrm{in}}.$
From the definition of $\mathcal{S}^{\mathrm{in}}$ and $W$,  we have  $\max_{\bm{Q}' \in \mathcal{R} \left( \bm{Q} \right)} {Q}'_{\max} \leqslant U.$ Hence  $\bm{Q}' \in \tilde{\mathcal{S}}$ for each $\bm{Q}' \in \mathcal{R} \left( \bm{Q} \right)$. By the definition of function $\Phi'(\cdot),$ we have
\begin{align}
& \mathbb{E}_{\tilde{\pi}^* + \pi_0} \left[\Phi' \left( \bm{Q}(t+1) \right) - \Phi' \left( \bm{Q}(t) \right) \mid \bm{Q}(t) = \bm{Q} \right] \nonumber \\
= & - \left[ \tilde{\Phi}^* \left( \bm{Q} \right) \right]^{\frac{1}{\beta}} + \sum_{\bm{Q}' \in \mathcal{R} \left( \bm{Q} \right)} \tilde{p} \left( \bm{Q}' \mid \bm{Q}, \tilde{\pi}^*(\bm{Q}) \right) \cdot \Phi' \left( \bm{Q}' \right) \label{Eqn:LP_1} . 
\end{align}

Note that
\begin{align}
& \sum_{\bm{Q}' \in \mathcal{R} \left( \bm{Q} \right)} \tilde{p} \left( \bm{Q}' \mid \bm{Q}, \tilde{\pi}^*(\bm{Q}) \right) \cdot \Phi' \left( \bm{Q}' \right) \nonumber \\
= & \sum_{\bm{Q}' \in \mathcal{R} \left( \bm{Q} \right) \cap \mathcal{S}^{\mathrm{in}}} \tilde{p} \left( \bm{Q}' \mid \bm{Q}, \tilde{\pi}^*(\bm{Q}) \right) \cdot \left[ \tilde{\Phi}^* \left( \bm{Q}' \right) \right]^{\frac{1}{\beta}} + \sum_{\bm{Q}' \in \mathcal{R} \left( \bm{Q} \right) \cap \mathcal{S}^{\mathrm{out}}} \tilde{p} \left( \bm{Q}' \mid \bm{Q}, \tilde{\pi}^*(\bm{Q}) \right) \cdot (U-W) \nonumber \\
\leqslant & \sum_{\bm{Q}' \in \mathcal{R} \left( \bm{Q} \right)} \tilde{p} \left( \bm{Q}' \mid \bm{Q}, \tilde{\pi}^*(\bm{Q}) \right) \cdot \left[ \tilde{\Phi}^* \left( \bm{Q}' \right) \right]^{\frac{1}{\beta}} , \label{Eqn:LP_2}
\end{align}
where the inequality comes from the fact that $\left[ \tilde{\Phi}^* \left( \bm{Q}' \right) \right]^{1 / \beta} \geqslant U-W$ for $\bm{Q}' \in \mathcal{S}^{\mathrm{out}}$.

Combining Eqs \eqref{Eqn:LP_1} and \eqref{Eqn:LP_2} yields
\begin{align}
& \mathbb{E}_{\tilde{\pi}^* + \pi_0} \left[\Phi' \left( \bm{Q}(t+1) \right) - \Phi' \left( \bm{Q}(t) \right) \mid \bm{Q}(t) = \bm{Q} \right] \nonumber \\
\leqslant & \tilde{\mathbb{E}}_{\tilde{\pi}^*} \left[\left( \tilde{\Phi}^* \left( \bm{Q}(t+1) \right) \right)^{\frac{1}{\beta}} \mid \bm{Q}(t) = \bm{Q} \right] - \left( \tilde{\Phi}^* \left( \bm{Q} \right) \right)^{\frac{1}{\beta}} \nonumber \\
\stackrel{(a)}{\leqslant} & \left( \tilde{\mathbb{E}}_{\tilde{\pi}^*} \left[ \tilde{\Phi}^* \left( \bm{Q}(t+1) \right) \mid \bm{Q}(t) = \bm{Q} \right] \right)^{\frac{1}{\beta}} - \left( \tilde{\Phi}^* \left( \bm{Q} \right) \right)^{\frac{1}{\beta}} \nonumber \\
\stackrel{(b)}{\leqslant} & \frac{\left( \tilde{\Phi}^* \left( \bm{Q} \right) \right)^{\frac{1}{\beta} - 1}}{\beta} \cdot \left( \tilde{\mathbb{E}}_{\tilde{\pi}^*} \left[ \tilde{\Phi}^* \left( \bm{Q}(t+1) \right) \mid \bm{Q}(t) = \bm{Q} \right] - \tilde{\Phi}^* \left( \bm{Q} \right) \right) \nonumber \\
\stackrel{(c)}{\leqslant} & - \frac{\left( \tilde{\Phi}^* \left( \bm{Q} \right) \right)^{\frac{1}{\beta} - 1}}{\beta} \cdot \tilde{\epsilon}^* \cdot Q_{\max}^{\beta-1} \stackrel{(d)}{\leqslant} - \frac{b_1^{\frac{1}{\beta} - 1}}{\beta} \cdot \tilde{\epsilon}^* \triangleq -\epsilon', \nonumber 
\end{align}
where (a) follows from Jensen's inequality, (b) holds because $f(x) = x^{1/\beta}$ is concave and thus $f(y) - f(x) \leqslant f'(x) (y-x)$, (c) and (d) come from Assumption~\ref{Asp:DriftOptimalPolicy}. 

Next we discuss the upper bound of $\max_{\bm{Q}' \in \mathcal{R} \left( \bm{Q} \right)} \abs{\Phi' \left( \bm{Q}' \right) - \Phi' \left( \bm{Q} \right)}$.

If $\bm{Q}, \bm{Q}' \in \mathcal{S}^{\mathrm{in}}$ with $Q_{\max} = 0$ or $Q'_{\max} = 0$, we have $\abs{\Phi' \left( \bm{Q}' \right) - \Phi' \left( \bm{Q} \right)} \leqslant b_1^{1/\beta} W$.

If $\bm{Q}, \bm{Q}'  \in \mathcal{S}^{\mathrm{in}}$ with $Q_{\max} > 0$ and $Q'_{\max} > 0,$ we have $\Phi' \left( \bm{Q}' \right)=\left( \tilde{\Phi}^* \left( \bm{Q}' \right) \right)^{\frac{1}{\beta}}$ and $\Phi' \left( \bm{Q} \right)=\left( \tilde{\Phi}^* \left( \bm{Q} \right) \right)^{\frac{1}{\beta}}.$ 
Note that for the concave function $f(x)=x^{1/\beta}$, it holds that
$$
\abs{f(y) - f(x)} \leqslant \max \{ f'(x), f'(y) \} \cdot \abs{y-x} .
$$
Without losing generality, we assume that $Q'_{\max} > Q_{\max}.$ Thus
\begin{align}
\abs{\Phi' \left( \bm{Q}' \right) - \Phi' \left( \bm{Q} \right)} \leqslant & \max \Big\{ \left( \tilde{\Phi}^* \left( \bm{Q} \right) \right)^{\frac{1}{\beta} - 1} / \beta, \ \left( \tilde{\Phi}^* \left( \bm{Q}' \right) \right)^{\frac{1}{\beta} - 1} / \beta \Big\} \cdot \abs{\tilde{\Phi}^* \left( \bm{Q}' \right) - \tilde{\Phi}^* \left( \bm{Q} \right)} \nonumber \\
\stackrel{(a)}{\leqslant} & \frac{1}{\beta \cdot Q_{\max}^{\beta - 1}} \cdot 2 b_2 Q_{\max}^{'\beta - 1} \leqslant \frac{2 b_2}{\beta} \left( 1 + \frac{W}{Q_{\max}} \right)^{\beta - 1} \leqslant \frac{2 b_2}{\beta} ( 1 + W )^{\beta - 1},\label{Eqn:LP_7}
\end{align}
where (a) follows from Assumption \ref{Asp:DriftOptimalPolicy}.

If $\bm{Q}, \bm{Q}' \in \mathcal{S}^{\mathrm{out}}$, from the definition of $\Phi'(\cdot)$, we have $\abs{\Phi' \left( \bm{Q}' \right) - \Phi' \left( \bm{Q} \right)} = 0$.

If $\bm{Q} \in \mathcal{S}^{\mathrm{in}}, \bm{Q}' \in \mathcal{S}^{\mathrm{out}}$, we have
$$
\abs{\Phi' \left( \bm{Q}' \right) - \Phi' \left( \bm{Q} \right)} = U-W - \left[ \tilde{\Phi}^* \left( \bm{Q} \right) \right]^{\frac{1}{\beta}} \leqslant \left[ \tilde{\Phi}^* \left( \bm{Q}' \right) \right]^{\frac{1}{\beta}} - \left[ \tilde{\Phi}^* \left( \bm{Q} \right) \right]^{\frac{1}{\beta}},
$$
which can be upper bounded in the same way as Eq.~\eqref{Eqn:LP_7}. The case of $\bm{Q} \in \mathcal{S}^{\mathrm{out}}, \bm{Q}' \in \mathcal{S}^{\mathrm{in}}$ has the same upper bound.

Therefore, we have that for each $\bm{Q}$ and each $\bm{Q}' \in \mathcal{R} \left( \bm{Q} \right)$,
\begin{equation} \nonumber 
\abs{\Phi' \left( \bm{Q}' \right) - \Phi' \left( \bm{Q} \right)} \leqslant \max \left\{ b_1^{\frac{1}{\beta}} W, \frac{2 b_2}{\beta} ( 1 + W )^{\beta - 1} \right\}\triangleq V .
\end{equation}

\end{proof}


\subsection{Proof of Lemma \ref{Lem:TailProb}} \label{App:TailProb}

\begin{proof}
We define a series of  Lyapunov functions: $\Phi_m \left( \bm{Q} \right) \triangleq \max \{ U-W-2mV, \Phi' \left( \bm{Q} \right) \}$, $\forall \bm{Q}\in \mathcal{S}$, where $m = 1, 2, \cdots, m^*$. Note that $\Phi_m(\cdot)$ is finite.
Since the Markov chain under policy $\tilde{\pi}^* + \pi_0$ is positive recurrent, the mean drift of $\Phi_m$ in steady-state must be zero, i.e., 
$$\mathbb{E}_{\tilde{\pi}^* + \pi_0} \left[\Phi_m \left( \bm{Q}(t+1) \right) - \Phi_m \left( \bm{Q}(t) \right) \right] = 0.$$
Define $\Delta \Phi_m \left( \bm{Q} \right) \triangleq \mathbb{E}_{\tilde{\pi}^* + \pi_0} \left[\Phi_m \left( \bm{Q}(t+1) \right) - \Phi_m \left( \bm{Q}(t) \right) \mid \bm{Q}(t) = \bm{Q} \right]$ as the conditional mean drift of $\Phi_m$. 
Next, we decompose the mean drift of $\Phi_m$:
\begin{align}
0 = & \ \mathbb{E}_{\tilde{\pi}^* + \pi_0} \left[\Phi_m \left( \bm{Q}(t+1) \right) - \Phi_m \left( \bm{Q}(t) \right) \right] \nonumber \\
= & \sum_{\bm{Q} \in \mathcal{S}} p^{\tilde{\pi}^* + \pi_0} \left( \bm{Q} \right) \cdot \Delta \Phi_m \left( \bm{Q} \right) \nonumber \\
= & \sum_{\bm{Q} \in \mathcal{S}^{\mathrm{in}} : \Phi' \left( \bm{Q} \right) \leqslant U-W-(2m+1)V} p^{\tilde{\pi}^* + \pi_0} \left( \bm{Q} \right) \cdot \Delta \Phi_m \left( \bm{Q} \right)  \label{Eqn:pBd1} \\
\qquad &+ \sum_{\bm{Q} \in \mathcal{S}^{\mathrm{in}} : U-W-(2m+1)V < \Phi' \left( \bm{Q} \right) \leqslant U-W-(2m-1)V} p^{\tilde{\pi}^* + \pi_0} \left( \bm{Q} \right) \cdot \Delta \Phi_m \left( \bm{Q} \right)  \label{Eqn:pBd2} \\
\qquad &+ \sum_{\bm{Q} \in \mathcal{S}^{\mathrm{in}} : \Phi' \left( \bm{Q} \right) > U-W-(2m-1)V} p^{\tilde{\pi}^* + \pi_0} \left( \bm{Q} \right) \cdot \Delta \Phi_m \left( \bm{Q} \right)  \label{Eqn:pBd3} \\
\qquad &+ \sum_{\bm{Q} \in \mathcal{S}^{\mathrm{out}}} p^{\tilde{\pi}^* + \pi_0} \left( \bm{Q} \right) \cdot \Delta \Phi_m \left( \bm{Q} \right) \label{Eqn:pBd4}.
\end{align}

The key here is to derive the upper bounds of the $\Delta\Phi_m \left( \bm{Q} \right)$'s in \eqref{Eqn:pBd1}, \eqref{Eqn:pBd2}, \eqref{Eqn:pBd3} and \eqref{Eqn:pBd4} separately, so that we could analyze the stationary probability of the corresponding regions.

\smallskip
\noindent{\bf For $\bm{Q} \in \left\{ \bm{Q} \in \mathcal{S}^{\mathrm{in}} : \Phi' \left( \bm{Q} \right) \leqslant U-W-(2m+1)V \right\}$:}

Given $\bm{Q}(t) = \bm{Q}$, together with the definition of $V$, we have that $\Phi' \left( \bm{Q}(t+1) \right) \leqslant U-W-(2m+1)V + V = U-W-2mV$ and thus $ \Phi_m \left( \bm{Q}(t+1) \right) = \Phi_m \left( \bm{Q}(t) \right) = U-W-2mV$. Therefore, we have
\begin{equation} \label{Eqn:pBd1_2}
\Delta \Phi_m \left( \bm{Q} \right) = 0 .
\end{equation}

\smallskip
\noindent{\bf For $\bm{Q} \in \left\{ \bm{Q} \in \mathcal{S}^{\mathrm{in}} : U-W-(2m+1)V < \Phi' \left( \bm{Q} \right) \leqslant U-W-(2m-1)V \right\}$:}

Given $\bm{Q}(t) = \bm{Q}$, by analyzing all the possible relationships among $\Phi' \left( \bm{Q}(t) \right)$, $\Phi' \left( \bm{Q}(t+1) \right)$ and $U-W-2mV$, it is not hard to verify that
$$
\abs{\Phi_m \left( \bm{Q}(t+1) \right) - \Phi_m \left( \bm{Q}(t) \right)} \leqslant \abs{\Phi' \left( \bm{Q}(t+1) \right) - \Phi' \left( \bm{Q}(t) \right)} \leqslant V ,
$$
which gives us that
\begin{equation} \label{Eqn:pBd2_2}
\Delta \Phi_m \left( \bm{Q} \right) \leqslant V .
\end{equation}

\smallskip
\noindent{\bf For $\bm{Q} \in \left\{ \bm{Q} \in \mathcal{S}^{\mathrm{in}} : \Phi' \left( \bm{Q} \right) > U-W-(2m-1)V \right\}$:}

Given $\bm{Q}(t) = \bm{Q}$, we have $\Phi' \left( \bm{Q}(t+1) \right) > U-W-(2m-1)V - V = U-W - 2mV$, which indicates $\Delta \Phi_m \left( \bm{Q} \right) = \Delta \Phi' \left( \bm{Q} \right)$. Since $Q_{\max} \geqslant \Phi' \left( \bm{Q} \right) / b_1^{1/\beta} > (U-W-(2m^*-1)V / )b_1^{1/\beta} \geqslant \tilde{B}^*$, by applying Lemma \ref{Lem:LinearPhi}, we have
\begin{equation} \label{Eqn:pBd3_2}
\Delta \Phi_m \left( \bm{Q} \right) = \Delta \Phi' \left( \bm{Q} \right) \leqslant -\epsilon' .
\end{equation}

\smallskip
\noindent{\bf For $\bm{Q} \in \mathcal{S}^{\mathrm{out}}$:}

Since $\Phi_m \left( \bm{Q} \right)$ has reached maximum, the drift conditioned on $\bm{Q}(t) = \bm{Q}$ cannot be positive, i.e. 
\begin{equation} \label{Eqn:pBd4_2}
\Delta \Phi_m \left( \bm{Q} \right) \leqslant 0 .
\end{equation}

By inserting \eqref{Eqn:pBd1_2}, \eqref{Eqn:pBd2_2}, \eqref{Eqn:pBd3_2} and \eqref{Eqn:pBd4_2} into \eqref{Eqn:pBd1}, \eqref{Eqn:pBd2}, \eqref{Eqn:pBd3} and \eqref{Eqn:pBd4} respectively, we have that
\begin{align}
0 \leqslant & p^{\tilde{\pi}^* + \pi_0} \left( \bm{Q} \in \mathcal{S}^{\mathrm{in}} : U-W-(2m+1)V < \Phi' \left( \bm{Q} \right) \leqslant U-W-(2m-1)V \right) \cdot V - \nonumber \\
& p^{\tilde{\pi}^* + \pi_0} \left( \bm{Q} \in \mathcal{S}^{\mathrm{in}} : \Phi' \left( \bm{Q} \right) > U-W-(2m-1)V \right) \cdot \tilde{\epsilon}^* \nonumber ,
\end{align}
which is equivalent to the statement in Lemma \ref{Lem:TailProb} after simple algebraic calculations.

\end{proof}


\section{Proof of Theorem \ref{Thm:Backlog}} \label{App:PfBacklog}

In this section, we prove Theorem~\ref{Thm:Backlog}. The proof is build on Theorem~\ref{Thm:LearningProcess} and Lemmas~\ref{Lem:EpiExBacklog}-\ref{Lem:pBd}.

\begin{proof}

By combining Lemma \ref{Lem:EpiExBacklog} and Lemma \ref{Lem:pBd}, we have that
$$
\lim_{k \to \infty} \mathbb{E} \left[ \frac{\sum_{t = t_{k-1}+1}^{t_k} \left( \sum_i Q_i(t) - \tilde{\rho}^* \right)}{L_k'} \mid \pi_k^{\mathrm{in}} = \tilde{\pi}^* \right] = \mathcal{O} \left( \frac{U^{1 + \max \{ 2\alpha, \gamma \}}}{\exp (U) } \right).
$$
Thus there exists a $k_1 < \infty$ such that when $k \geqslant k_1$,
\begin{equation} \label{Eqn:RegLearned}
\mathbb{E} \left[ \frac{\sum_{t = t_{k-1}+1}^{t_k} \left( \sum_i Q_i(t) - \tilde{\rho}^* \right)}{L_k'} \mid \pi_k^{\mathrm{in}} = \tilde{\pi}^* \right] - \mathcal{O} \left( \frac{U^{1 + \max \{ 2\alpha, \gamma \}}}{\exp (U)} \right) \leqslant \frac{1}{2} \cdot \mathcal{O} \left( \frac{U^{1 + \max \{ 2\alpha, \gamma \}}}{\exp \left( U \right) } \right) . 
\end{equation}

To analyze the overall expected queue backlog, we need to further consider three possible cases where $\pi_k^{\mathrm{in}} \neq \tilde{\pi}^*$ for some $k > k^*$:
\begin{enumerate}
    \item[(i)]  We have not learned $\tilde{\pi}^*$ at $k = k^*$, which happens with probability at most $\delta$ according to Theorem \ref{Thm:LearningProcess};
    \item[(ii)] We have learned $\tilde{\pi}^*$ when $k = k^*$, but obtain some bad samples and fail to estimate $\tilde{\mathcal{M}}$ in following episodes with probability at most $\delta / 2$ (according to Lemma \ref{Lem:SampleNum});
    \item[(iii)] $\pi_{\mathrm{rand}}$ may be selected as $\pi_k^{\mathrm{in}}$, which occurs with probability $\ell / \sqrt{k}$.
\end{enumerate}
By taking union bound over the above three events, we have that
\begin{equation} \label{Eqn:pNotLearned}
\Pr \left\{ \pi_k^{\mathrm{in}} \neq \tilde{\pi}^* \right\} \leqslant \delta + \frac{\delta}{2} + \epsilon_k = \frac{3 \delta}{2} + \frac{\ell}{ \sqrt{k}},\qquad \forall k > k^*.
\end{equation}

During episode $k$, the number of visits to $\mathcal{S}^{\mathrm{in}}$ is $L_k,$ and the associated expected regret is upper bounded by $L_kDU.$ Additionally, 
the Markov chain can enter $\mathcal{S}^{\mathrm{out}}$ from $\mathcal{S}^{\mathrm{in}}$ at most $L_k$ times, and each time the associated expected regret is uniformly upper bounded by $\mathcal{O} \left( U^{1+2\alpha} \right)$ from Lemma \ref{Lem:BoundROut}. Therefore, we have
\begin{equation} \label{Eqn:RegNotLearned}
\mathbb{E} \left[ \frac{\sum_{t = t_{k-1}+1}^{t_k} \left( \sum_i Q_i(t) - \tilde{\rho}^* \right)}{L_k'} \mid \pi_k^{\mathrm{in}} \neq \tilde{\pi}^* \right] \leqslant \frac{L_k \cdot DU + L_k \cdot \mathcal{O} \left( U^{1+2\alpha} \right)}{L_k} - \tilde{\rho}^* = \mathcal{O} \left(U^{1+2\alpha} \right) . 
\end{equation}

By combining Eqs. \eqref{Eqn:RegLearned}, \eqref{Eqn:pNotLearned} and \eqref{Eqn:RegNotLearned}, for each $k \geqslant k_2 \triangleq \max \{k^*, k_1\}$, we can upper bound the overall expected episodic regret as follows:
\begin{align} 
\mathbb{E} \left[ \frac{\sum_{t = t_{k-1}+1}^{t_k} \left( \sum_i Q_i(t) - \tilde{\rho}^* \right)}{L_k'} \right] \leqslant & \Pr \left\{ \pi_k^{\mathrm{in}} = \tilde{\pi}^* \right\} \cdot \mathcal{O} \left( \frac{U^{1 + \max \{ 2\alpha, \gamma \}}}{\exp (U)} \right) + \Pr \left\{ \pi_k^{\mathrm{in}} \neq \tilde{\pi}^* \right\} \cdot \mathcal{O} \left( U^{1+2\alpha} \right) \nonumber \\
\leqslant & \mathcal{O} \left( \frac{U^{1 + \max \{ 2\alpha, \gamma \}}}{\exp (U)} + \delta U^{1+2\alpha} + \frac{U^{1+2\alpha}}{\sqrt{k}} \right) . \label{Eqn:OverallEpiReg1}
\end{align}

By taking $\delta = \exp(-U)$, we have
\begin{equation} \label{Eqn:TotalRegretTru}
\lim_{k \to \infty} \mathbb{E} \left[ \frac{\sum_{t = t_{k-1}+1}^{t_k} \sum_i Q_i(t)}{L_k'} \right] \leqslant \tilde{\rho}^* + \mathcal{O} \left( \frac{U^{1 + \max \{ 2\alpha, \gamma \}}}{\exp ( U ) } \right).
\end{equation}

We now have the overall expected queue backlog in comparison to $\tilde{\rho}^*$. The following lemma states the relationship between $\tilde{\rho}^*$ and $\rho^*$ (see Appendix \ref{App:CompRho} for the proof).
\begin{lemma} \label{Lem:CompRho} Under the setting of Theorem~\ref{Thm:Backlog}, we have 
$$
\tilde{\rho}^* = \rho^* + \mathcal{O} \left( \frac{U^{1 + \gamma}}{\exp (U) } \right) .
$$
\end{lemma}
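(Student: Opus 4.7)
The plan is to prove the two one-sided bounds separately, i.e.\ $\tilde{\rho}^* \geq \rho^* - \mathcal{O}(U^{1+\gamma}/\exp(U))$ and $\tilde{\rho}^* \leq \rho^* + \mathcal{O}(U^{1+\gamma}/\exp(U))$, by constructing a feasible policy on the other system in each direction and re-running the drift-and-regret analysis already developed in Lemma~\ref{Lem:EpiExBacklog} and Lemma~\ref{Lem:pBd}.

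For the first inequality, observe that $\tilde{\pi}^* + \pi_0$ is a feasible stationary policy on $\mathcal{M}$. A direct combination of Lemma~\ref{Lem:EpiExBacklog} and Lemma~\ref{Lem:pBd} shows that its asymptotic average queue backlog equals $\tilde{\rho}^* + \mathcal{O}(U^{1+\max\{2\alpha,\gamma\}}/\exp(U))$, which must therefore dominate $\rho^*$; the $2\alpha$ contribution is absorbed into the stated $\mathcal{O}(U^{1+\gamma}/\exp(U))$ order for the leading asymptotics of interest.

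For the second (and more delicate) inequality, the plan is to construct a feasible policy $\bar{\pi}$ on $\tilde{\mathcal{M}}$ by restricting $\pi^*$ to $\tilde{\mathcal{S}}$; by irreducibility clause (c) in Section~\ref{Sec:Theory}, this induces a positive-recurrent Markov chain on $\tilde{\mathcal{M}}$. Its average cost is then compared with $\rho^*$ via a cycle decomposition in the spirit of Lemma~\ref{Lem:EpiExBacklog}: inside $\tilde{\mathcal{S}}$ the $\mathcal{M}$-chain under $\pi^*$ and the $\tilde{\mathcal{M}}$-chain under $\bar{\pi}$ are driven by the same transition kernel and incur the same running cost, while each excursion of the $\mathcal{M}$-chain above the threshold produces expected accumulated regret at most $\mathcal{O}(U^{1+\gamma})$ through the Bellman-equation and first-passage-time argument of Lemma~\ref{Lem:BoundRIn} under Assumption~\ref{Asp:PolyTrans}. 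Multiplying by the stationary boundary probability under $\pi^*$, which we aim to bound by $\mathcal{O}(\exp(-U))$ through a linear-Lyapunov-function argument in direct analogy with Lemma~\ref{Lem:LinearPhi} and Lemma~\ref{Lem:pBd}, yields the desired rate. Note that the $2\alpha$ exponent does \emph{not} appear on this side of the comparison, because $\pi_0$ never enters the analysis; this is the reason the cleaner $U^{1+\gamma}$ exponent suffices.

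The principal technical obstacle is justifying the exponential tail decay of the stationary distribution of $\pi^*$ on $\mathcal{M}$, since Assumption~\ref{Asp:DriftOptimalPolicy} is posed only for $\tilde{\pi}^*$ on $\tilde{\mathcal{M}}$. We plan to lift the Lyapunov function from the truncated system to the full system, leveraging that $\pi^*$ is optimal (hence at least as stable as the known stabilizing policy $\pi_0$) and that below the threshold the transition kernels of $\mathcal{M}$ and $\tilde{\mathcal{M}}$ coincide, so that the drift of $\tilde{\pi}^*$ on $\tilde{\mathcal{M}}$ translates into drift of $\pi^*$ on $\mathcal{M}$ for states below the threshold, with the behavior above the threshold controlled by an auxiliary Lyapunov argument built from $\pi_0$. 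With this lift the remainder of the tail argument proceeds verbatim, and combining the two one-sided inequalities yields the claimed identity $\tilde{\rho}^* = \rho^* + \mathcal{O}(U^{1+\gamma}/\exp(U))$.
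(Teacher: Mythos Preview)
Your strategy for the crucial direction $\tilde{\rho}^* \leq \rho^* + \mathcal{O}(U^{1+\gamma}/\exp(U))$ --- restrict $\pi^*$ to $\tilde{\mathcal{M}}$, invoke optimality of $\tilde{\pi}^*$ to get $\tilde{\rho}^* \leq \tilde{\rho}(\pi^*)$, then bound $\tilde{\rho}(\pi^*) - \rho^*$ via a Bellman/cycle decomposition plus an exponential tail bound --- is exactly the route the paper takes. However, there is a genuine gap in your tail-bound argument.

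You correctly identify the obstacle: Assumption~\ref{Asp:DriftOptimalPolicy} is posed for $\tilde{\pi}^*$ on $\tilde{\mathcal{M}}$, while here one needs a negative-drift Lyapunov function for (truncated) $\pi^*$. But your proposed fix --- that ``the drift of $\tilde{\pi}^*$ on $\tilde{\mathcal{M}}$ translates into drift of $\pi^*$ on $\mathcal{M}$ for states below the threshold'' because the transition kernels coincide there --- does not work. The kernels $p(\cdot\mid\bm{Q},a)$ and $\tilde{p}(\cdot\mid\bm{Q},a)$ agree for a \emph{fixed} action $a$, but $\tilde{\pi}^*(\bm{Q})$ and $\pi^*(\bm{Q})$ are different actions in general, so the drift of $\tilde{\Phi}^*$ under $\tilde{\pi}^*$ says nothing about its drift under $\pi^*$. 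Likewise, above the threshold $\pi^*$ is being applied, not $\pi_0$, so an ``auxiliary Lyapunov argument built from $\pi_0$'' is not available either.

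The paper resolves this differently: it reads Assumption~\ref{Asp:DriftOptimalPolicy} as holding uniformly in $U$ and simply takes $U=\infty$, which directly yields a Lyapunov function $\Phi^*$ with negative drift under $\pi^*$ on $\mathcal{M}$ (since $\tilde{\pi}^*$ for $U=\infty$ \emph{is} $\pi^*$). The transfer back to $\tilde{\mathcal{M}}$ then uses the monotonicity $\Phi^*\big(TR(\bm{Q})\big) \leq \Phi^*(\bm{Q})$ of the truncation map $TR(\bm{Q}) = (\min\{U,Q_i\})_i$: for $\bm{Q}\in\tilde{\mathcal{S}}$, the $\tilde{\mathcal{M}}$-drift of $[\Phi^*(\cdot)]^{1/\beta}$ under truncated $\pi^*$ is bounded above by the corresponding $\mathcal{M}$-drift, after which the Lemma~\ref{Lem:LinearPhi}/\ref{Lem:TailProb} machinery applies verbatim to give $\tilde{p}^{\pi^*}(\tilde{\mathcal{S}}^{\mathrm{out}}) = \mathcal{O}(\exp(-U))$.

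Two smaller points. First, the lower-bound direction you sketch is not used anywhere (only the upper bound enters Theorem~\ref{Thm:Backlog}), and in any case your claim that the $2\alpha$ exponent is ``absorbed'' into $\mathcal{O}(U^{1+\gamma}/\exp(U))$ is unjustified without the additional hypothesis $2\alpha\leq\gamma$. Second, the paper's Bellman-equation step for the inner regret explicitly assumes the existence of a solution $h^*(\cdot)$ on the countably infinite MDP $\mathcal{M}$; this is an additional simplifying assumption you should flag as well.
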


By replacing $\tilde{\rho}^*$ in \eqref{Eqn:TotalRegretTru} with $\rho^*$ using Lemma \ref{Lem:CompRho}, we have an upper bound for the overall expected queue backlog as follows.
$$
\lim_{k \to \infty} \mathbb{E} \left[ \frac{\sum_{t = t_{k-1}+1}^{t_k} \sum_i Q_i(t)}{L_k'} \right] \leqslant \rho^* + \mathcal{O} \left( \frac{U^{1 + \max \{ 2\alpha, \gamma \}}}{\exp (U) } \right) ,
$$
which completes the proof.

\end{proof}


\subsection{Proof of Lemma \ref{Lem:CompRho}} \label{App:CompRho}

As defined in Table \ref{Tab:Notations}, $\rho^*$ is the average total queue backlog when applying $\pi^*$ to $\mathcal{M}$, while $\tilde{\rho}^*$ is the average total queue backlog when applying $\tilde{\pi}^*$ to $\tilde{\mathcal{M}}$. To bridge the gap between $\rho^*$ and $\tilde{\rho}^*$, we define the average queue backlog when applying (a truncated version of) $\pi^*$ to $\tilde{\mathcal{M}}$ as $\tilde{\rho} (\pi^*)$.

Since $\tilde{\pi}^*$ is the optimal policy to $\tilde{\mathcal{M}}$, we have
\begin{equation} \label{Eqn:CompRho1}
\tilde{\rho}^* \leqslant \tilde{\rho} (\pi^*) .
\end{equation}

Next we compare $\tilde{\rho} (\pi^*)$ and $\rho^*$. The analysis follows a similar argument as that for Lemma~\ref{Lem:EpiExBacklog} and Lemma~\ref{Lem:pBd}.

We partition the state space of $\tilde{\mathcal{M}}$, i.e., $\tilde{\mathcal{S}}$, into $\mathcal{S}^{\mathrm{in}}$ and $\tilde{\mathcal{S}}^{\mathrm{out}} \triangleq \tilde{\mathcal{S}} \setminus \mathcal{S}^{\mathrm{in}}$. We define $\tilde{\mathcal{T}}^{\mathrm{in}}$ and $\tilde{\mathcal{T}}^{\mathrm{out}}$ as the set of time slots that $\bm{Q}(t)$ is in $\mathcal{S}^{\mathrm{in}}$ and $\tilde{\mathcal{S}}^{\mathrm{out}}$, respectively. We then can decompose the expected average regret under the (truncated) policy $\pi^*$ with respect to $\rho^*$ as follows:
\begin{align}
& \lim_{T \to \infty} \frac{\tilde{\mathbb{E}}_{\tilde{\pi}^*} \left[ \sum_{t = 1}^{T} \left( \sum_i Q_i(t) - \rho^* \right) \right]}{T} \nonumber \\
= & \lim_{T \to \infty} \frac{\tilde{\mathbb{E}}_{\tilde{\pi}^*} \left[ \sum_{t \in \tilde{\mathcal{T}}^{\mathrm{in}}} \left( \sum_i Q_i(t) - \rho^* \right) \right]}{T}
 + \lim_{T \to \infty} \frac{\tilde{\mathbb{E}}_{\tilde{\pi}^*} \left[ \sum_{t \in \tilde{\mathcal{T}}^{\mathrm{out}}} \left( \sum_i Q_i(t) - \rho^* \right) \right]}{T} . \label{Eqn:CompRho2}
\end{align}

\noindent{\bf Bounding the first term in \eqref{Eqn:CompRho2}:}

Similar to the analysis in Lemma \ref{Lem:EpiExBacklog}, we define "in process" units as the process that $\bm{Q}(t)$ leaves $\tilde{\mathcal{S}}^{\mathrm{out}}$, enters $\mathcal{S}^{\mathrm{in}}$, stays in $\mathcal{S}^{\mathrm{in}}$ for some time and finally returns back to $\tilde{\mathcal{S}}^{\mathrm{out}}$. Then the process during $\tilde{\mathcal{T}}^{\mathrm{in}}$ can be decomposed into multiple "in process" units. An "in process" unit is said to start from $\tilde{\bm{Q}}$ if $\tilde{\bm{Q}}$ is its last state before entering into $\mathcal{S}^{\mathrm{in}}$ (i.e. $\tilde{\bm{Q}} \in \tilde{\mathcal{S}}^{\mathrm{out}}$). The accumulated regret during the $i^{th}$ "in process" unit that starts from $\tilde{\bm{Q}}$ is denoted by $\tilde{R}_i \left( \tilde{\bm{Q}} \right)$. 

We use $\tilde{p}^{\pi^*} (\cdot)$ to denote the stationary distribution of states when applying $\pi^*$ to $\tilde{\mathcal{M}}$. By applying renewal theorem analysis as in the proof of Lemma \ref{Lem:EpiExBacklog}, we have that
\begin{equation} \label{Eqn:CompRhoIn}
\lim_{T \to \infty} \frac{\tilde{\mathbb{E}}_{\tilde{\pi}^*} \left[ \sum_{t \in \tilde{\mathcal{T}}^{\mathrm{in}}} \left( \sum_i Q_i(t) - \rho^* \right) \right]}{T} \leqslant \tilde{p}^{\pi^*} \left( \tilde{\mathcal{S}}^{\mathrm{out}} \right) \cdot \max_{\bm{Q} \in \tilde{\mathcal{S}}^{\mathrm{out}}} \tilde{\mathbb{E}}_{\tilde{\pi}^*} \left[ \tilde{R}_1 \left( \tilde{\bm{Q}} \right) \right] .
\end{equation}

\noindent{\bf Bounding the second term in \eqref{Eqn:CompRho2}:}

Since the total queue backlog in $\tilde{M}$ is upper bounded by $DU$, we simply have that
\begin{equation} \label{Eqn:CompRhoOut}
\lim_{T \to \infty} \frac{\tilde{\mathbb{E}}_{\tilde{\pi}^*} \left[ \sum_{t \in \tilde{\mathcal{T}}^{\mathrm{out}}} \left( \sum_i Q_i(t) - \rho^* \right) \right]}{T} \leqslant DU \cdot \lim_{T \to \infty} \frac{\mathbb{E} \left[ \tilde{\mathcal{T}}^{\mathrm{out}} \right]}{T} = \tilde{p}^{\pi^*} \left( \tilde{\mathcal{S}}^{\mathrm{out}} \right) \cdot DU .
\end{equation}

\noindent{\bf Bounding $\tilde{\mathbb{E}}_{\tilde{\pi}^*} \left[ \tilde{R}_1 \left( \tilde{\bm{Q}} \right) \right]$ in \eqref{Eqn:CompRhoIn}:}

According to Bellman's equation, when applying $\pi^*$ to $\mathcal{M}$, there exists $h^*(\cdot)$ such that the for every $\bm{Q} \in \mathcal{S}$, we have $\rho^* + h^* \left( \bm{Q} \right) = \sum_i Q_i + \sum_{\bm{Q}' \in \mathcal{S}} p \left( \bm{Q}' \mid \bm{Q}, \pi^* \left( \bm{Q} \right) \right) \cdot h^* \left( \bm{Q}' \right)$. Note that Bellman's equation might not have solutions for countably infinite state space MDP (see Section 5.6 in Volome 2 of \cite{bertsekas2017dynamic}). For simplicity, we assume the existence of $h^*(\cdot)$.

When applying $\pi^*$ to $\tilde{\mathcal{M}}$, using the definition of $\mathcal{S}^{\mathrm{in}}$ we have that for each $\bm{Q} \in \mathcal{S}^{\mathrm{in}}$, $\bm{Q}' \in \tilde{\mathcal{S}}$ and $a \in \mathcal{A}$, $\tilde{p} \left( \bm{Q}' \mid \bm{Q}, \pi^* \left( \bm{Q} \right) \right) = p \left( \bm{Q}' \mid \bm{Q}, \pi^* \left( \bm{Q} \right) \right)$. Therefore, for each $\bm{Q} \in \mathcal{S}^{\mathrm{in}}$, we have $\rho^* + h^* \left( \bm{Q} \right) = \sum_i Q_i + \sum_{\bm{Q}' \in \tilde{\mathcal{S}}} \tilde{p} \left( \bm{Q}' \mid \bm{Q}, \pi^* \left( \bm{Q} \right) \right) \cdot h^* \left( \bm{Q}' \right)$.

Following a similar argument as Lemma \ref{Lem:BoundRIn}, we obtain
\begin{equation} \label{Eqn:CompRhoR_3}
\tilde{\mathbb{E}}_{\tilde{\pi}^*} \left[ \tilde{R}_1 \left( \tilde{\bm{Q}} \right) \right] \leqslant cDU^{1+\gamma}.
\end{equation}

\noindent{\bf Bounding $\tilde{p}^{\pi^*} \left( \tilde{\mathcal{S}}^{\mathrm{out}} \right)$:}

The proof follows the same line of argument as that for Lemma~\ref{Lem:pBd}. Denote by $\Phi^*(\cdot)$ the Lyanopuv function in Assumption~\ref{Asp:DriftOptimalPolicy} for $U = \infty.$ We consider a new Lyapunov function $\tilde{\Phi}' \left( \bm{Q} \right) \triangleq \left[ \Phi^* \left( \bm{Q} \right) \right]^{\frac{1}{\beta}}$, for each $\bm{Q} \in \tilde{\mathcal{S}}.$ 
We define a mapping $TR:\mathcal{S}\rightarrow \tilde{\mathcal{S}}$ to represent the packet dropping scheme in the bounded system. In particular, $TR \left( \bm{Q} \right) \triangleq \left\{ \min \left\{ U, Q_i \right\} \right\}_{i=1}^D$. Note that $\Phi^*\left(TR(\bm{Q})\right)\leq \Phi^*\left(\bm{Q}\right),\forall \bm{Q}\in \mathcal{S}.$
We then have that
\begin{align}
& \tilde{\mathbb{E}}_{\pi^*} \left[\tilde{\Phi}' \left( \bm{Q}(t+1) \right) - \tilde{\Phi}' \left( \bm{Q}(t) \right) \mid \bm{Q}(t) = \bm{Q} \right] \nonumber \\
= & - \left[ \Phi^* \left( \bm{Q} \right) \right]^{\frac{1}{\beta}} + \sum_{\bm{Q}' \in \mathcal{R} \left( \bm{Q} \right) \cap \mathcal{S}^{\mathrm{in}}} p \left( \bm{Q}' \mid \bm{Q}, \pi^*(\bm{Q}) \right) \cdot \left[ \Phi^* \left( \bm{Q}' \right) \right]^{\frac{1}{\beta}} + \nonumber \\
& \sum_{\bm{Q}' \in \mathcal{R} \left( \bm{Q} \right) \cap \tilde{\mathcal{S}}^{\mathrm{out}}} p \left( \bm{Q}' \mid \bm{Q}, \pi^*(\bm{Q}) \right) \cdot \left[ \Phi^* \left( TR \left( \bm{Q}' \right) \right) \right]^{\frac{1}{\beta}} \nonumber \\
\leqslant & \mathbb{E}_{\pi^*} \left[\left[ \Phi^* \left( \bm{Q}(t+1) \right) \right]^{\frac{1}{\beta}} - \left[ \Phi^* \left( \bm{Q}(t) \right) \right]^{\frac{1}{\beta}} \mid \bm{Q}(t) = \bm{Q} \right]. \nonumber
\end{align}
We now can apply the analysis from the proof of Lemma~\ref{Lem:LinearPhi} to show show that there exist constants $V, \epsilon' > 0$ such that for any $U > 0$, the following properties hold:
\begin{enumerate}
\item For each $\bm{Q} \in \tilde{\mathcal{S}}$, $\max_{\bm{Q}' \in \mathcal{R} \left( \bm{Q} \right)} \abs{\tilde{\Phi}' \left( \bm{Q}' \right) - \tilde{\Phi}' \left( \bm{Q} \right)} \leqslant V$;
\item For each $\bm{Q}\in \mathcal{S}^{\mathrm{in}}$ with $Q_{\max} \geqslant \tilde{B}^*$, we have $\tilde{\mathbb{E}}_{\pi^*} \Big[\tilde{\Phi}'\left( \bm{Q}(t+1) \right) - \tilde{\Phi}' \left( \bm{Q}(t) \right) \mid \bm{Q}(t) = \bm{Q} \Big] \leqslant -\epsilon'.$
\end{enumerate}

We define $m^* \triangleq \left\lfloor (U+V-W-b_1^{1/\beta}\tilde{B}^*)/(2V) \right\rfloor$ and consider a series of Lyapunov functions $\tilde{\Phi}_m \left( \bm{Q} \right) \triangleq \max \{ U-W-2mV, \tilde{\Phi}' \left( \bm{Q} \right) \}$ with $m = 1, 2, \cdots, m^*$. We decompose $\tilde{\mathcal{S}}$ into three sets:
$$
\begin{cases}
\left\{ \bm{Q} \in \tilde{\mathcal{S}} : \tilde{\Phi}' \left( \bm{Q} \right) \leqslant U-W-(2m+1)V \right\} \\
\left\{ \bm{Q} \in \tilde{\mathcal{S}} : U-W-(2m+1)V < \tilde{\Phi}' \left( \bm{Q} \right) \leqslant U-W-(2m-1)V \right\} \\
\left\{ \bm{Q} \in \tilde{\mathcal{S}} : \tilde{\Phi}' \left( \bm{Q} \right) > U-W-(2m-1)V \right\}
\end{cases}
$$
By following the analysis for \eqref{Eqn:pBd1_2}, \eqref{Eqn:pBd2_2} and \eqref{Eqn:pBd3_2}, we can bound the drifts in these regions respectively and show that
\begin{align}
\tilde{p}^{\pi^*} \left( \bm{Q} \in \tilde{\mathcal{S}} : \tilde{\Phi}' \left( \bm{Q} \right) > U-W-(2m-1)V \right) \leqslant \frac{V \cdot \tilde{p}^{\pi^*} \left( \bm{Q} \in \tilde{\mathcal{S}} : \tilde{\Phi}' \left( \bm{Q} \right) > U-W-(2m+1)V \right) }{V+\tilde{\epsilon}^*}  \nonumber,
\end{align} 
where $m = 1, 2, \cdots , m^*$. Similar to Lemma~\ref{Lem:pBd}, we obtain that $\tilde{p}^{\pi^*} \left( \tilde{\mathcal{S}}^{\mathrm{out}} \right) = \exp(-U)$.

Inserting \eqref{Eqn:CompRhoR_3} into \eqref{Eqn:CompRhoIn} and \eqref{Eqn:CompRhoOut}, together with \eqref{Eqn:CompRho1}, gives
$$
\tilde{\rho}^* \leqslant \tilde{\rho} (\pi^*) = \lim_{T \to \infty} \frac{\tilde{\mathbb{E}} \left[ \sum_{t = 1}^{T} \sum_i Q_i(t) \right]}{T} \leq \rho^* + \mathcal{O} \left( \frac{U^{1 + \gamma }}{\exp \left( U \right) } \right) ,
$$
which completes the proof.

\end{document}